\providecommand{\tabularnewline}{\\}
\theoremstyle{plain}
\newtheorem{thm}{\protect\theoremname}
\theoremstyle{definition}
\newtheorem{defn}[thm]{\protect\definitionname}
\theoremstyle{plain}
\newtheorem{prop}[thm]{\protect\propositionname}
\providecommand{\definitionname}{Definition}
\providecommand{\propositionname}{Proposition}
\providecommand{\theoremname}{Theorem}
\begin{document}
\title{Equilibrium Multiplicity:\\
A Systematic Approach using Homotopies,\\ with an Application to Chicago\thanks{I thank Elie Tamer and Holger Sieg for useful comments on this paper's
approach during early stages of the project. }}
\author{Amine Ouazad\thanks{Associate Professor of Finance and Economics, Rutgers Business School and Associate Professor of Economics, HEC Montreal.}}

\maketitle
Discrete choice models with social interactions or spillovers may exhibit multiple equilibria. This paper provides a systematic approach to enumerating them for a quantitative spatial model with discrete locations, social interactions, and elastic housing supply. The approach relies on two \emph{homotopies}. A homotopy is a smooth function that transforms the solutions of a simpler city where solutions are known, to a city with heterogeneous locations and finite supply elasticity. The first homotopy is that, in the set of cities with perfectly elastic floor surface supply, an economy with heterogeneous locations is homotopic to an economy with homogeneous locations, whose solutions can be comprehensively enumerated. Such an economy is $\varepsilon$ close to an economy whose equilibria are the zeros of a system of polynomials. This is a well-studied area of mathematics where the enumeration of equilibria can be guaranteed. The second homotopy is that a city with perfectly elastic housing supply is homotopic to a city with an arbitrary supply elasticity. In a small number of cases, the path may bifurcate and a single path yields two or more equilibria. By running the method on thousands of cities, we obtain a large number of equilibria. Each equilibrium has different population distributions. We provide a method that is computationally feasible for economies with a large number of locations choices, with an empirical application to the City of Chicago. There exist multiple ``counterfactual Chicagos'' consistent with the estimated parameters. Population distribution, prices, and welfare are not uniquely pinned down by amenities. The paper's method can be applied to models in trade and IO. Further applications of algebraic geometry are suggested.

\clearpage{}

\pagebreak{}

\section{Introduction}

Equilibrium multiplicity is a key issue in several fields of economics: 
in dynamic games \cite{tamer2003incomplete}, 
in industrial organization \cite{ciliberto2009market}, in macroeconomics \cite{angeletos2006crises}, 
in the econometrics of games \cite{tamer2010partial,de2013econometric,otsu2023equilibrium}, 
in general equilibrium theory \cite{ghiglino1997multiplicity}. In urban economics and trade, 
the issue of equilibrium multiplicity is particularly difficult as: (i)~economies feature a large
number of discrete choices, such as the number of counties in the United States or the number of tracts or blocks in
a metropolitan area; (ii)~price effects lead to strategic substitutability in choices while social
interactions typically lead to strategic complementarities. These challenges lead to the importance
of propositions and theorems guaranteeing the existence of a single equilibrium in recent literature \cite{allen2020persistence,kleinman2023linear,kleinman2023dynamic}. 
Important empirical contributions have highlighted the presence of multiple equilibria \cite{card2008tipping}
in neighborhood choice, predicted by the classic contributions of \citeasnoun{schelling1969models} and \citeasnoun{benabou1996equity} 
in the two-neighborhood case. Modeling the structure of cities with multiple locations when multiple equilibria
are allowed remains challenging. 

This paper provides a novel approach to the enumeration of multiple equilibria when agents make discrete choices. Discrete choice models, either static or dynamic, are typically used to predict the distribution of populations in geographic areas where agents choose locations based on their amenities, social demographics, and prices. A central concept of this paper is that of \emph{homotopy}, whereby the equilibrium of simpler system, e.g. a city with social interactions where locations have homogeneous amenities and perfectly elastic housing supply, can be differentially ``transformed'', in a way made precise in the paper, into a general equilibrium system, such as those equilibrium equations for a city with social interactions across locations with heterogeneous amenities and finite supply elasticity. A point in this path is an equilibrium vector of demographics for each location and an equilibrium vector of prices. For each infinitesimal shift in amenities (first homotopy) and supply elasticites (second homotopy), a new vector of equilibrium demographics and prices is found.

The end point of the first homotopy is a city with social interactions across locations with heterogeneous amenities and perfectly elastic housing supply. In this first homotopy, all models exhibit strategic complementarities. The starting point of the first homotopy is a city with social interactions across locations with homogeneous amenities and perfectly elastic housing supply. This homotopy is identical to a classical homotopy used to solve systems of polynomial equations with integer degree. A key insight is that, for any arbitrary choice of an $\varepsilon$, the equilibrium equations of cities with social interactions with heterogeneous locations and perfectly elastic housing supply, dynamic or static, McFadden or Fr\'echet, are at $\varepsilon$ distance of a polynomial system of equations with integer degree. This polynomial system can be differentially transformed to a polynomial system whose equations are those of a city with homogeneous amenities and perfectly elastic housing supply. As the question becomes the enumeration of solutions of polynomial systems, this opens the analysis to the mathematical field of algebraic geometry, which studies the zeros of polynomial systems of equations.  Algebraic geometry has a long history in mathematics \cite{dieudonne1972historical}, dating back to the analysis of intersections of curves such as conical sections, starting in 400 B.C.  It is a natural point in the history of urban economics to build on the achievements of the field in connecting quantitative spatial models to linear algebra \cite{kleinman2023linear} and connect the field to the contributions of  algebraic geometry. 

The end point of the second homotopy is a city with social interactions and finite supply elasticity where prices respond to the demand for locations. In these models, there are both strategic complementarities (social interactions) and strategic substitutabilities (price responses). Strategic substitutability may lead to fewer equilibria and/or to equilibria with more similar population distributions. When the economy exhibits both social interactions and price effects (pecuniary externalities), the equilibrium system is not a system of polynomial equations, but differential homotopies are possible -- and formalized and implemented in this paper using multivariate differential equations. One or more paths of equilibria connects a city with perfectly elastic supply (where the equilibria are solutions of a polynomial system) to a city with a finite housing supply elasticity, and the end point of such path is an equilibrium of the general city. Such general city has social preferences over a geographic scope, pecuniary externalities through floor surface prices, heterogeneous amenities, and heterogeneous marginal costs.

\begin{figure}
    \caption{Homotopies from Simpler Cities to the City of Interest}
    \label{fig:enter-label}
    \begin{center}
    \begin{tikzcd}
       \mathcal{C}^\infty_h \ar[rd,"H_A"] \\
  & \mathcal{C}^\infty \ar[r,"H_\eta"] & \mathcal{C} \\
    \mathcal{P} \ar[ru,"H"]   &                                    
    \end{tikzcd}
    \end{center}
    \emph{$\mathcal{C}$ is the city of interest. It is a city with finite supply elasticity and an arbitrary vector of amenities.  $\mathcal{C}^\infty_h$ is a city with homogeneous amenities and perfectly elastic housing supply $\eta=\infty$. $\mathcal{C}^\infty$ is a city with an arbitrary vector of amenities and perfectly elastic housing supply. Each homotopy $H$ is a function of (a) the equilibrium vector and (b) a scaling parameter $t\in[0,1]$ that transforms the equilibrium system of equations smoothly from $t=0$ to $t=1$. $H_A$ is presented in Section~\ref{sec:first_homotopy}. $H_\eta$ is presented in Section~\ref{subsec:Homotopy}. $H_A$ does not experience bifurcations, however $H_\eta$ may experience bifurcations and a solution is presented in Section~\ref{sec:bifurcations}. Numerical exercises suggest that bifurcations may not be frequent. $\mathcal{P}$ is a polynomial system whose solutions are on the unit circle, as in \citeasnoun{kehoe1991computation}. This system is not typically the set of equations of a city, but the system of equations of a discrete choice model are homotopic to a polynomial. $H$ is the corresponding homotopy. $H$ is presented in Section~\ref{subsec:Perfectly-Elastic-Housing-Supply-Exact-Method}.} 
\end{figure}
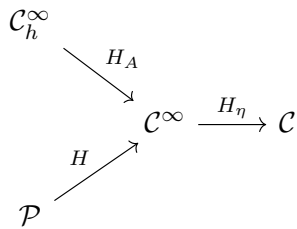

In the second homotopy from perfectly elastic supply to finite supply elasticity, a path of equilibria may bifurcate into two or more equilibria. At bifurcation points, a single path may branch out into multiple paths, potentially leading to multiple equilibria, and leading to multiple counterfactual cities for a single initial starting point of the path. We build upon an established literature in applied mathematics to follow these multiple paths and discover new equilibrium.  When a path, e.g. from a perfectly elastic city to a general city, splits as the elasticity of housing supply decreases, we can use solutions to a quadratic equation that is the second-order Taylor expansion of the homotopy to find the multiple paths leading to separate equilibria of the city~\cite{keller1977numerical}. The first-order condition gives us a subspace of dimension 1 or more, and we choose the solutions of the second-order Taylor expansion system that lie within such first-order condition.\footnote{A further option to deal with singularities  is to use arclength homotopy (also known as arclength continuation \cite{chan1982arc}), to follow turning points, bifurcation points, and cusps. Arclength continuation also has the benefit of overcoming the challenging of turning points and cusps by making the Jacobian invertible. Empirically however we do not see clear benefits of arclength homotopy in this urban economics setting.}

The classic approaches in applied mathematics lead to high order polynomials. In economics, we can exploit the specific structure of discrete choice models to reduce the dimensionality of the problem and to obtain shorter paths. We update and build on the method developed in \citeasnoun{kehoe1991computation} by replacing the initial polynomial by a polynomial with an exponentially smaller number of solutions. These solutions are  the equilibria of a city with homogeneous amenities. This exponentially speeds up calculation time. This paper uses homotopies, i.e. the transformation of equilibria from one model to another, which are the solutions of differential equations whose initial conditions are the equilibria of the first, simpler, city. When a path is regular, these paths yield an equilibrium of the city. A path is regular when the Jacobian of the differential equation is non-singular.

The approach is applied to thousands of numerical examples. The first set of examples performs the first homotopy. It enumerates the equilibria of cities with social interactions and perfectly elastic housing supply. Even with a small number of locations ($J=7$) the approach can yield 127 equilibria. We verify that the vectors of social demographics are indeed equilibria of the city. Intuitive properties emerge: a city with a small standard deviation of amenities has a larger number of equilibria, as geography does not ``pin down'' the precise equilibrium. The second set of examples starts with each of the 2,200+ cities of the first example to generate equilibria of the city with a finitely elastic housing supply. In this case, the response of floor surface prices is a source of strategic substitutability and tends to lower the number of equilibria. This second set of examples encompasses the workhorse models of spatial economics: its structure is similar to the quantitative spatial models used in the literature. 

The third set of examples pertains to the equilibria of the metropolitan area of Chicago. The paper puts forward a method that can deal with citywide homotopies with a large number of locations, here 353 neighborhoods, 77 communities, within 9 regions of the City. The approach is as follows. In the first step, the equilibria within communities and within regions are found focusing on social interactions only. In the second step, a citywide homotopy along the supply elasticity parameter is performed. Results generate counterfactual maps that are consistent with the estimated structural parameters. Parameters are estimated using a 1940-2010 panel of neighborhoods with consistent boundaries built using Census and Geolytics data.  These results suggest that there are a number of counterfactual Chicagos that are consistent with the estimated set of amenities.\footnote{We extend the Geolytics data set of \citeasnoun{card2008tipping} by adding three decades: 1940, 1950, 1960, and building Census tract relationship files.}

The paper also provides equilibria when unobservables are Fr\'echet-distributed, when unobservables are Gumbel-distributed as in \citeasnoun{mcfadden1977modelling}, in their static and dynamic versions.

The paper is implemented as follows. The code is written in Julia 1.9, and relies on the packages \texttt{ForwardDiff} for automatic differentiation \cite{revels2016forward}, \texttt{Differential} \texttt{Equations} for solving multivariate ordinary differential equations that give us the paths of equilibria, using predictor-corrector methods \cite{rackauckas2017differentialequations}, \texttt{HomotopyContinuation} when performing total degree homotopy for the solutions of arbitrary polynomials \cite{breiding2018homotopycontinuation}. These arbitrary polynomials occur in two places: when solving the first homotopy $H_A$, and at bifurcation points for the second-order Taylor expansion.

The paper is structured as follows. Section~\ref{sec:intuition} presents the intuition using a simple model. Section~\ref{sec:The-City} presents a general model of the city, where amenities are heterogeneous, multiple demographic groups choose locations, and location choices are driven both by social interactions and by price responses. Section~\ref{sec:frechet} presents the two-step homotopy method: first, finding all the equilibria for the city with perfectly elastic supply and heterogeneous amenities, whose equilibria are solutions of a polynomial system; second, transforming these equilibria into those of the city with elastic supply and heterogeneous amenities, by solving a differential equation with initial conditions equal to the first set of equilibria. Section~\ref{sec:examples} implements the method on more than a thousand Monte-Carlo examples, computing the set of equilibria for a range of cities with parameters in a multidimensional set. Section~\ref{sec:empirics} applies the method to  the City of Chicago, using parameters estimated from a panel data set with consistent tract boundaries.
Section~\ref{sec:discussion} discusses and extends the approach: first, to the case of a discrete choice model in the framework of \citeasnoun{mcfadden1977modelling} with the same ingredients as before; second, the section extends the approach to the dynamic case; in this case the equilibrium is the solution to a polynomial system in intertemporal welfare and social demographics. Third, Section~\ref{sec:bifurcations} presents the method used to deal with bifurcations.
  
\section{Intuition with A Simple Example}\label{sec:intuition}

Before considering the general case, we present the paper's main intuition
with a simple city with $J=3$ locations. We also use this example
to illustrate the approach that will be used for a large number of
locations $J$.

Each location is characterized by an amenity level $A_{1},A_{2},A_{3}$
and the price of floor surface in location $j$ is denoted $q_{j}$.
There is a mass $N$ of households choosing across the three locations,
$N^{g}$ college-graduates and $N-N^{g}$ non-college graduates. Denote
by $x_{j}=L_{j}^{g}/L^g$ the share of college educated population
at location $j$. Developers supply floor surface with elasticity
$\eta$, so that the supply of floor surface is $h_{j}=q_{j}^{\eta}$
in this simple example.

Households value consumption, floor surface, and the presence of college
graduates in nearby locations. In this simple example we assume that
$1,2,3$ are ordered on the line, and that households care about the
value of $\Psi_{j}$.
\begin{align*}
\Psi_{1} & =x_{1}+e^{-1}x_{2}+e^{-2}x_{3}\\
\Psi_{2} & =e^{-1}x_{1}+x_{2}+e^{-1}x_{3}\\
\Psi_{3} & =e^{-2}x_{1}+e^{-1}x_{2}+x_{3}
\end{align*}
As is typical in this literature, we assume as in \citeasnoun{benabou1996equity} and \citeasnoun{epple1999estimating} that college graduates have a greater preference for college graduates, $\gamma^{g}>\gamma$. This is the single-crossing condition. 

Each household maximizes its Cobb-Douglas utility with Fr\'echet
unobservables of dispersion~1. A college graduate chooses location
$j$ with probability:
\[
P_{j}^{g}=\frac{A_{1}q_{1}^{-\alpha}\Psi_{1}^{\gamma^{g}}}{A_{1}q_{1}^{-\alpha}\Psi_{1}^{\gamma^{g}}+A_{2}q_{2}^{-\alpha}\Psi_{2}^{\gamma^{g}}+A_{3}q_{3}^{-\alpha}\Psi_{3}^{\gamma^{g}}}
\]
College and non-college graduates only differ in their preferences
$\gamma^{g},\gamma$. The density of college graduates at location $j$
is then: $x_j = L_{j}^{g}/L^g= P_{j}^{g}$.

The price of floor surface $q_{j}$ clears each of the three markets
for floor surface:
\[
q_{j}=\left[N^{g}\frac{A_{1}\Psi_{1}^{\gamma^{g}}}{A_{1}q_{1}^{-\alpha}\Psi_{1}^{\gamma^{g}}+A_{2}q_{2}^{-\alpha}\Psi_{2}^{\gamma^{g}}+A_{3}q_{3}^{-\alpha}\Psi_{3}^{\gamma^{g}}}+(N-N^g)\frac{A_{1}\Psi_{1}^{\gamma}}{A_{1}q_{1}^{-\alpha}\Psi_{1}^{\gamma}+A_{2}q_{2}^{-\alpha}\Psi_{2}^{\gamma}+A_{3}q_{3}^{-\alpha}\Psi_{3}^{\gamma}}\right]^{1/(\alpha+\eta)}
\]
And the social equilibrium condition requires that the model-predicted
share of college graduates be consistent with the actual share.
\begin{align*}
x_{1} & = \frac{A_{1}q_{1}^{-\alpha}\Psi_{1}^{\gamma^{g}}}{A_{1}q_{1}^{-\alpha}\Psi_{1}^{\gamma^{g}}+A_{2}q_{2}^{-\alpha}\Psi_{2}^{\gamma^{g}}+A_{3}q_{3}^{-\alpha}\Psi_{3}^{\gamma^{g}}}\\
x_{2} & = \frac{A_{2}q_{2}^{-\alpha}\Psi_{1}^{\gamma^{g}}}{A_{1}q_{1}^{-\alpha}\Psi_{1}^{\gamma^{g}}+A_{2}q_{2}^{-\alpha}\Psi_{2}^{\gamma^{g}}+A_{3}q_{3}^{-\alpha}\Psi_{3}^{\gamma^{g}}}\\
x_{3} & = \frac{A_{3}q_{3}^{-\alpha}\Psi_{3}^{\gamma^{g}}}{A_{1}q_{1}^{-\alpha}\Psi_{1}^{\gamma^{g}}+A_{2}q_{2}^{-\alpha}\Psi_{2}^{\gamma^{g}}+A_{3}q_{3}^{-\alpha}\Psi_{3}^{\gamma^{g}}}
\end{align*}
At this stage we are ready to find the equilibria of this city. The
first step is start by finding the equilibria of the city with perfectly
elastic housing supply $\eta\rightarrow\infty$. In this case, $q_{1}=q_{2}=q_{3}=1$.
The city's equilibrium conditions can be written as:
\begin{align*}
\Psi_{1} & = \frac{A_{1}\Psi_{1}^{\gamma^{g}}}{A_{1}\Psi_{1}^{\gamma^{g}}+A_{2}\Psi_{2}^{\gamma^{g}}+A_{3}\Psi_{3}^{\gamma^{g}}}\\
 & +e^{-1} \frac{A_{2}\Psi_{2}^{\gamma^{g}}}{A_{1}\Psi_{1}^{\gamma^{g}}+A_{2}\Psi_{2}^{\gamma^{g}}+A_{3}\Psi_{3}^{\gamma^{g}}}+e^{-2} \frac{A_{3}\Psi_{3}^{\gamma^{g}}}{A_{1}\Psi_{1}^{\gamma^{g}}+A_{2}\Psi_{2}^{\gamma^{g}}+A_{3}\Psi_{3}^{\gamma^{g}}}
\end{align*}
for $j=1$ and similarly for $j=2$ and $j=3$. The key insight is
that this can be expressed as a polynomial in integer powers. First approximate the social preference parameter $\gamma^g$ by a rational fraction:
\begin{equation}
    \gamma^g \simeq \frac{p}{q}
\end{equation}
Then perform the change of variables $z_{j}=\Psi_{j}^{1/q}$ for each location $j$.
Then the city's equilibrium is a system of polynomials in $(z_{1},z_{2},z_{3})$
of order $p+q$. The polynomial system is:
\begin{equation}
\left\{
\begin{array}{lcl}
A_{1}z_{1}^{p}z_{1}^{q}+A_{2}z_{2}^{p}z_{1}^{q}+A_{3}z_{3}^{p}z_{1}^{q}- A_{1}z_{1}^{p}- e^{-1}A_{2}z_{2}^{p}- e^{-2}A_{3}z_{3}^{p} & = & 0 \\ 
A_{1}z_{1}^{p}z_{2}^{q}+A_{2}z_{2}^{p}z_{2}^{q}+A_{3}z_{3}^{p}z_{2}^{q}- e^{-1}A_{1}z_{1}^{p}- A_{2}z_{2}^{p}- e^{-1}A_{3}z_{3}^{p} & = & 0 \\
A_{1}z_{1}^{p}z_{3}^{q}+A_{2}z_{2}^{p}z_{3}^{q}+A_{3}z_{3}^{p}z_{3}^{q}- e^{-2}A_{1}z_{1}^{p}- e^{-1}A_{2}z_{2}^{p}- A_{3}z_{3}^{p} & = & 0 
\end{array}
\right.
\label{eq:example_poly_system_equation_1}
\end{equation}
The equilibria of this city lie at the intersection of these three polynomia in $z_1,z_2,z_3$. These intersection can be found by `smoothly' transforming the amenity coefficient $A_j$ and finding a path of $(z_1,z_2,z_3)\in \mathbb{R}^3$ from a simpler system of $A_j$s to a system with the city's actual $A_j$s.

\subsubsection*{First Approach: From the Unit Circle to the City's Equilibrium Equations\footnote{While it may seem that such system (\ref{eq:example_poly_system_equation_1}) should have closed form solutions, approaches for solving systems of polynomials suggest that this is in general a difficult problem, even with a small number of locations: for instance the Groebner base decomposition of the polynomial with 3 locations yields a polynomial in $z_1$ of degree higher than 70. This paper provides a more parsimonious approach.}}

To solve this, we use a simpler polynomial for which we know all the solutions exactly. This is a common approach in algebraic geometry, called total degree homotopy. Write the system as a linear combination in
$t\in[0,1]$ of a simpler system:
\begin{equation}
\left\{
\begin{array}{lcl}
    z_{1}^{p+q}&=&0 \\
    z_{2}^{p+q}&=&0 \\
    z_{3}^{p+q}&=&0
\end{array} \right.  \label{eq:unit_circle_system}
\end{equation}
and the city's actual polynomial system. There solution $(z_1,z_2,z_3)$ for each convex combination of this polynomial system (\ref{eq:unit_circle_system}) and the polynomial system of interest (\ref{eq:example_poly_system_equation_1}).
\begin{equation}
    H(z_1,z_2,z_3,t) = 0 
\end{equation}
This is a system of
three equations, the first one being:
\[
tz_{1}^{p+q}+(1-t)\left[A_{1}z_{1}^{p}z_{1}^{q}+A_{2}z_{2}^{p}z_{1}^{q}+A_{3}z_{3}^{p}z_{1}^{q}-N^{g}A_{1}z_{1}^{p}-N^{g}e^{-1}A_{2}z_{2}^{p}-N^{g}e^{-1}A_{3}z_{3}^{p}\right]=0
\]
Now the simpler system ($t=1$) has as roots the points on the complex unit
circle $e^{2i\pi n/(p+q)}$, which is $(p+q)^{3}$ potential solutions. From one of these starting points, a solution $(z_1,z_2,z_3)$ for $t=1$, a solution for $t<1$ is found by integrating $d\mathbf{z}/dt$ along $\tau \in [t,1]$. 

This approach guarantees that we enumerate all equilibria by
a key theorem of algebraic geometry \cite{sommese2005numerical}: each path from a root on the unit circle either diverges or converges to a solution of the polynomial system. 

When $t=1$, these are the roots of the system (\ref{eq:unit_circle_system}). When $t=0$, the roots
are potential equilibria of the city, when they are real and $\Psi_{j}\in[0,1]$.
The key idea of total degree homotopy continuation is to keep track
of the roots as $t\rightarrow t+dt$, calculating $dz_{j}/dt$ and
solving the differential equation by a solver such as Runge Kutta,
Adams-Bashforth and Adams-Moulton, or other differential equation
solving techniques. The solutions found for $t=0$ are depicted on Figure~\ref{fig:A-Visual-Example} as the blue points.
We check that the solutions are precise to $10^{-16}$. We also check
that these solutions satisfy the population condition.

\begin{figure}

\caption{A Visual Example of Equilibrium Multiplicity with $J=3$ Locations\label{fig:A-Visual-Example}}

\bigskip

\emph{This figure with $J=3$ locations illustrates the paper's baseline
approaches to equilibrium multiplicity. We obtain 5 equilibria in
both the city with perfectly elastic housing supply (blue points)
and in the city with finite elasticity of housing supply (red points). The blue
points are obtained by finding the roots of the polynomial system
(\ref{eq:example_poly_system_equation_1}) in $(z_{1},z_{2},z_{3})$
where $\mathbf{z}=\boldsymbol{\Psi}^{1/q}$ and $\boldsymbol{\Psi}=\Delta\mathbf{x}$.
The variables of interest are the densities of college graduates $(x_{1},x_{2},x_{3})$
in each location, $\mathbf{x}=\Delta^{-1}\mathbf{z}^{q}$, which is
what is depicted here. The red points are obtained by smoothly changing
the equilibria (blue points) from an infinitely elastic supply ($\eta=\infty$)
to an elastic supply (here $\eta=0.1$). These are the paths, solutions
of a differential equation where the initial conditions are the blue
points.}

\begin{center}

\includegraphics[scale=0.6]{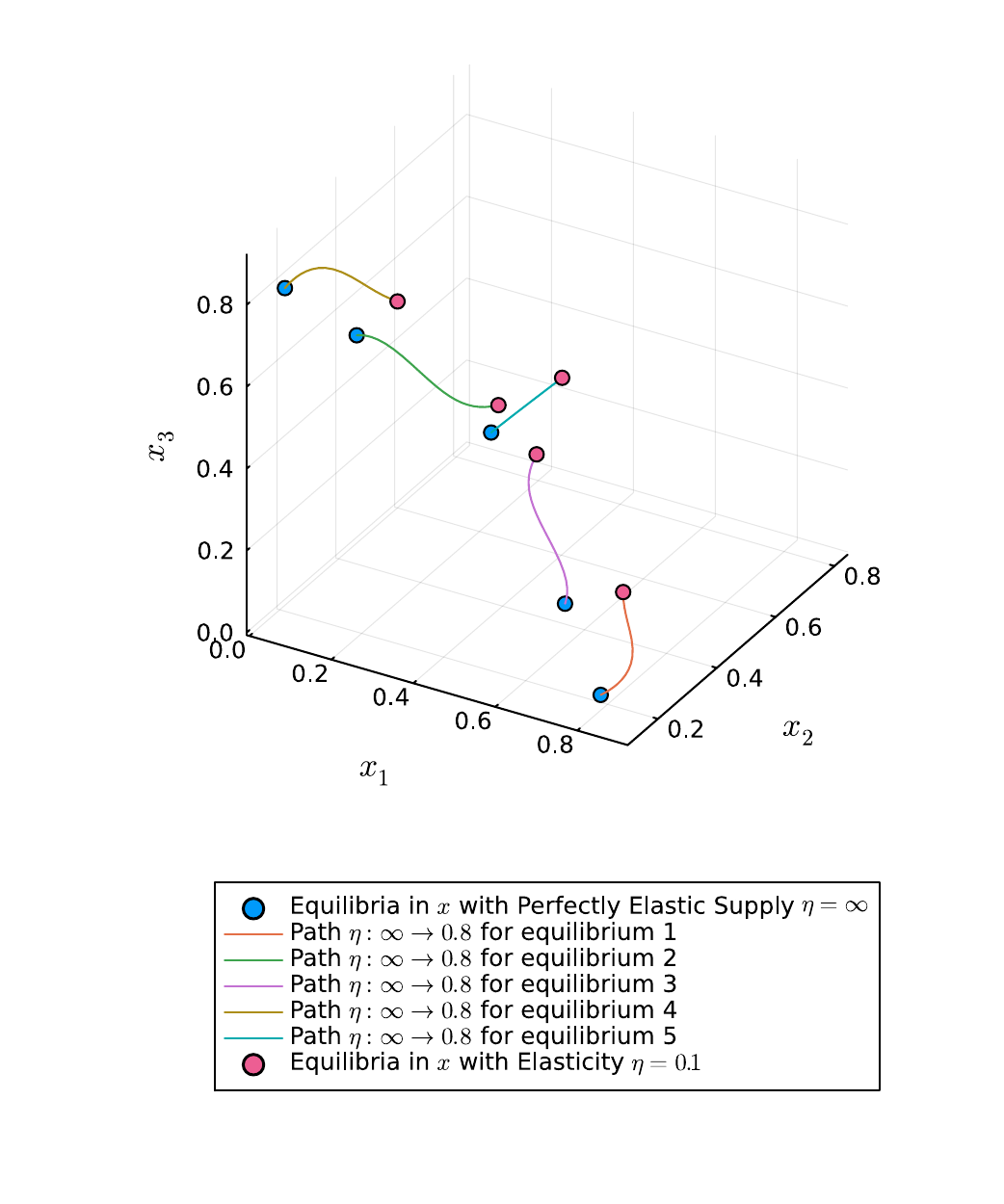}

\end{center}

\emph{Parameters: $\alpha=0.3$, $\gamma^{g}=2.5$ (hence $p=5$,
$q=2$), $\gamma=0$, $A_{1}=A_{2}=A_{3}=1$, $N^{g}=0.8*3$, $N=3$,
$\delta_{ij}=\textrm{exp}(-|i-j|)$, $\eta=0.1$ or $\eta=\infty$.}
\end{figure}

\subsubsection*{Second Approach: From the City with Identical Locations to the Heterogeneous City}

This first approach can be infeasible when the number of locations is substantial. We develop a second more parsimonious approach that
reduces the dimensionality of the problem.  This approach has lower dimensionality $q^3$.

We build a homotopy from the city with homogeneous
amenities and weights to the city with heterogeneous amenities $A_{1},A_{2},A_{3}$,
and heterogeneous social interaction weights $\Delta_{ij}$. The system:
\begin{equation}
z_{1}^{p}z_{1}^{q}+z_{2}^{p}z_{1}^{q}+z_{3}^{p}z_{1}^{q}-e^{-1}z_{1}^{p}-e^{-1}z_{2}^{p}-e^{-1}z_{3}^{p}=0,\label{eq:example_poly_system_equation_1-1}
\end{equation}
simplifies to $z_{1}^{q} = 3 e^{-1}$ and has a known set of real and complex roots. Set:
\[
A_{1}(t)=(1-t)+tA_{1},\,\,A_{2}(t)=(1-t)+tA_{2},\,\,A_{3}(t)=(1-t)+tA_{3}
\]
for $t$ from $0$ (homogeneous amenities) to $t=1$ (heterogeneous
amenities). And perform the same transformation for $\Delta_{ij}$
and the marginal cost $mc$.

\subsubsection*{Allowing for a Price Response: Towards the City with Elastic Supply}

The next step is to allow for an elasticity of housing supply set
to $h_{j}=q_{j}^{\eta}$ in this simple example. We started with $\eta=+\infty$.
$[0,\infty)$ is not a compact set, hence a change of variable will
help here. We define $\zeta=1/\eta$, which allows following the solution
on $\zeta\in[0,1/\eta]$, and the paper shows that the differentials
do not diverge at $\zeta=0$ and therefore can be extended by continuity at $\zeta=0$.
This time we follow both the evolution of demographics $d\log\mathbf{z}/d\eta$
and the evolution of prices $d\log\mathbf{q}/d\eta$ along the path, and integrate them. This is
depicted on Figure~\ref{fig:A-Visual-Example}.

\subsubsection*{Existence of Paths}

The existence of a path can be jeopardized by the singularity of the Jacobian of the homotopy continuation. 
In intuitive terms, there may be multiple solutions  $(dz_1,dz_2,dz_3)$ compatible with a small change $dt$ at a singular point.  Intuitively, this occurs when the polynomial equations share the same tangent space. This can be visualized in the case of 2 locations on Figure~\ref{fig:two-location-case-geometry}(c). Figure~(b) shows the existence of 3 equilibria in this city with 2 locations and homogeneous amenities $A_1=A_2$. As we make amenities more heterogeneous, i.e. we increase the amenity of location 2 relative to the amenity of location 1, Figure~(c) shows the presence of a singular point, where the tangent to the first polynomial equation is equal to the tangent of the second polynomial equation. For higher values of $A_2$, the city has a \emph{single} equilibrium, where residents choose location 2. For lower values of $A_2$, the city has three equilibria. For the value of $A_2$ used on Figure~(c), the two polynomial curves are tangent, and equilibrium 2 has multiplicity 2. Figure~\ref{fig:three-location-case-geometry} suggests that this intuition extends to the case of 3 locations.

We use a systematic approach. The singular point can be found by finding the $t\in[0,1]$ at which the determinant or the smallest eigenvalue of the Jacobian cancel. At this singular point, the bifurcations are computed as follows \cite{keller1977numerical}. First we find the kernel of the Jacobian matrix. This gives a subspace of dimension $\geq 1$. We then write the second order Taylor expansion at this singular point.  This is a polynomial system of order 2, for which we can apply a homotopy to find the finite number of potential bifurcations at the singular point that lie within the kernel. 

\section{The City\label{sec:The-City}}

We consider a city with $J$ locations characterized by amenities
$A_{j}$. There is a population $L_{1}$ of type 1 households and
$L_{2}$ of type 2 households, and social preferences $\gamma^{1}$
and $\gamma^{2}$ w.r.t to the demographic composition in the neighborhood
and in neighboring areas. The price of housing is $q_{jt}$, and the
elasticity of demand w.r.t. housing is $\alpha$. The indirect utility
of location $j$ is:
\begin{equation}
V_{j}^{g}=A_{j}q_{j}^{-\alpha}\Psi_{j}^{\gamma^{g}}\varepsilon_{ij}\label{eq:utility_function}
\end{equation}
where $\varepsilon_{ij}$ is Frechet-distributed. The dispersion parameter
of set to $\theta=1$ to lighten the notations without loss of generality.
The term in $\Psi$ measures the scope of social interactions as a
weighted average of the population density $x_{j}$ of type 1 in neighborhoods
at distance $d_{jk}$ of neighborhood $j$:
\begin{equation}
\Psi_{j}=\sum_{k=1}^{J}e^{-\xi d_{jk}}x_{k}\label{eq:definition_of_psi}
\end{equation}
where $d_{jj} \equiv 0$. The scalar $\xi$ measures the scope of social interactions \cite{redding2017quantitative}: when $\xi$ is large, the household cares only about the demographics of the current location. Define $\Delta$ the matrix $\left[e^{-\xi d_{jk}}\right]_{j,k=1,2,\dots,J}\in(0,\infty)^{J^{2}}$
. Then the vector $\mathbf{\Psi}=\Delta\mathbf{x}$. We assume that
$e^{-\xi d_{jj}}>\sum_{k\neq j}e^{-\xi d_{jk}}$ for all $j,k=1,2,\dots,J$
which ensures that $\Delta$ is invertible as a diagonally dominant
matrix. 

The probability of choosing location $j$ is thus:
\begin{equation}
P_{j}^{g=1}=\frac{A_{j}q_{j}^{-\alpha}\Psi_{j}^{\gamma^{1}}}{\sum_{k}A_{k}q_{k}^{-\alpha}\Psi_{k}^{\gamma^{1}}}\label{eq:probability_of_choosing_location_j}
\end{equation}
and thus the demographic composition of location $j$ is:
\begin{equation}
x_{j}=\frac{L^1_j}{L_{1}}\frac{A_{j}q_{j}^{-\alpha}\Psi_{j}^{\gamma^{1}}}{\sum_{k}A_{k}q_{k}^{-\alpha}\Psi_{k}^{\gamma^{1}}}\label{eq:social_equilibrium}
\end{equation}
Developers supply housing competitively with elasticity $\eta$.
\[
h_{j}=c_{j}\left(\frac{q_{j}}{\text{mc}_{j}}\right)^{\eta}
\]
so that $q_j = \text{mc}_j$ when $\eta \rightarrow \infty$. 

At equilibrium in each floor surface market:
\begin{equation}
\frac{1}{\text{mc}_{j}^{\eta}}c_{j}q_{j}^{\alpha+\eta}= L_1 \frac{A_{j}\Psi_{j}^{\gamma^{1}}}{\sum_{k}A_{k}q_{k}^{-\alpha}\Psi_{k}^{\gamma^{1}}}+ L_2 \frac{A_{j}\Psi_{j}^{\gamma^{2}}}{\sum_{k}A_{k}q_{k}^{-\alpha}\Psi_{k}^{\gamma^{2}}}\label{eq:market_clearing_conditions}
\end{equation}
An equilibrium is a pair of social demographics and floor surface
prices such that the market in each $j$ clears, and the social composition
predicted by the model is equal to that perceived by households.

The equilibrium can be expressed in terms of $\Psi_{j}$ and $q_{j}$
only instead of $x_{j}$ and $q_{j}$. Equation \ref{eq:social_equilibrium}
can be rewritten as:
\begin{equation}
\Psi_{j}=\sum_{k=1}^{J}e^{-\xi d_{jk}}x_{k}=\sum_{k=1}^{J}e^{-\xi d_{jk}} \frac{A_{k}q_{k}^{-\alpha}\Psi_{k}^{\gamma^{1}}}{\sum_{l}A_{l}q_{l}^{-\alpha}\Psi_{l}^{\gamma^{1}}}\label{eq:social_equilibrium_conditions}
\end{equation}
Together, the social equilibrium conditions (\ref{eq:social_equilibrium_conditions}) and the market-clearing conditions (\ref{eq:market_clearing_conditions}) define an equilibrium vector.

\begin{defn}
\textbf{(City Equilibrium)\label{def:(City-Equilibrium)-An}}\emph{
A }\textbf{city}\emph{ $\mathcal{C}=(\mathbf{L},\mathbf{A},\Delta,\mathbf{c},mc,\eta,\boldsymbol{\gamma},\alpha)$
is an aggregate population $\mathbf{L}=(L_{g})_{g=1,2}$ for each demographic
group, a vector of amenities $\mathbf{A}\in\mathbb{R}_{+*}^{J}$ for
each location, a distance matrix $\Delta\in\mathbb{R}^{J^{2}}$, a
vector of housing supply constants $\mathbf{c}\in\mathbb{R}_{+*}^{J}$,
a vector of marginal costs $mc\in\mathbb{R}_{+*}^{J}$, an elasticity
of housing supply $\eta$, a scalar $(\gamma^{g})_{g=1,2}\in\mathbb{R}_{+*}^{2}$
of preferences for each group $g=1,2$, an elasticity of demand $\alpha\in\mathbb{R}_{+*}$
w.r.t. the price. For each city $\mathcal{C}$, denote by $\mathcal{C}^{\infty}=(\mathbf{L},\mathbf{A},\Delta,\mathbf{c},mc,\infty,\boldsymbol{\gamma},\alpha)$
the city for which $\eta=\infty$ and all other parameters are kept
constant. Denote by $\mathcal{C}_{h}=(\mathbf{1},\mathbf{1}_{J},e\mathbf{1}_{J\times J},\mathbf{1}_{J},\mathbf{1}_{J},\eta,\boldsymbol{\gamma},\alpha)$
the city with unit population, homogeneous amenities $\mathbf{A}=1_{J}$
and a distance matrix $\Delta=e1_{J\times J}$ of ones times the Euler
constant, homogeneous supply constants and marginal costs, and all
other parameters are kept constant.}

\emph{A }\textbf{proper}\emph{ }\textbf{equilibrium }\emph{$(\mathbf{x}^{*},\mathbf{q}^{*})\in[0,1]^{J}\times\mathbb{R}_{+*}^{J}$}\textbf{
of the city $\mathcal{C}$}\emph{ is a vector $\mathbf{x}^{*}\in[0,1]^{J}$
of demographic compositions and a vector $\mathbf{q}^{*}\in(0,\infty]^{J}$
of prices that satisfy the $J$ market clearing conditions~(\ref{eq:market_clearing_conditions})
and the $J$ social equilibrium conditions~(\ref{eq:social_equilibrium}).}

\emph{Equivalently, a }\textbf{proper equilibrium}\emph{ is a pair
$(\boldsymbol{\Psi}^{*},\mathbf{q}^{*})\in[0,\infty)^{J}\times(0,\infty]^{J}$
that satisfy the $J$ market clearing conditions~(\ref{eq:market_clearing_conditions}),
the $J$ social equilibrium conditions (\ref{eq:social_equilibrium_conditions}), stacked in a single function of dimension $2J$:
\begin{equation}
E(\Psi^{*},\mathbf{q}^{*})=0_{2J}\label{eq:equilibrium_conditions}
\end{equation}
and for which $\mathbf{x}=\Delta^{-1}\boldsymbol{\Psi}\in[0,1]^{J}$.
$E$ is a continuous and infinitely differentiable function for strictly positive values of $(\Psi^{*},\mathbf{q}^{*})$.
An }\textbf{improper equilibrium}\textbf{\emph{ }}\emph{is a pair
$(\boldsymbol{\Psi}^{*},\mathbf{q}^{*})\in\mathbb{R}^{J}\times(0,\infty]^{J}$
such that $(\boldsymbol{\Psi}^{*},\mathbf{q}^{*})$ satisfies the
equilibrium relationships but for which $\mathbf{x}^{*}=\Delta^{-1}\mathbf{\Psi}^{*}\notin[0,1]^{J}$.
The set $G(\mathcal{C})\subset\mathbb{R}^{J}\times(0,\infty]^{J}$
of }\textbf{generalized equilibria}\textbf{\emph{ }}\emph{is the union
of the set of proper and improper equilibria of the city.}
\end{defn}

\section{Equilibria of the City using Homotopies}\label{sec:frechet}

We make three remarks. 

First, it is possible to guarantee the enumeration of all equilibria
of the city $\mathcal{C}^{\infty}$ with heterogeneous amenities and
perfectly elastic housing supply. The equilibrium equations of the
economy $E$ can be expressed as the solutions of a polynomial system
with integer powers. Standard results in the study of polynomial systems
guarantee that all solutions of such polynomial systems can be enumerated.
This is because such a polynomial system is homotopic to a system
of polynomials whose solutions are on the complex unit circle. We
provide a quantitative application of this method on a large set of
possible cities.

Yet, this approach can be computationally intensive and infeasible
for larger cities. This leads to our second approach: the city $\mathcal{C}^{\infty}$
is homotopic to the corresponding city $\mathcal{C}_{h}^{\infty}$
with homogeneous amenities. In such a city all real solutions of the
equilibrium system are known. 

The third and final remark is that the equilibria of the economy with
a finite elasticity of housing supply $\eta^{*}<\infty$ can be found
by smoothly transforming the equilibria of an economy with perfectly
elastic housing supply, $\eta=\infty$, along the path $\eta\in(\infty;\eta^{*}]$.
The model behaves continuously as $\eta\rightarrow\infty$, which
enables a change of variable $\zeta=1/\eta\in[0,\frac{1}{\eta^{*}}]$. 

This section is structured as follows. Section~\ref{subsec:Homotopic-Cities}
introduces the definition of a homotopy and the approach using differential
equations. Section~\ref{subsec:Perfectly-Elastic-Housing-Supply-Exact-Method}
derives the equilibria of the city $\mathcal{C}^{\infty}$ with perfectly
elastic supply, $\eta=\infty$. Section~\ref{sec:Finite-Housing-Supply-Elasticity}
then smoothly transforms the equilibria $(\boldsymbol{\Psi}^{*}(\eta=\infty),\mathbf{q}^{*}(\eta=\infty))$
of city $\mathcal{C}^{\infty}$ to the equilibria of the city $\mathcal{C}$
$(\boldsymbol{\Psi}^{*}(\eta^{*}),\mathbf{q}^{*}(\eta^{*}))$ with
a finite elasticity of housing supply. This concludes the approach. 

\subsection{Homotopic Cities\label{subsec:Homotopic-Cities}}
\begin{defn}
\textbf{(Homotopic systems of equations, Homotopic cities)}\emph{
Two continuous functions $E:\mathbb{R}^{J}\times\mathbb{R}^{J}\rightarrow\mathbb{R}$
and $E':\mathbb{R}^{J}\times\mathbb{R}^{J}\rightarrow\mathbb{R}$
are homotopic if there exists a family of continuous transformation
indexed by $t\in[0,1]$:
\begin{align}
H: & \mathbb{R}^{J}\times\mathbb{R}^{J}\times[0,1]\rightarrow\mathbb{R}\nonumber \\
 & (\boldsymbol{\Psi},\mathbf{q},t)\mapsto H(\boldsymbol{\Psi},\mathbf{q},t)
\end{align}
such that $H(\Psi^{*},\mathbf{q}^{*},0)=E(\Psi^{*},\mathbf{q}^{*})$
and $H(\Psi^{*},\mathbf{q}^{*},1)=E'(\Psi^{*},\mathbf{q}^{*})$ for
all $(\boldsymbol{\Psi}^{*},\mathbf{q}^{*})\in\mathbb{R}^{J}\times\mathbb{R}^{J}$
. A differentiable homotopy occurs when $H$ is differentiable with
respect to $(\boldsymbol{\Psi},\mathbf{q},t)$.}

\emph{By extension, two cities $\mathcal{C}$ and $\mathcal{C}'$
are homotopic if their systems of equilibrium equations $E$ and $E'$
are homotopic. }
\end{defn}

Homotopies are particularly useful if one can exactly calculate the equilibria of the initial ($t=0$) city with pen and paper, and then use a path from each equilibrium to potentially find an equilibrium of the city with heterogeneous amenities, marginal costs, supply elasticity, distance matrix. The properties of homotopies can be challenging to study in the non-linear general case, but we will show in the next Section \ref{subsec:Perfectly-Elastic-Housing-Supply-Exact-Method} that the system of equilibrium equations can be expressed as a polynomial system. This allows us to study the following differential equation along the path $t\in[0,1]$:

\begin{prop}
\textbf{\emph{(Equilibrium Paths from $\mathcal{C}$ to $\mathcal{C'}$)}} Consider two homotopic cities
$\mathcal{C}$ and $\mathcal{C}'$ with differentiable homotopy $H$
and a generalized equilibrium $(\boldsymbol{\Psi}^{*},\mathbf{q}^{*})\in\mathbb{R}^{J}\times(0,\infty]^{J}$
of city $\mathcal{C}$. Define the differential equation:
\begin{equation}
\frac{\partial H}{\partial(\Psi,\mathbf{q})}\frac{d(\Psi,\mathbf{q})}{dt}=-\frac{\partial H}{\partial t}, \label{eq:ode}
\end{equation}
with initial condition $\boldsymbol{\Psi}(t=0)=\boldsymbol{\Psi^{*}}$
and $\mathbf{q}(t=0)=\mathbf{q}^{*}$. We assume that the Jacobian $\frac{\partial H}{\partial(\Psi,\mathbf{q})}$
of $H$ with respect to demographic composition and price is invertible and continuous on a compact set including the path $(\boldsymbol{\Psi}(t),\mathbf{q}(t))$. 
The solution to this differential equation at $t=1$ is an equilibrium
of the city $\mathcal{C}'$.
\end{prop}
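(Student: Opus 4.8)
The plan is to show that, along the integral curve of \eqref{eq:ode}, the value of the homotopy remains identically zero, so that an equilibrium of $\mathcal{C}$ at $t=0$ is continuously deformed into a zero of $E'$ at $t=1$. This is the classical Davidenko / homotopy-continuation identity, applied here to the $2J$-dimensional equilibrium map $E$ of Definition~\ref{def:(City-Equilibrium)-An}. The starting observation is the initial condition: since $(\boldsymbol{\Psi}^*,\mathbf{q}^*)$ is a generalized equilibrium of $\mathcal{C}$ we have $E(\boldsymbol{\Psi}^*,\mathbf{q}^*)=0_{2J}$, and because $H(\cdot,\cdot,0)=E$ by the definition of a homotopy, $H(\boldsymbol{\Psi}^*,\mathbf{q}^*,0)=0_{2J}$.

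First I would put \eqref{eq:ode} into explicit form. Because the Jacobian $\partial H/\partial(\Psi,\mathbf{q})$ is assumed invertible and continuous on a compact set $K$ containing the path, Cramer's rule shows that its inverse is continuous and bounded on $K$; together with continuity of $\partial H/\partial t$ this lets me rewrite \eqref{eq:ode} as
\[
\frac{d(\Psi,\mathbf{q})}{dt}=-\left[\frac{\partial H}{\partial(\Psi,\mathbf{q})}\right]^{-1}\frac{\partial H}{\partial t},
\]
a system whose right-hand side is continuous in $(\Psi,\mathbf{q},t)$ and, since $E$ (hence $H$) is infinitely differentiable on the strictly positive orthant, locally Lipschitz there. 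Picard--Lindel\"of then gives a unique local solution from the initial point, and confinement of the path to the compact $K$ on which the field is bounded rules out finite-time blow-up, so the solution extends to all of $[0,1]$.

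The core step is to define $\phi(t)=H(\boldsymbol{\Psi}(t),\mathbf{q}(t),t)$ and differentiate by the chain rule,
\[
\phi'(t)=\frac{\partial H}{\partial(\Psi,\mathbf{q})}\frac{d(\Psi,\mathbf{q})}{dt}+\frac{\partial H}{\partial t}.
\]
By construction of \eqref{eq:ode} the two terms cancel, so $\phi'(t)=0_{2J}$ on $[0,1]$; thus $\phi$ is constant, and $\phi(0)=0_{2J}$ forces $\phi\equiv 0_{2J}$. Evaluating at $t=1$ and using $H(\cdot,\cdot,1)=E'$ gives $E'(\boldsymbol{\Psi}(1),\mathbf{q}(1))=0_{2J}$, i.e.\ $(\boldsymbol{\Psi}(1),\mathbf{q}(1))$ is a generalized equilibrium of $\mathcal{C}'$, which is the claim.

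I expect the genuine difficulty to lie not in the differential identity, which is mechanical, but in securing global existence of the path on all of $[0,1]$. The invertibility hypothesis is precisely what can fail at the singular and bifurcation points treated in Section~\ref{sec:bifurcations}, so the compactness assumption must do real work: it has to guarantee both that the vector field stays bounded and that the path never reaches the boundary of the positive region where $H$ is smooth (Definition~\ref{def:(City-Equilibrium)-An} only asserts differentiability for strictly positive $(\Psi,\mathbf{q})$). A careful write-up would also note that the conclusion yields a \emph{generalized} equilibrium; upgrading it to a \emph{proper} equilibrium requires the additional check that $\mathbf{x}=\Delta^{-1}\boldsymbol{\Psi}(1)\in[0,1]^J$.
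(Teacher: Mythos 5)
Your proposal is correct and follows essentially the same route as the paper: the paper likewise inverts the Jacobian on a compact set where it is invertible (noting the right-hand side is then Lipschitz continuous) and invokes Picard's theorem for existence of the path, exactly as your Picard--Lindel\"of step does. Your explicit verification that $\phi(t)=H(\boldsymbol{\Psi}(t),\mathbf{q}(t),t)$ is constant and hence zero at $t=1$ — the Davidenko identity — is left implicit in the paper, so your write-up is if anything slightly more complete, and your closing remark that the conclusion delivers only a \emph{generalized} equilibrium (properness requiring $\Delta^{-1}\boldsymbol{\Psi}(1)\in[0,1]^{J}$) accurately reflects the paper's own equilibrium taxonomy.
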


Each row of $-\left( \frac{\partial H}{\partial(\Psi,\mathbf{q})} \right)^{-1} \frac{\partial H}{\partial t}$ is a rational fraction as a fraction of two polynomials. The zeros of the denominator are the singular points of the Jacobian $\frac{\partial H}{\partial(\Psi,\mathbf{q})}$. On a bounded compact set where the Jacobian is invertible, the rational fraction is Lipschitz continuous. The Lipschitz continuity of the right-hand side implies the existence of a solution by Picard's theorem (\possessivecite{kolmogorov1975introductory} Theorem 2). 

In all examples that follow, $E$, $E'$, $H$ are polynomial systems with $J$ unknowns, and we test for singularities, which implies Lipschitz continuity of the right-hand side of (\ref{eq:ode}) outside of open sets around singularities. 

Characterizing the critical points of the Jacobian of an arbitrary polynomial system is an open research question; yet for each example it is empirically possible to characterize the set of critical points, which typically has low dimensionality as the zeros of the determinant. The equation $\textrm{det}(\frac{\partial H}{\partial(\Psi,\mathbf{q})})=0$ defines a $J-1$ curve in the $J$ dimensional space; for 3 locations, this is a surface.\footnote{In the important case of a general equilibrium Arrow-Debreu system, \citeasnoun{balasko2009equilibrium} provides results on the dimension and measure of this set.} Section~\ref{sec:singularity_jacobian} estimates the set of critical points of the Jacobian $\frac{\partial H}{\partial(\Psi,\mathbf{q})}$, its dimension and its impact on homotopy paths. 

\subsection{Enumerating the Equilibria of the City with Perfectly Elastic
Supply:\protect \\
Bertini's Approach, Homotopy $H$ \label{subsec:Perfectly-Elastic-Housing-Supply-Exact-Method}}

We consider a city $\mathcal{C}^{\infty}$ with perfectly elastic
housing supply. The problem of finding social equilibria $\Psi\in\mathbb{R}^{J}\backslash\{0,0,\dots,0\}$
is the problem of finding solutions to the system of $J$ equations:
\begin{equation}
\Psi_{j}=\sum_{k=1}^{J}e^{-\xi d_{jk}}\frac{A_{k}\text{mc}_{k}^{-\alpha}\Psi_{k}^{\gamma^{1}}}{\sum_{l}A_{l}\text{mc}_{l}^{-\alpha}\Psi_{l}^{\gamma^{1}}}\label{eq:equilibrium_psi}
\end{equation}
where $\textrm{mc}_j$ is the marginal cost of floor surface and $e^{-\xi d_{jk}}$ is the $j,k$ element of the distance matrix $\Delta$.
Here we exclude the trivial solution $\boldsymbol{\Psi}=\{0,0,\dots,0\}$.
The social demographic vector $\mathbf{x}$ is $\mathbf{x}=\Delta^{-1}\boldsymbol{\Psi}$. 

The set of rational numbers $\mathbb{Q}$ is dense in the set of real
numbers $\mathbb{R}$.\footnote{See for instance \citeasnoun{kolmogorov1975introductory}.} Hence, given an arbitrary $\varepsilon>0$,
there always exists a fraction $p/q$, with $p\in\mathbb{Z}$ and
$q\in\mathbb{N}$ at distance at most $\varepsilon$ of $\gamma^{1}\in\mathbb{R}$. Consider $p$ and $q$ with no common divisor.

\begin{equation}
\forall\varepsilon>0,\,\exists(p,q)\in\mathbb{Z}\times\mathbb{N}\qquad\left|\gamma^{1}-\frac{p}{q}\right|<\varepsilon\label{eq:rational_dense_in_R}
\end{equation}
Perform the change of variables $z_{j}\equiv\Psi_{j}^{1/q}$. The
equilibrium equation (\ref{eq:equilibrium_psi}) becomes:
\begin{equation}
z_{j}^{q}=\sum_{k=1}^{J}\Delta_{jk} \frac{A_{k}\text{mc}_{k}^{-\alpha}z_{k}^{p}}{\sum_{l}A_{l}\text{mc}_{l}^{-\alpha}z_{l}^{p}}\label{eq:zjq}
\end{equation}
which can be written as a polynomial system of $J$ polynomials of
order $p+q$:

\begin{equation}
\sum_{l=1}^{J}A_{l}\text{mc}_{l}^{-\alpha}z_{l}^{p}z_{j}^{q}-\sum_{k=1}^{J}\Delta_{jk} A_{k}\text{mc}_{k}^{-\alpha}z_{k}^{p}=0\label{eq:polynomial_system}
\end{equation}
while focusing on the solutions for which at least one $z_{j}\neq0$.
This condition ensures that equation (\ref{eq:equilibrium_psi}) is
well-posed and that the population condition is satisfied. This polynomial
system implies:
\begin{prop}
\emph{(}\textbf{\emph{Equilibria and
Solutions to the Polynomial System}}\emph{)} In the city $\mathcal{C}^{\infty}$
with perfectly elastic housing supply, every real and strictly positive solution
$\mathbf{z}^{*}\in\mathbb{R}_{+*}^{J}$ to the system of $J$ polynomial
equations of degree $p+q$, equations~(\ref{eq:polynomial_system})
can be mapped to a unique equilibrium $(\boldsymbol{\Psi}^{*},\mathbf{q}^{*})\in G(\mathcal{C}^{\infty})$
by setting $\boldsymbol{\Psi}^{*}=(\mathbf{z}^{*})^{q}$ and $\mathbf{q}^{*}=\mathbf{mc}$.
Conversely, every equilibrium $(\boldsymbol{\Psi}^{*},\mathbf{q}^{*})\in G(\mathcal{C}^{\infty})$
corresponds to a real and positive solution to the set of polynomial
equations of degree $p+q$. This set of $J$ polynomial equations (\ref{eq:polynomial_system}) is denoted as $P(\mathbf{z})=0$. 
\end{prop}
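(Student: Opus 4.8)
The plan is to treat the claim as an algebraic equivalence built from three ingredients: the change of variables $z_j = \Psi_j^{1/q}$, the clearing of the common denominator that turns (\ref{eq:zjq}) into (\ref{eq:polynomial_system}), and the fact that perfectly elastic supply ($\eta = \infty$) pins prices at $\mathbf{q}^* = \mathbf{mc}$. On the strictly positive orthant, $\boldsymbol{\Psi} \mapsto \boldsymbol{\Psi}^{1/q}$ is a bijection of $\mathbb{R}_{+*}^J$ onto itself with inverse $\mathbf{z} \mapsto \mathbf{z}^q$, under which $(z_k)^p = \Psi_k^{p/q} = \Psi_k^{\gamma^1}$ holds unambiguously because $z_k > 0$ and $\gamma^1 = p/q$. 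I would stress at the outset that this last identity is exact only for the rationalized exponent, so the proposition is an exact statement about the polynomial city; the $\varepsilon$-approximation of $\gamma^1$ enters only through the density argument (\ref{eq:rational_dense_in_R}) and not through this correspondence.

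For the forward direction I would fix $\mathbf{z}^* \in \mathbb{R}_{+*}^J$ with $P(\mathbf{z}^*) = 0$ and set $\boldsymbol{\Psi}^* = (\mathbf{z}^*)^q$, $\mathbf{q}^* = \mathbf{mc}$. Since $A_l, \text{mc}_l, z_l^*$ are all strictly positive, the denominator $D := \sum_l A_l\,\text{mc}_l^{-\alpha}(z_l^*)^p$ is strictly positive, so dividing each equation of (\ref{eq:polynomial_system}) by $D$ recovers (\ref{eq:zjq}) without loss; substituting $(z_j^*)^q = \Psi_j^*$ and $(z_k^*)^p = (\Psi_k^*)^{\gamma^1}$ then reproduces the social-equilibrium condition (\ref{eq:equilibrium_psi}). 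Because $\eta = \infty$ forces $q_j = \text{mc}_j$, the market-clearing conditions hold automatically at $\mathbf{q}^* = \mathbf{mc}$, so $(\boldsymbol{\Psi}^*, \mathbf{q}^*)$ solves the full equilibrium system and lies in $G(\mathcal{C}^\infty)$; the assignment $\mathbf{z}^* \mapsto ((\mathbf{z}^*)^q, \mathbf{mc})$ is single-valued, which gives the asserted uniqueness.

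For the converse I would take $(\boldsymbol{\Psi}^*, \mathbf{q}^*) \in G(\mathcal{C}^\infty)$, use $\eta = \infty$ to get $\mathbf{q}^* = \mathbf{mc}$, and recover $z_j^* = (\Psi_j^*)^{1/q}$. The step I expect to be the main obstacle is establishing that $\boldsymbol{\Psi}^* \in \mathbb{R}_{+*}^J$, so that this real positive root is well defined: at an equilibrium the right-hand side of (\ref{eq:equilibrium_psi}) equals $\sum_k \Delta_{jk} P_k^*$ with $\Delta_{jk} > 0$ and $\sum_k P_k^* = 1$, so strict positivity of $\boldsymbol{\Psi}^*$ follows once the choice probabilities $P_k^*$ are shown to be nonnegative, i.e.\ once one rules out the branch in which some numerator $A_k\,\text{mc}_k^{-\alpha}(\Psi_k^*)^{\gamma^1}$ and the denominator $D$ have opposite signs. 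This is exactly where proper and improper equilibria separate, since one checks that $\boldsymbol{\Psi}^* > 0$ forces $\mathbf{x}^* = \Delta^{-1}\boldsymbol{\Psi}^* = \mathbf{P}^* \in (0,1)^J$; strictly positive roots therefore correspond to the proper part of $G(\mathcal{C}^\infty)$, while improper equilibria with some $\Psi_k^* \le 0$ correspond to real roots of mixed sign (available when $q$ is odd). Once positivity is in hand, multiplying (\ref{eq:equilibrium_psi}) by $D > 0$ returns (\ref{eq:polynomial_system}), so $\mathbf{z}^*$ is a genuine root. Finally I would flag that clearing denominators is an equivalence only off the locus $D = 0$: the polynomial system may carry spurious roots with $D = 0$, together with the trivial root $\mathbf{z} = 0$, none of which correspond to equilibria, but all of which are excluded by the restriction to $\mathbf{z}^* \in \mathbb{R}_{+*}^J$, on which $D > 0$.
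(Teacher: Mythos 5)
Your proof is correct and takes essentially the same route as the paper, which offers no separate proof but presents the proposition as an immediate consequence of the change of variables $z_j=\Psi_j^{1/q}$, the clearing of the common denominator, and the fact that perfect elasticity pins $\mathbf{q}^{*}=\mathbf{mc}$. Your extra care --- positivity of the denominator $D$ on the positive orthant, exclusion of spurious roots with $D=0$ and of the trivial root, and the observation that strictly positive roots in fact yield \emph{proper} equilibria while improper ones require sign-mixed roots --- makes rigorous what the paper compresses into the remark that one focuses ``on the solutions for which at least one $z_{j}\neq0$.''
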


This polynomial system can be solved using homotopy continuation,
by using the following theorem from \citeasnoun{sommese2005numerical}, which is
an application of results by Bertini and B\'ezout. Such theorem guarantees
that we can obtain all solutions of the system almost surely by starting
with roots of a polynomial system whose roots are on the complex unit
circle. This theorem considers the case of a simpler polynomial of
order $p+q$ for which the solutions are known as points on the complex
unit circle. We then transform this polynomial system smoothly into
the system~\ref{eq:polynomial_system}. 

\begin{thm}
\label{thm:theorem_total_degree_homotopy}\textbf{\emph{(Equilibria
by Total Degree Homotopy)}}\emph{ }The polynomial system $P_{C^{\infty}}$
of equilibrium equations (\ref{eq:polynomial_system}) of the city
$\mathcal{C}^{\infty}$ with perfectly elastic housing supply is homotopic
to the polynomial system:
\begin{equation}
z_{j}^{d_{j}}-1=0,\qquad j=1,2,\dots,J\label{eq:simple_polynomial_system_total_degree_homotopy}
\end{equation}
where $d_{j}$ is the degree of equation $j$. Its solutions are the
complex roots $(z_{j})_{j=1,2,\dots,J}=(e^{2i\pi n_{j}/(p+q)})_{j}$
on the unit circle. $(n_{j})$ is a vector of integers $n_{j}=0,1,2,\dots,d_{j}$.
Denote by $G(\mathbf{z})=0$ this system. The $p+q$ solution paths
of the homotopy:
\begin{equation}
H(\mathbf{z},t)=\gamma tG(\mathbf{z})+(1-t)P(\mathbf{z}),\qquad t\in[0,1]\label{eq:homotopy}
\end{equation}
starting at the solutions of $G(\mathbf{z})=0$ are nonsingular for
$t\in(0,1]$ and their endpoints as $t\rightarrow0$ include \emph{all
of the nonsingular solutions} of $P(\mathbf{z})$ for almost all $\gamma\in\mathbb{C}$. 
\end{thm}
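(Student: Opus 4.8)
The plan is to prove this as an instance of the classical total-degree (``gamma-trick'') homotopy of numerical algebraic geometry, whose three ingredients are B\'ezout's bound, a parametric transversality (Sard/Bertini) argument controlling the exceptional set of $\gamma$, and a compactness-and-counting argument in complex projective space $\mathbb{P}^{J}(\mathbb{C})$. Since the statement is essentially a specialization of the result in \citeasnoun{sommese2005numerical}, the real work is to verify that the start system $G$ and the target $P$ satisfy the hypotheses of that general theorem.

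First I would check that $G(\mathbf{z})=0$, i.e. $z_{j}^{d_{j}}=1$ with $d_{j}=p+q$, is a genuine total-degree start system. It has exactly $\prod_{j=1}^{J}d_{j}=(p+q)^{J}$ distinct solutions, namely the tuples of $(p+q)$-th roots of unity, and each is nonsingular: the Jacobian $\partial G/\partial\mathbf{z}$ is the diagonal matrix $\mathrm{diag}(d_{j}z_{j}^{d_{j}-1})$, which is invertible whenever every $z_{j}\neq 0$, as holds on the root-of-unity lattice. This furnishes $\prod_{j}d_{j}$ smooth starting points, matching the B\'ezout number of the target system~(\ref{eq:polynomial_system}), each of whose $J$ equations has degree $p+q$. (Note that the count of paths is $(p+q)^{J}$, not $p+q$.)

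Next, and this is the crux, I would establish nonsingularity of the paths for $t\in(0,1]$ for almost every $\gamma$. Treating $\gamma$ as a complex variable, consider the incidence set $\{(\mathbf{z},t,\gamma):H(\mathbf{z},t)=0,\ \det(\partial H/\partial\mathbf{z})=0\}$ restricted to $t\in(0,1)$. The aim is to show that its projection onto the $\gamma$-axis is a \emph{proper} algebraic subvariety of $\mathbb{C}$, hence finite, hence Lebesgue-null. Because $H$ depends on $\gamma$ linearly through the term $\gamma t\,G(\mathbf{z})$ in (\ref{eq:homotopy}), and the start system at $t=1$ is already nonsingular, a parametric form of Sard's theorem (equivalently Bertini's theorem applied to the generic fiber) shows the bad locus is cut out nontrivially, so only finitely many $\gamma$ fail. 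The delicate point is that the exceptional $\gamma$-set must be a \emph{single finite set valid simultaneously for all} $t\in(0,1]$, not a $t$-dependent one; I would secure this by working with the total discriminant in $(\mathbf{z},t,\gamma)$-space at once rather than slice by slice, so that the exceptional set is the zero set of a single discriminant polynomial in $\gamma$. For such generic $\gamma$ the Jacobian $\partial H/\partial\mathbf{z}$ is invertible all along each path, so the Davidenko differential equation (\ref{eq:ode}) is well posed and the paths neither cross nor stall, which is exactly the regularity hypothesis invoked in the Equilibrium-Paths proposition.

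Finally I would account for the endpoints by compactifying in $\mathbb{P}^{J}(\mathbb{C})$. Each of the $\prod_{j}d_{j}$ paths, being an arc bounded away from the singular locus, extends to a limit in projective space as $t\to 0^{+}$, and that limit is either a finite isolated solution of $P(\mathbf{z})=0$ or a point at infinity. By B\'ezout's theorem the projective solution set of $P$, counted with multiplicity together with its points at infinity, totals exactly $\prod_{j}d_{j}$, which equals the number of paths; hence every isolated nonsingular affine solution of $P$ is the limit of at least one path, giving the claimed conclusion. The subsequent restriction to real, strictly positive $\mathbf{z}$ (the economically meaningful equilibria) is then a selection among these complex endpoints, as already formalized in the preceding proposition. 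The main obstacle throughout is the gamma trick of the third paragraph, specifically ruling out any $t$-dependence of the exceptional $\gamma$-set, which is precisely the source of the ``almost all $\gamma$'' genericity in the statement.
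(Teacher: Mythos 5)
Your overall plan is the right one, and it matches the paper's route: the paper proves nothing beyond the reduction itself and defers the second part of the theorem to \citeasnoun{sommese2005numerical}, whose classical total-degree/gamma-trick argument you are reconstructing. Your verification of the start system is correct (diagonal Jacobian $\mathrm{diag}(d_j z_j^{d_j-1})$, nonsingular on the root-of-unity lattice), your endpoint accounting via B\'ezout and compactification in $\mathbb{P}^J(\mathbb{C})$ is the standard one, and you are right on both corrections to the statement: the number of paths is $(p+q)^J$, not $p+q$, and the exponents should run $n_j = 0,1,\dots,d_j-1$.

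However, the step you yourself flag as the crux contains a genuine error. The exceptional set of $\gamma$ is \emph{not} a proper algebraic subvariety of $\mathbb{C}$ and is \emph{not} finite, and no ``total discriminant polynomial in $\gamma$'' exists. To see this, rescale the homotopy projectively: $H(\mathbf{z},t)=0$ is equivalent to $\mu G(\mathbf{z})+(1-\mu)P(\mathbf{z})=0$ with $\mu = \gamma t/(\gamma t + 1 - t)$, so each choice of $\gamma$ traces an arc $t \mapsto \mu(t)$ from $1$ to $0$ in the complex $\mu$-line. The discriminant of the pencil $\mu G + (1-\mu)P$ is a polynomial in the \emph{single} complex variable $\mu$, not identically zero (it is nonzero at $\mu=1$ since $G$ is nonsingular), hence has finitely many roots $\mu^*$. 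For a fixed bad $\mu^* \notin \{0,1\}$, the set of $\gamma$ whose arc passes through $\mu^*$ is $\left\{ \frac{\mu^*}{1-\mu^*}\cdot\frac{1-t}{t} : t \in (0,1) \right\}$, an open real ray through the origin. The full exceptional set is thus a finite union of real one-dimensional rays: Lebesgue-null, as the theorem asserts, but infinite and only semialgebraic. Your proposed fix makes things worse rather than better: to get an algebraic incidence set in $(\mathbf{z},t,\gamma)$ you must complexify $t$, and then by Chevalley the projection to the $\gamma$-line is constructible and typically \emph{cofinite} --- for essentially every $\gamma$ there exists some bad complex $t$; the entire content of the gamma trick is that those bad $t$ avoid the real segment $(0,1]$ for almost all $\gamma$. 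So eliminating $(\mathbf{z},t)$ yields the zero polynomial in $\gamma$ and proves nothing. The correct argument is precisely the real-arc analysis above (this is how \citeasnoun{sommese2005numerical} obtain ``probability one'' rather than ``all but finitely many''), and without it your proof of path nonsingularity on $(0,1]$ --- and hence the well-posedness of the Davidenko equation and the endpoint count --- does not go through.
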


The proof of the second part of this theorem is in \citeasnoun{sommese2005numerical}.

Once a solution $z^{*}\in\mathbb{R}^{J}$ is found, the equilibrium
social demographics $x_{j}$ in each location are obtained using the
matrix relationship:
\begin{equation}
\mathbf{x}^{*}=\Delta^{-1}\boldsymbol{\Psi}^{*}=\Delta^{-1}(\mathbf{z}^{*q})\label{eq:transformation_psi_into_x}
\end{equation}
with $\Delta=\left[e^{-\xi d_{jk}}\right]_{j,k=1,\dots,J}$.

Theorem \ref{thm:theorem_total_degree_homotopy} guarantees that we
will find all equilibria of the economy by starting from the set of
complex roots of the unit circle. 

\subsection{Enumerating the Equilibria of the City with Perfectly Elastic
Supply:\protect \\
A Homotopy $H_A$ from a City with Homogeneous Amenities}\label{sec:first_homotopy}

This method can be computionally infeasible and we develop a heuristic that empirically yields a large number of equilibria. 
Such approach is closer to the final system
of equations and thus generates shorter paths, a significant numerical
benefit for homotopies \cite{morgan1987homotopy}. The following
proposition is central to this paper's approach.
\begin{prop}
\label{prop:The-generalized-equilibria}\textbf{\emph{(City with homogeneous
locations) }}The generalized equilibria of the city with homogeneous
amenities $\mathcal{C}_{h}^{\infty}=(\mathbf{L},\mathbf{1}_{J},e\mathbf{1}_{J\times J},mc,\infty,\boldsymbol{\gamma},\alpha)$
are the $2^{J}$ real vectors:
\begin{equation}
\mathbf{z}^{*}=\left((-1)^{d_{1}^{k}}e^{-2}\,(-1)^{d_{2}^{k}}e^{-2}\,\dots\,(-1)^{d_{J}^{k}}e^{-2}\right)\label{eq:generalized_equilibria_polynomial}
\end{equation}
where $d_{j}^{k}\in\{0,1\}$ is the binary representation of the integer
$k$ between $0$ and $2^{J}-1$.
\end{prop}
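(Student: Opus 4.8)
The plan is to insert the parameters of $\mathcal{C}_h^\infty$ into the polynomial system (\ref{eq:polynomial_system}) and to exploit the symmetry this creates. Because $\mathbf{A}=\mathbf{1}_J$, the marginal costs are homogeneous, and every entry of $\Delta$ equals a single constant $\delta$, the coefficients $A_l\,\mathrm{mc}_l^{-\alpha}$ are all equal and $\Delta_{jk}$ is independent of $(j,k)$. Setting $S\equiv\sum_{l=1}^{J}z_l^{\,p}$, each of the $J$ polynomials in (\ref{eq:polynomial_system}) then reduces to the same product form
\[
S\,(z_j^{\,q}-\delta)=0,\qquad j=1,\dots,J,
\]
so I would first record that homogeneity completely decouples the system apart from the shared scalar factor $S$.

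The second step is to discard the spurious branch. Passing from the well-posed ratio equation (\ref{eq:zjq}) to the cleared polynomial (\ref{eq:polynomial_system}) multiplies through by the denominator $\sum_l A_l\,\mathrm{mc}_l^{-\alpha}z_l^{\,p}=S$, so every point of the hypersurface $\{S=0\}$ solves (\ref{eq:polynomial_system}) trivially while making (\ref{eq:zjq}) --- and hence the equilibrium map of Definition~\ref{def:(City-Equilibrium)-An} --- ill-posed. I would therefore restrict attention to $S\neq0$, on which the factorization forces $z_j^{\,q}=\delta$ simultaneously for all $j$. This turns the problem into $J$ decoupled univariate equations, which is the heart of the simplification.

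The third step is a root count. Since $p/q$ is in lowest terms by (\ref{eq:rational_dense_in_R}) and $q$ is even in the case at hand, $z_j^{\,q}=\delta>0$ has exactly the two real roots $\pm\,\delta^{1/q}$, and because $q$ is even $(\pm\delta^{1/q})^{q}=\delta$ holds for either sign; each coordinate may thus be chosen independently. This yields exactly $2^{J}$ real solution vectors, and writing the sign of coordinate $j$ as $(-1)^{d_j^k}$ with $(d_1^k,\dots,d_J^k)$ the binary expansion of $k\in\{0,\dots,2^{J}-1\}$ reproduces (\ref{eq:generalized_equilibria_polynomial}), the common magnitude $\delta^{1/q}$ being the value $e^{-2}$ recorded there. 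I would close by mapping back through $\boldsymbol{\Psi}^{*}=(\mathbf{z}^{*})^{q}=\delta\,\mathbf{1}_J$ and $\mathbf{q}^{*}=\mathbf{mc}$, noting that the even power erases the signs so that all $2^{J}$ vectors share the \emph{same} image $(\boldsymbol{\Psi}^{*},\mathbf{q}^{*})$: they collapse to one symmetric equilibrium in $(\boldsymbol{\Psi},\mathbf{q})$ coordinates while staying distinct as $\mathbf{z}$-space seeds for the homotopy $H_A$.

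The main obstacle is completeness rather than verification. One must rule out additional real generalized equilibria, and the delicate issue is precisely the degenerate, positive-dimensional component $\{S=0\}$: it has to be excluded on the well-posedness grounds above rather than contributing spurious roots, and one must check that none of the genuine $2^{J}$ points is silently lost at its intersection with $\{S=0\}$ (the sign-balanced vectors, where $\sum_l z_l^{\,p}$ can vanish for odd $p$). A secondary subtlety is that the change of variables $z\mapsto\Psi=z^{q}$ is not injective for even $q$, so the bookkeeping of signs in $\mathbf{z}$-space must be kept separate from the count of equilibria in $\boldsymbol{\Psi}$-space; fixing the parity of $q$ in the rational approximation of $\gamma^{1}$ is what makes the clean factor $2^{J}$ appear.
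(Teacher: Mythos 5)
Your proposal is correct and follows essentially the same route as the paper, which states Proposition~\ref{prop:The-generalized-equilibria} without a separate proof but performs exactly your reduction in its introductory example: homogeneity factors each equation of (\ref{eq:polynomial_system}) into $\left(\sum_{l} z_{l}^{p}\right)\left(z_{j}^{q}-\delta\right)=0$, the branch $\sum_{l} z_{l}^{p}=0$ is discarded as an artifact of clearing the denominator of (\ref{eq:zjq}), and independent real sign choices for even $q$ yield the $2^{J}$ vectors indexed by binary expansions. The two subtleties you flag --- the balanced-sign points lying on the spurious component when $p$ is odd, and the non-injectivity of $z\mapsto z^{q}$ collapsing all $2^{J}$ seeds to a single equilibrium in $(\boldsymbol{\Psi},\mathbf{q})$ coordinates --- are treated more explicitly by you than by the paper, which simply asserts the result.
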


Hence if we can transform by homotopy these equilibria into equilibria
of the city $\mathcal{C}^{\infty}$ with heterogeneous amenities,
we can enumerate a ``large'' number of equilibria in a way qualified
later in the seciton.
\begin{thm}
\emph{(}\textbf{\emph{Homotopy $H_A$ from the City with Homogeneous Locations)}} There exists a differentiable
homotopy between the city $\mathcal{C}^{\infty}$ and the corresponding
city $\mathcal{C}_{h}^{\infty}$ with equal amenities across locations
$A_{1}=A_{2}=\dots=A_{J}$, equal marginal costs $\text{mc}=1$ and
equal weights across locations. 
\end{thm}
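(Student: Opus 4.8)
The plan is to construct the homotopy explicitly and check the two clauses in the definition of a differentiable homotopy: agreement with the two cities' equilibrium systems at the endpoints $t=0$ and $t=1$, and joint differentiability of $H_A$ in $(\mathbf{z},t)$. Because both $\mathcal{C}^{\infty}$ and $\mathcal{C}_{h}^{\infty}$ have perfectly elastic supply, the market-clearing conditions~(\ref{eq:market_clearing_conditions}) pin prices at $q_{j}=\text{mc}_{j}$, so a generalized equilibrium of either city is determined entirely by the social-equilibrium polynomial system $P(\mathbf{z})=0$ of~(\ref{eq:polynomial_system}). It therefore suffices to build the homotopy on the $\mathbf{z}$-coordinates; the price coordinate rides along trivially through $q_{j}(t)=(1-t)+t\,\text{mc}_{j}$, which is smooth and strictly positive on $[0,1]$.

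Following the parameter interpolation already sketched in Section~\ref{sec:intuition}, I would interpolate each structural parameter affinely between its homogeneous value (at $t=0$) and its value in $\mathcal{C}^{\infty}$ (at $t=1$):
\[
A_{l}(t)=(1-t)+t\,A_{l},\qquad \text{mc}_{l}(t)=(1-t)+t\,\text{mc}_{l},\qquad \Delta_{jk}(t)=(1-t)\,\delta+t\,\Delta_{jk},
\]
where $\delta$ is the common weight of $\mathcal{C}_{h}^{\infty}$ (all entries of $\Delta$ equal). Substituting these into~(\ref{eq:polynomial_system}) gives the homotopy componentwise,
\[
H_A(\mathbf{z},t)_{j}=z_{j}^{q}\sum_{l=1}^{J}A_{l}(t)\,\text{mc}_{l}(t)^{-\alpha}z_{l}^{p}-\sum_{k=1}^{J}\Delta_{jk}(t)\,A_{k}(t)\,\text{mc}_{k}(t)^{-\alpha}z_{k}^{p}.
\]

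The verification then has two parts. At $t=1$ the coefficients are the heterogeneous ones, so $H_A(\cdot,1)$ is exactly $P_{\mathcal{C}^{\infty}}$; at $t=0$ they are the homogeneous ones, so $H_A(\cdot,0)$ is the system of $\mathcal{C}_{h}^{\infty}$, which factors as $(z_{j}^{q}-\delta)\sum_{l}z_{l}^{p}=0$ and whose real solutions are exactly the $2^{J}$ vectors of Proposition~\ref{prop:The-generalized-equilibria} (relabel $t\mapsto 1-t$ if one prefers the opposite orientation). For differentiability, each $\text{mc}_{l}(t)$ is a convex combination of the positive numbers $1$ and $\text{mc}_{l}$, hence strictly positive on $[0,1]$, so $\text{mc}_{l}(t)^{-\alpha}$ is $C^{\infty}$ in $t$, while every other factor is polynomial in $(\mathbf{z},t)$. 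Thus $H_A$ is $C^{\infty}$ jointly in $(\mathbf{z},t)$, which is precisely a differentiable homotopy and establishes the claim. If a genuinely polynomial homotopy is wanted --- e.g.\ for the second-order bifurcation analysis --- one can instead interpolate the combined coefficients $A_{l}\text{mc}_{l}^{-\alpha}$ directly, which makes $H_A$ a polynomial in $(\mathbf{z},t)$.

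The mathematics here is light: existence of a differentiable homotopy is essentially automatic for an affine interpolation, and the only real care needed is bookkeeping. One must keep the product collapse $B_{l}=A_{l}\text{mc}_{l}^{-\alpha}$ consistent with the map from $\mathbf{z}^{*}$ to $(\boldsymbol{\Psi}^{*},\mathbf{q}^{*})$ used in the preceding proposition, and fix $\delta$ so that the $t=0$ endpoint really is $\mathcal{C}_{h}^{\infty}$ with the solutions of Proposition~\ref{prop:The-generalized-equilibria}. I would emphasize that the theorem asserts only the \emph{existence} of such a homotopy. It does not assert that the $2^{J}$ paths emanating from the homogeneous equilibria remain nonsingular, nor that each reaches an equilibrium of $\mathcal{C}^{\infty}$; that regularity question is governed by invertibility of $\partial H_A/\partial\mathbf{z}$ and is exactly what the Equilibrium-Paths proposition and the singularity analysis of Section~\ref{sec:singularity_jacobian} address, which is why the statement is deliberately confined to existence.
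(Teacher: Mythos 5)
Your proposal is correct and follows essentially the same route as the paper's proof: the identical affine interpolation $A_{l}(t)=(1-t)+tA_{l}$, $\text{mc}_{l}(t)=(1-t)+t\,\text{mc}_{l}$, $\Delta_{jk}(t)=(1-t)\delta+t\Delta_{jk}$ substituted into the polynomial system~(\ref{eq:polynomial_system}), with the homogeneous city at $t=0$ and $\mathcal{C}^{\infty}$ at $t=1$. Your explicit endpoint and smoothness checks, and your remark that path regularity is a separate matter governed by invertibility of $\partial H_A/\partial\mathbf{z}$, mirror the paper's own proof, which likewise conditions the path-following step on that Jacobian being invertible.
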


\begin{proof}
Start with an equilibrium vector $(z_{j})$ of the city with homogeneous
amenities as in proposition (\ref{prop:The-generalized-equilibria}),
define the following homotopy towards the city with heterogeneous
amenities, distances, and marginal costs:
\begin{align*}
A_{j}(t) & =(1-t)+tA_{j}\\
mc(t) & =(1-t)+tmc\\
\Delta(t) & =(1-t)\left[\begin{array}{ccc}
e & \cdots & e\\
\vdots &  & \vdots\\
e & \cdots & e
\end{array}\right]+t\Delta
\end{align*}
And $z_{j}(t)$ is the solution to:
\begin{equation}
P(\mathbf{z},t)=\sum_{l=1}^{J}A_{l}(t)mc_{l}(t)^{-\alpha}z_{l}(t)^{p}z_{j}(t)^{q}-\sum_{k=1}^{J}\Delta_{jk}(t)A_{k}(t)mc_{k}(t)^{-\alpha}z_{k}(t)^{p}=0\label{eq:equilibrium_homogeneous_amenities}
\end{equation}
Starting from a solution $\mathbf{z}^{*}$ of (\ref{eq:equilibrium_homogeneous_amenities})
at $t=0$, the solution $\mathbf{z}$ at $t=1$ to the differential
equation 
\[
\frac{\partial Q}{\partial\mathbf{z}}\frac{d\mathbf{z}}{dt}+\frac{\partial Q}{\partial t}=0,\qquad\mathbf{z}(0)=\mathbf{z}^{*}
\]
is an equilibrium of the economy with heterogenous amenities, marginal
cost, and the chosen weighting matrix $\Delta$ as long as $\partial Q/\partial\mathbf{z}$
is invertible.
\end{proof}

\subsection{Equilibria of the City when Housing Supply has Finite Supply Elasticity\label{sec:Finite-Housing-Supply-Elasticity}}

We now consider an economy with a finite housing supply elasticity
$0<\eta<\infty$. We can find the equilibria of this economy by starting
with an arbitrarily chosen solution $\mathbf{z}^{*}(\eta=0)\in\mathbb{R}^{J}$
of the perfectly elastic city, $\eta=+\infty$, and then smoothly
transforming the equilibria of this economy to the desired elasticity
$\eta=\eta^{*}$. It is possible to follow a path of $\mathbf{z}^{*}(\eta)$
in an open set around an initial solution $\mathbf{z}^{*}(0)$ when the Jacobian is invertible as an application of
the Implicit Function Theorem. 

We perform a change of variable by expressing the set of equilibria
as a function of the inverse of the elasticity $\zeta=1/\eta$ to
obtain a path on the compact set $[0,1/\eta]$. Also express $\mathbf{z}$
in logs to avoid issues related to the domain of $\mathbf{z}$. When
the path of solutions $\log\mathbf{z}(\zeta)$ is regular, an equilibrium
of the city with elastic housing supply $\zeta=1/\eta$ is found by
integrating $d\log\mathbf{z}(\zeta)\in\mathbb{R}^{J}$ from $\zeta=0$
to $\zeta=1/\eta^{*}$. 

To transform solutions smoothly from $\zeta$ to $\zeta+d\zeta$,
write the two sets of $J$ equilibrium conditions that determine the
equilibrium at a given $\zeta>0$ and take the differentials. 

\subsubsection{Equilibrium Conditions}

First, $J$ market clearing prices that ensure the equality of the
demand and supply of housing in each location:
\begin{equation}
q_{j}=\left(\text{mc}_{j}\right)^{\frac{\eta}{\alpha+\eta}}\left[\frac{L_{1}}{c_{j}}\frac{A_{j}\Psi_{j}^{\gamma^{1}}}{\sum_{k}A_{k}q_{k}^{-\alpha}\Psi_{k}^{\gamma^{1}}}+\frac{L_{2}}{c_{j}}\frac{A_{j}\Psi_{j}^{\gamma^{2}}}{\sum_{k}A_{k}q_{k}^{-\alpha}\Psi_{k}^{\gamma^{2}}}\right]^{\frac{1}{\alpha+\eta}}\label{eq:market_clearing}
\end{equation}
and we see that $q_{j}\rightarrow\text{mc}_{j}$ when $\eta\rightarrow+\infty$.
Define $\zeta\equiv1/\eta$. This becomes:
\begin{equation}
q_{j}=\left(\text{mc}_{j}\right)^{\frac{1}{\alpha\zeta+1}}\left[\frac{L_{1}}{c_{j}}\frac{A_{j}\Psi_{j}^{\gamma^{1}}}{\sum_{k}A_{k}q_{k}^{-\alpha}\Psi_{k}^{\gamma^{1}}}+\frac{L_{2}}{c_{j}}\frac{A_{j}\Psi_{j}^{\gamma^{2}}}{\sum_{k}A_{k}q_{k}^{-\alpha}\Psi_{k}^{\gamma^{2}}}\right]^{\zeta\frac{1}{\alpha\zeta+1}}\label{eq:price_market_determined}
\end{equation}
Use the change of variables of the previous section $\Psi_{j}^{1/q}=z_{j}$,
so that $\Psi_{j}^{\gamma^{1}}=z_{j}^{p}$ and $\Psi_{j}^{\gamma^{2}}=z_{j}^{q\gamma^{2}}$.
This change of variables is not crucial here: it merely ensures that
we can use as a starting point the solution $\mathbf{z}^{*}(\zeta=0)\in\mathbb{R}^{J}$
found in the previous section. To make sure that solutions $z_{j}$
to the system are always positive, express the system in terms of
$\log z_{j}$. Given that the derivative of $q_{j}^{-\alpha}$ is
better expressed as $-\alpha\log q_{j}$, also write the system in
$\log q_{j}$. Write each element $\log z_{j}$ of the vector $\log\boldsymbol{z}$
as a function of the inverse elasticity $\zeta\in\mathbb{R}^{+}$,
so that we can transform the solutions smoothly from $\zeta=0$ ($\eta=\infty$)
to $\zeta>0$ ($\eta\in\mathbb{R}$).

We obtain a first set of $J$ non-linear equations in $2J+1$ parameters.
Log prices are a function of log $\mathbf{z}$, log prices, and the
inverse elasticity:
\begin{equation}
Q(\log\mathbf{z},\log\mathbf{q};\zeta) = \log\mathbf{q} \label{eq:equilibrium_Q}
\end{equation}

The second set of equilibrium relationships is obtained as in the
previous section. The $J$ polynomials in $J$ unknowns $z_{j}$ ensure
that the social equilibrium conditions are satisfied.
\begin{equation}
z_{j}(\zeta)^{q}\sum_{l=1}^{J}A_{l}q_{l}(\zeta)^{-\alpha}z_{l}(\zeta)^{p}-\sum_{k=1}^{J}e^{-\xi d_{jk}} A_{k}q_{k}^{-\alpha}(\zeta)\left(z_{k}(\zeta)\right)^{p}=0
\end{equation}
This is a second set of $J$ equations in $2J+1$ parameters:
\begin{equation}
F(\log\mathbf{z},\log\mathbf{q};\zeta)=0
\end{equation}
The equilibrium of the economy is thus implicitly defined by a system
of $2J$ non-linear equations:
\begin{equation}
\left\{
\begin{array}{ll}
Q(\log\mathbf{z},\log\mathbf{q};\zeta) & =\log\mathbf{q}\\
F(\log\mathbf{z},\log\mathbf{q};\zeta) & =0
\end{array}
\right. \label{eq:system_eta_finite}
\end{equation}
where the inverse elasticity $\zeta$ varies in $(0,1/\eta]$. 

\subsubsection{Homotopy $H_\eta$ Along the Inverse of Supply Elasticity\label{subsec:Homotopy}}

Section~\ref{subsec:Perfectly-Elastic-Housing-Supply-Exact-Method}
has solved for $\mathbf{z}$ for $\zeta=0$ i.e. for $\eta=\infty$.
The key observation here is that the system provides a linear relationship
between $\log\mathbf{z}(\zeta)$ and the set of possible $\log\mathbf{z}(\zeta+d\zeta)$.
When the relationship is unique, i.e. the Jacobian is invertible:
\begin{equation}
\mathbf{z}(\zeta)=\mathbf{z}(0)e^{\int_{0}^{\zeta}\frac{d\log z}{d\zeta}d\zeta}\label{eq:integration_of_z}
\end{equation}
We obtain $\frac{d\log\mathbf{z}}{d\zeta}$ by taking the total derivative
of the equilibrium system of $2J$ equations w.r.t. $\zeta$. For
a given value of $\zeta$, take the Jacobians and gradients of $Q$
and $F$ w.r.t $\log\mathbf{z}$, $\log\mathbf{q}$, and $\zeta$:
\begin{align}
\frac{\partial Q}{\partial\log\mathbf{z}}\frac{d\log\mathbf{z}}{d\zeta}+\frac{\partial Q}{\partial\log\mathbf{q}}\frac{d\log\mathbf{q}}{d\zeta}+\frac{\partial Q}{\partial\zeta} & =\frac{\partial\log\mathbf{q}}{\partial\zeta}\label{eq:jacobian_system}\\
\frac{\partial F}{\partial\log\mathbf{z}}\frac{d\log\mathbf{z}}{d\zeta}+\frac{\partial F}{\partial\log\mathbf{q}}\frac{d\log\mathbf{q}}{d\zeta}+\frac{\partial F}{\partial\zeta} & =0\nonumber 
\end{align}
This is a differential equation in $\log \mathbf{y}=(\log\mathbf{z},\log\mathbf{q})$
with initial conditions at $\zeta=0$ the equilibrium found in the
previous section for the perfectly elastic city. We can simply write
this differential equation as:
\begin{equation}
\Gamma\frac{d\log \mathbf{y}}{d\zeta}=-\left( \begin{array}{c} \frac{\partial Q}{\partial\zeta} \\ \frac{\partial F}{\partial\zeta} \end{array} \right),\qquad \log \mathbf{y(0)}=(\log\mathbf{z}(0),\log\mathbf{mc})\label{eq:differential_equation}
\end{equation}
We get:
\begin{prop}
\textbf{\emph{(Equilibria of the City with Elastic Housing Supply using Homotopy $H_\eta$)}}\label{prop:homotopy_H_eta}
Consider an equilibrium $(\mathbf{x}^{*},\mathbf{q}^{*})$ of the
city with perfectly elastic housing supply. Consider a solution $\log \mathbf{y}(\zeta)=(\log\mathbf{z}(\zeta),\log\mathbf{q}(\zeta))$
to the differential equation with initial condition $(\log\mathbf{z}(0),\log\mathbf{q}(0))=(\log(\Delta\mathbf{x}^{*})^{1/q},\log\mathbf{mc})$ an equilibrium of the city with \emph{perfectly} elastic housing supply.
Then the pair $(\mathbf{x}^{*}(\zeta),\mathbf{q}^{*}(\zeta))=(\Delta^{-1}\mathbf{z}(\zeta)^{q},\mathbf{q}(\zeta))$
is an equilibrium of the city with \emph{elastic} housing supply.

In the case when $\left(1-\frac{\partial Q}{\partial\log\mathbf{q}}\right)$
and $\left[\frac{\partial F}{\partial\log\mathbf{z}}+\frac{\partial F}{\partial\log\mathbf{q}}\left(1-\frac{\partial Q}{\partial\log\mathbf{q}}\right)^{-1}\frac{\partial Q}{\partial\log\mathbf{z}}\right]$
are invertible,\footnote{This is related to the formula for the inverse of the partitioned matrix. The second quantity is called the Schur complement \cite{zhang2006schur} of the block  $\left(1-\frac{\partial Q}{\partial\log\mathbf{q}}\right)$ of the Jacobian.} this yields the unique integrated solution:
\begin{equation}
\mathbf{z}(\zeta)=\mathbf{z}(0)\exp^{-\int_{0}^{\zeta}\left[\frac{\partial F}{\partial\log\mathbf{z}}+\frac{\partial F}{\partial\log\mathbf{q}}\left(1-\frac{\partial Q}{\partial\log\mathbf{q}}\right)^{-1}\frac{\partial Q}{\partial\log\mathbf{z}}\right]^{-1}\left[\frac{\partial F}{\partial\zeta}+\frac{\partial F}{\partial\log\mathbf{q}}\left(1-\frac{\partial Q}{\partial\log\mathbf{q}}\right)^{-1}\frac{\partial Q}{\partial\zeta}\right]ds}\label{eq:PDE_z}
\end{equation}
When one of the two matrix inverses are not unique, differential equation (\ref{eq:differential_equation}) may have multiple
solutions. This is done in the discussion and extension section. 
\end{prop}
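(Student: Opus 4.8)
The plan is to read the equilibrium system (\ref{eq:system_eta_finite}) as the level set $G(\log\mathbf{z},\log\mathbf{q},\zeta)=0$ of the map $G=(Q-\log\mathbf{q},\,F)^{\top}:\mathbb{R}^{2J}\times\mathbb{R}\to\mathbb{R}^{2J}$, and to observe that the ODE (\ref{eq:differential_equation}) is exactly the total-derivative condition $\tfrac{d}{d\zeta}G(\log\mathbf{z}(\zeta),\log\mathbf{q}(\zeta),\zeta)=0$ produced by the chain rule in (\ref{eq:jacobian_system}). The first part then follows by an invariance argument identical in spirit to the earlier Proposition on equilibrium paths. The initial point $(\log\mathbf{z}(0),\log\mathbf{q}(0))=(\log(\Delta\mathbf{x}^{*})^{1/q},\log\mathbf{mc})$ satisfies $G(\cdot,0)=0$ precisely because $(\mathbf{x}^{*},\mathbf{q}^{*})$ is an equilibrium of the perfectly elastic city $\mathcal{C}^{\infty}$ (recall $\mathbf{q}^{*}=\mathbf{mc}$ and $\boldsymbol{\Psi}^{*}=\Delta\mathbf{x}^{*}=\mathbf{z}(0)^{q}$). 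Since the path solves $\tfrac{d}{d\zeta}G=0$, the quantity $G(\log\mathbf{z}(\zeta),\log\mathbf{q}(\zeta),\zeta)$ is constant in $\zeta$ and hence equals $0$ along the whole path; evaluating at $\zeta=1/\eta^{*}$ shows both blocks of (\ref{eq:system_eta_finite}) hold, so $(\mathbf{x}^{*}(\zeta),\mathbf{q}^{*}(\zeta))=(\Delta^{-1}\mathbf{z}(\zeta)^{q},\mathbf{q}(\zeta))$ is an equilibrium of the city with elastic supply. Existence and uniqueness of the solution to (\ref{eq:differential_equation}) come, as in the earlier proposition, from Picard's theorem: each entry of the right-hand side is a rational function of the Jacobian entries, hence locally Lipschitz on any compact set on which the Jacobian is invertible, so \citeasnoun{kolmogorov1975introductory} applies.

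For the second part I would solve the linear system (\ref{eq:jacobian_system}) explicitly by block elimination. Moving the $d\log\mathbf{q}/d\zeta$ term in the first row of (\ref{eq:jacobian_system}) to the left, the system reads
\[
\Gamma \begin{pmatrix}\frac{d\log\mathbf{z}}{d\zeta}\\ \frac{d\log\mathbf{q}}{d\zeta}\end{pmatrix} = -\begin{pmatrix}\partial Q/\partial\zeta\\ \partial F/\partial\zeta\end{pmatrix},\qquad \Gamma=\begin{pmatrix}\frac{\partial Q}{\partial\log\mathbf{z}} & \frac{\partial Q}{\partial\log\mathbf{q}}-I\\ \frac{\partial F}{\partial\log\mathbf{z}} & \frac{\partial F}{\partial\log\mathbf{q}}\end{pmatrix}.
\]
From the first block row, invertibility of $\bigl(I-\partial Q/\partial\log\mathbf{q}\bigr)$ lets me solve for $d\log\mathbf{q}/d\zeta$ in terms of $d\log\mathbf{z}/d\zeta$; substituting into the second block row and collecting the coefficient of $d\log\mathbf{z}/d\zeta$ produces exactly the Schur complement $\bigl[\partial F/\partial\log\mathbf{z}+\partial F/\partial\log\mathbf{q}\,(I-\partial Q/\partial\log\mathbf{q})^{-1}\partial Q/\partial\log\mathbf{z}\bigr]$. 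Inverting it (the second hypothesis) yields the stated closed form for $d\log\mathbf{z}/d\zeta$, i.e. the integrand of (\ref{eq:PDE_z}); integrating from $0$ to $\zeta$ and exponentiating gives (\ref{eq:PDE_z}). The sign identity $\bigl(\partial Q/\partial\log\mathbf{q}-I\bigr)^{-1}=-\bigl(I-\partial Q/\partial\log\mathbf{q}\bigr)^{-1}$ is what converts the two minus signs in the system into the plus signs inside the brackets of (\ref{eq:PDE_z}). That the two stated invertibility conditions are jointly equivalent to invertibility of $\Gamma$ is the standard partitioned-matrix identity (the Schur complement, \cite{zhang2006schur}); this guarantees the right-hand side of the ODE is single-valued, and uniqueness of the integrated path then follows from the Picard argument above.

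The main obstacle, and the reason for the two invertibility hypotheses, is regularity of the path, and two points must be checked. First, the right-hand side must extend to $\zeta=0$, where $\eta=\infty$: the only $\zeta$-dependence that could blow up enters through the exponents $1/(\alpha\zeta+1)$ and $\zeta/(\alpha\zeta+1)$ in (\ref{eq:price_market_determined}), and both these functions and their $\zeta$-derivatives are analytic and bounded at $\zeta=0$ (the latter derivative equals $1/(\alpha\zeta+1)^{2}\to1$), so the differentials do not diverge and the ODE can be started at $\zeta=0$ by continuity. Second, and more seriously, the path could reach a value of $\zeta$ at which $\Gamma$ (or the block $I-\partial Q/\partial\log\mathbf{q}$, or its Schur complement) becomes singular; there the system (\ref{eq:jacobian_system}) no longer pins down a single tangent direction, the solution may branch, and the closed form (\ref{eq:PDE_z}) fails. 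The proposition sidesteps this by hypothesis, assuming invertibility along the path, and the genuinely hard case of crossing such singularities is precisely the bifurcation analysis deferred to Section~\ref{sec:bifurcations}, where a second-order Taylor expansion at the singular point selects the branching paths.
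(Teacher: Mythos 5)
Your proposal is correct and follows essentially the same route as the paper, whose own proof is exactly the derivation preceding the proposition: total differentiation of the system (\ref{eq:system_eta_finite}) to obtain (\ref{eq:jacobian_system}), block elimination of $d\log\mathbf{q}/d\zeta$ yielding the Schur complement and the integrated form (\ref{eq:PDE_z}), the Picard/Lipschitz argument for uniqueness, the observation that $Q$ and its differentials have well-defined limits as $\zeta\rightarrow 0$, and deferral of singular Jacobians to Section~\ref{sec:bifurcations}. One trivial caveat: your remark that the two invertibility hypotheses are \emph{jointly equivalent} to invertibility of $\Gamma$ holds only in one direction ($\Gamma$ can be invertible while $I-\frac{\partial Q}{\partial\log\mathbf{q}}$ is singular), but nothing in your argument relies on the converse.
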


To check for the existence of bifurcation points, we compute the smallest
absolute value of the eigenvalues of each matrix as well as the condition number
 along the path $\zeta\in[0,\frac{1}{\eta}]$.\footnote{The empirical application presents the case of an economy with a singular
point, where the smallest eigenvalue gets closer to zero as $\zeta$
increases.}

In practice, the closed form derivation of Jacobians and gradients
in the system in $\mathbf{z}$ and $\mathbf{q}$ (\ref{eq:jacobian_system}) may be time consuming.
We use forward mode automatic differentiation to obtain these Jacobians
and gradients~\cite{wengert1964simple}. The market clearing conditions $Q$ and their partial derivatives $\partial Q/\partial \log \mathbf{q}$ have well-defined limits as $\zeta\rightarrow 0$ from above.  

\bigskip{}

Given that the set $\mathcal{Z}(0)$ of solutions $\mathbf{z}(0)$
is known from the previous procedure, this procedure can be applied
to any solution in $\mathcal{Z}(0)$ to generate a set $\mathcal{Z}(\zeta)$
of solutions of the elastic city. Notice that some initial conditions
$\mathbf{z}(0)$ with $\mathbf{x}=\Delta^{-1}\mathbf{z}(0)^{q}\notin[0,1]^{J}$
may not be equilibria of the economy with perfectly elastic housing
supply but may lead to equilibria for the elastic city $\mathbf{z}(\zeta)$. 

\bigskip{}

\subsection{Recap: Practical Implementation}

This provides us with a three-step approach to the computation of
equilibria:
\begin{itemize}
\item In the first step, calculate the exact set of solutions $\mathbf{z}(0)\in\mathbb{\mathbb{C}}^{J}$
for the system of polynomials~(\ref{eq:polynomial_system}). Denote
this set by $\mathcal{Z}(0)$.
\item In the second step, for each element $\mathbf{z}(0)\in\mathcal{Z}(0)$,
solve the differential equation~(\ref{eq:PDE_z}) to find $z(\zeta)$
numerically using two-step Runge-Kutta or other predictor-corrector
methods. Intuitively, the solution $\mathbf{z}(0)$ can be ``gradually
adjusted'' to reach $\mathbf{z}(\zeta)$ for each solution $\mathbf{z}(0)$. At each point along the path, compute the rank of the Jacobian of the homotopy $H$ to detect singularities.
\item Finally, transform the equilibria $\mathbf{z}(\eta)$ into $\Psi(\eta)=\mathbf{z}(\eta)^{q}$
and then the equilibrium population of type 1 as $\mathbf{x}(\eta)=\Delta^{-1}\Psi(\eta)$. Keep those proper equilibria, i.e. for which $\mathbf{x}(\eta)$ are strictly positive.
\end{itemize}

\section{Examples}\label{sec:examples}

We present examples using synthetic data before proceeding in the
next section with analysis on historical data. Cities differ in the
following dimensions. Amenities are log-normally distributed $\log(A)\sim N(\mu,\sigma^{2})$.

\begin{table}[t]
\caption{Parameters of the City}

\bigskip

\emph{This table describes the parameters used in the examples of Section~\ref{sec:examples}. These parameters are those of the city $\mathcal{C}$, introduced in Definition~\ref{def:(City-Equilibrium)-An}, where the amenity vector $\mathbf{A}$ is drawn from a log normal distribution with mean $\mu$ and variance $\sigma^2$.}

\bigskip

\centering{}%
\begin{tabular}{lc}
\textbf{Parameter} & \textbf{Symbol}\tabularnewline
\hline 
Number of neighborhoods & $J$\tabularnewline
Group 1 residents' preference for group 1 neighbors & $\gamma^{1}$\tabularnewline
Group 2 residents' preference for group 2 neighbors & $\gamma^{2}$\tabularnewline
Mean of log normal Amenities & $\mu$\tabularnewline
Standard deviation of log normal Amenities & $\sigma$\tabularnewline
Marginal cost of floor surface & $mc$\tabularnewline
Total population & $L$\tabularnewline
Total population of group $g$ & $L_{g}$\tabularnewline
Share of housing in budget (Elasticity of demand) & $\alpha$\tabularnewline
Scope of social interactions & $\xi$\tabularnewline
Supply elasticity of floor surface & $\eta$\tabularnewline
Constant of the supply curve & $c$\tabularnewline
\end{tabular}
\end{table}

Across examples we vary five parameters the number of neighborhoods
$J$, households' preferences ($\gamma^{1}$, $\gamma^{2}$), the
standard deviation of the log distribution of amenities $\sigma$,
and the scope $\xi$ of social interactions. We keep constant the
mean of log normal amenities to 0, the marginal cost of floor surface
to $mc=1$, non-college educated population to 80\% of $J$, college
educated population to 20\% of $J$, the share of housing in the budget
to $\alpha=0.3$, the elasticity of floor surface supply is either
$\infty$ (perfectly elastic) or 0.8 (Elasticity of supply for San
Francisco from \citeasnoun{saiz2010geographic}), and the constant of the supply curve is $c_j=1$.

\subsection{Equilibria with $J$ locations and Social Interactions}

Table~\ref{tab:Finding-Equilibria-by-Method-1} presents the equilibria
found for cities with heterogeneous amenities and social preferences
but a constant price. The next section presents results with $\eta<\infty$. 

We estimate the equilibria for 2,220 cities with a range of structural parameters and estimate the number and location
of equilibria for each location. Results are available in the data archive. Table~\ref{tab:Finding-Equilibria-by-Method-1} presents a subset of 15 cities, with social preferences $\gamma^1$ ranging between 0.2 and 5.0, the scope of social interactions varies between 0.0 and 4.0, the standard deviation of log amenities between 0.1 and 1.5. Overall, in the table, the number of equilibria can be as high as 127 and as small as 1. Unsurprisingly, this obtains in the case of a larger number of locations, for stronger social preferences ($\gamma^1=4.0$) and for a low dispersion of amenities; in other words there is a large number of equilibria in cases where the geography does not pin down the equilibrium, in the spirit of \citeasnoun{beckmann1987location}.

\begin{figure}

\caption{Number of Equilibria and the City's Parameters}

\footnotesize
\emph{These plots present the outcome of the determination of multiple equilibria for a set of 2,223 cities with a range of structural parameters. The scope $\xi$ measures how much residents care about neighboring demographics: a high $\xi$ suggests that residents weigh their close neighbors more heavily than their distant neighbors. A low $\xi$ suggests that residents care about neighbors far and close. These estimations are done for the city with perfectly elastic supply of floor surface.}

\bigskip

\centering
\begin{subfigure}[b]{0.45\textwidth}
\centering
\caption{\footnotesize Scope $\xi$}
\includegraphics[scale=0.35]{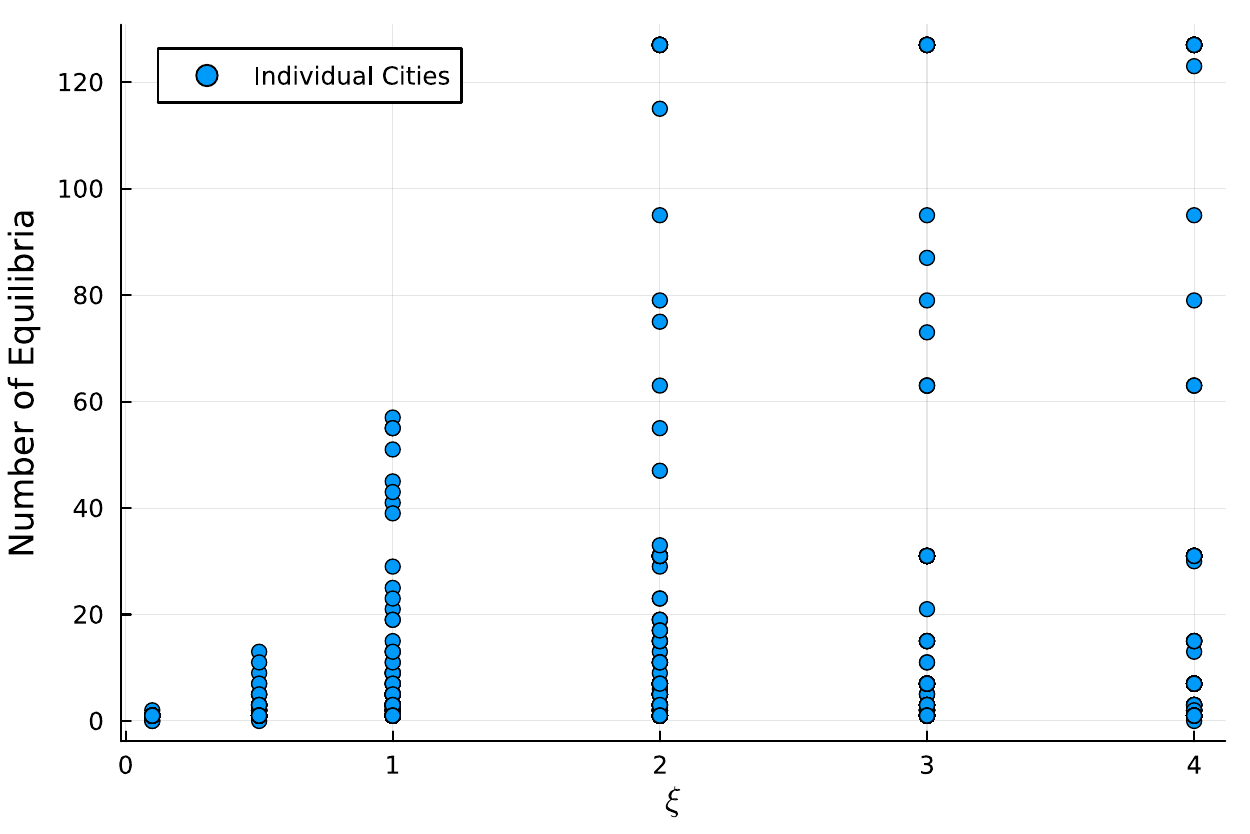}    
\end{subfigure}
\hfill
\begin{subfigure}[b]{0.45\textwidth}
\centering
\caption{\footnotesize Social Preferences $\gamma^1$}
\includegraphics[scale=0.35]{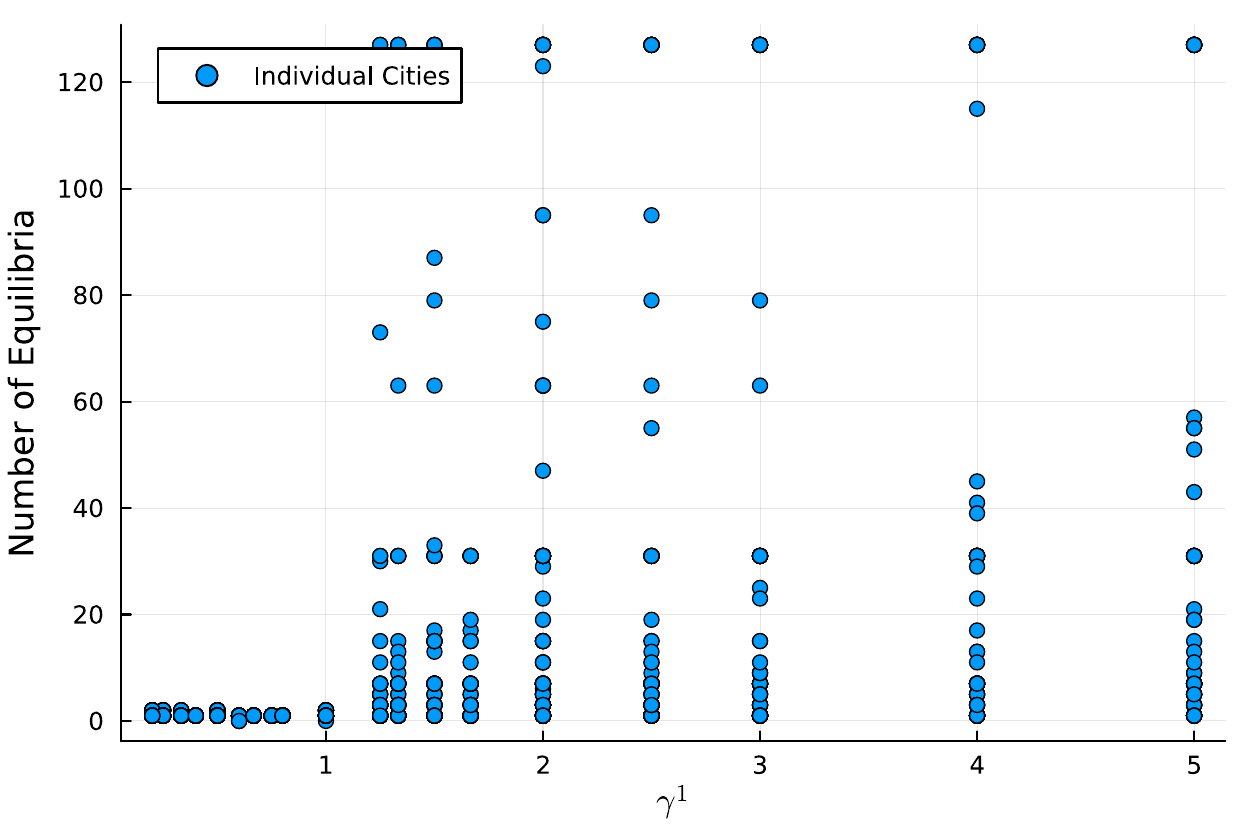}    
\end{subfigure}

\end{figure}

\subsection{Equilibria with Finite Elasticity of Housing Supply}

For each of the 2,223 cities of the previous section, we then perform a homotopy continuation 
from the infinitely elastic ($\eta=\infty$) city to the elastic city ($\eta<\infty$).
We use a range of $\eta\in(0,0.67]$, where 0.67 is the elasticity of supply for the MSA of San
Francisco in \cite{saiz2010geographic}, and we include $\eta$s close to the perfectly
inelastic city $\eta\rightarrow 0$.

Price responses are pecuniary externalities that are intuitively a source of \emph{strategic substitutability} (e.g. as in Cournot games) as opposed to social preferences $\gamma^1$, which are a source of strategic complementarities (e.g. in coordination games). We thus  expect to observe fewer equilibria as we constrain the supply of space in each location by lowering $\eta$. 

This is what is displayed on Figure~\ref{fig:systematic_finite_elasticity}. The vertical axis is the average number of equilibria for each value of $\gamma^w=p/q$. This figure synthesizes the results of the homotopy $H_\eta$, where each starting is a city with heterogeneous amenities and perfectly elastic housing supply. 

\begin{figure}
    \caption{Systematic Analysis of the 2,223 Cities, from the Perfectly Elastic Supply Case to Finite Elasticity}
    \emph{This plot presents the average number of equilibria for different values of the preference for white neighbors ($\gamma^w=p/q$), in the perfectly elastic case ($\eta = \infty$) and in the case of finite elasticity. In this latter case $\eta=0.67$, which is the supply elasticity of the San Francisco Bay area in \citeasnoun{saiz2010geographic}.}
    \bigskip
    \label{fig:systematic_finite_elasticity}
    \begin{center}
    \includegraphics[scale=0.4]{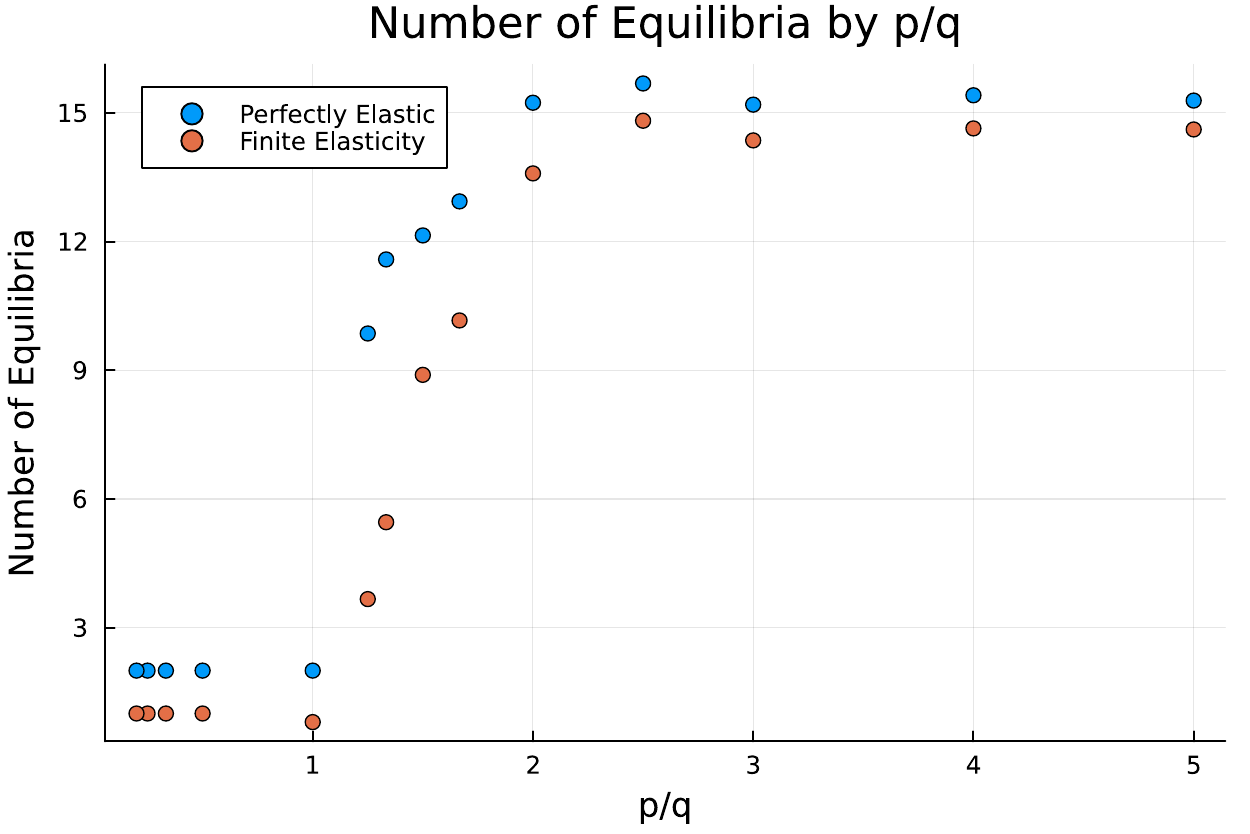}
    \end{center}
\end{figure}

\clearpage 
\pagebreak

\section{Larger Cities with an Application to Chicago}\label{sec:empirics}

The approaches presented in the paper guarantee the enumeration of equilibria yet may be computationally challenging. 

Here we develop a tractable approach for large cities, using the concepts of \emph{near} and \emph{far} interactions. Households care both about the characteristics of their neighbors in the immediate vicinity, but also about the characteristics of households in nearby communities. We apply this approach to the case of the City of Chicago's spatial equilibrium in racial composition across communities using a nested choice approach which is computationally tractable and enumerates the city's equilibria. Consider the choice of households among 353 neighborhoods grouped in 77 typical communities of Chicago, including Humboldt Park, Irving Park, the Loop, South Chicago, and 9 regions, as depicted on Figure~\ref{fig:chicago_map}.

\begin{figure}
    \caption{Chicago's 77 Communities and their Neighborhoods}
    \label{fig:chicago_map}
    \begin{center}
    \includegraphics{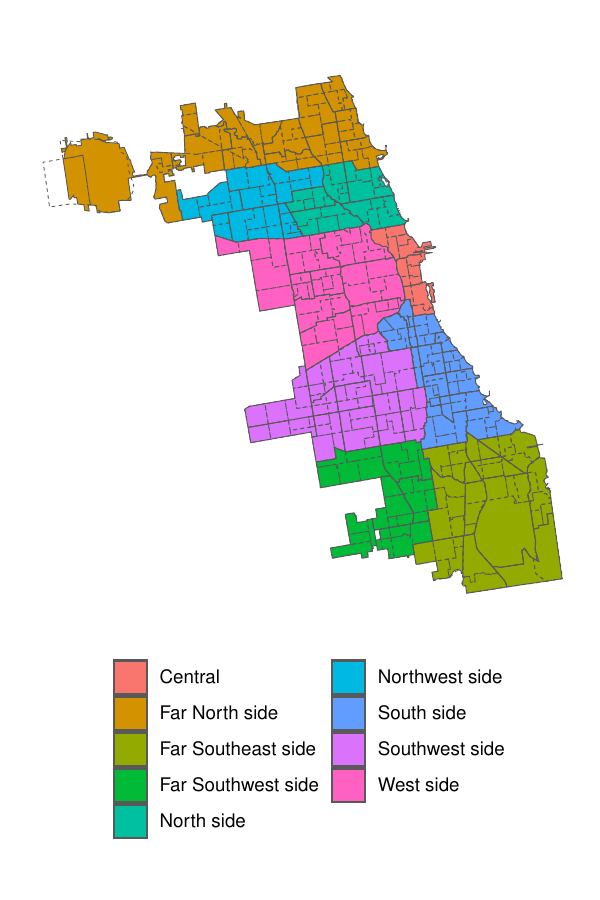}
    \end{center}
    \emph{Solid line: 77 community area boundaries, designed by the Social Science Research Committee at the University of Chicago. Dotted lines: Neighborhoods from 2010 Census tracts. Colored polygons: Regions.}
\end{figure}

\begin{figure}
    \caption{Estimation of the Model's Structural Parameters}
    \label{fig:scope_estimation}
    
    \emph{Estimation of amenities $A_i$ and social preferences $\gamma^w$, $\gamma^b$ using a regression of log population on log nearby neighborhood demographics, Panel (b). The scope parameter $\xi$, also called rate of decay in \citeasnoun{redding2017quantitative} is the maximum of the fit statistic in Panel (a).}

\bigskip
    
    \begin{center}
    (a)~$R^2$ against Scope Parameter $\xi$
    
    \bigskip
    
    \includegraphics[scale=0.5]{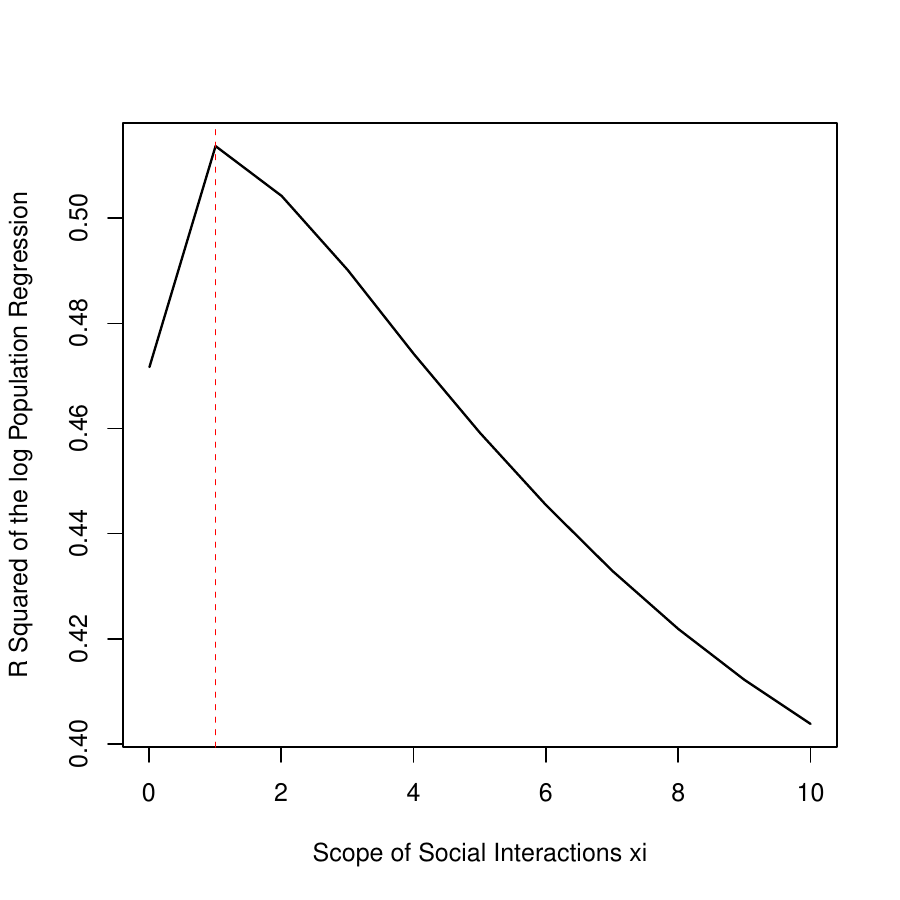}
    \end{center}

    \begin{center}
    (b)~Estimation of Amenities and Social Preferences $\gamma^b$, $\gamma^w$
    
    \bigskip
    
    \begin{tabular}{lc}
   \toprule
                                & $\log N^r_{jt}$\\  
   \midrule
   $r=$Black  $\times\log(\Psi_{jt-1})$, coefficient $\gamma^b$     & $-$0.4391$^{**}$\\   
                                              & (0.1375)\\   
   $r=$White $\times\log(\Psi_{jt-1})$, coefficient $\gamma^w$    & $+$2.003$^{***}$\\   
                                              & (0.2859)\\   
   \midrule
   Neighborhood  Fixed-effects  $j$                           & Yes\\  
   Year          Fixed-effects  $t$                           & Yes\\  
   \midrule
   Observations (Neighborhood $\times$ Year) & 4,354\\  
   R$^2$                                      & 0.51371\\  
   Within R$^2$                               & 0.40499\\  
   \bottomrule
   \multicolumn{2}{l}{\emph{Double-Clustered (Neighborhood $\times$ Year) standard-errors in parentheses}}\\
   \multicolumn{2}{l}{\emph{Signif. Codes: ***: 0.01, **: 0.05, *: 0.1}}\\
\end{tabular}

    \end{center}
\end{figure}

\begin{figure}
    \caption{Multiple Equilibria of Humboldt Park (Homotopy $H_A$, perfectly elastic housing supply)}
    \label{fig:equilibria_one_community}
    \begin{center}
    \includegraphics[scale=0.25]{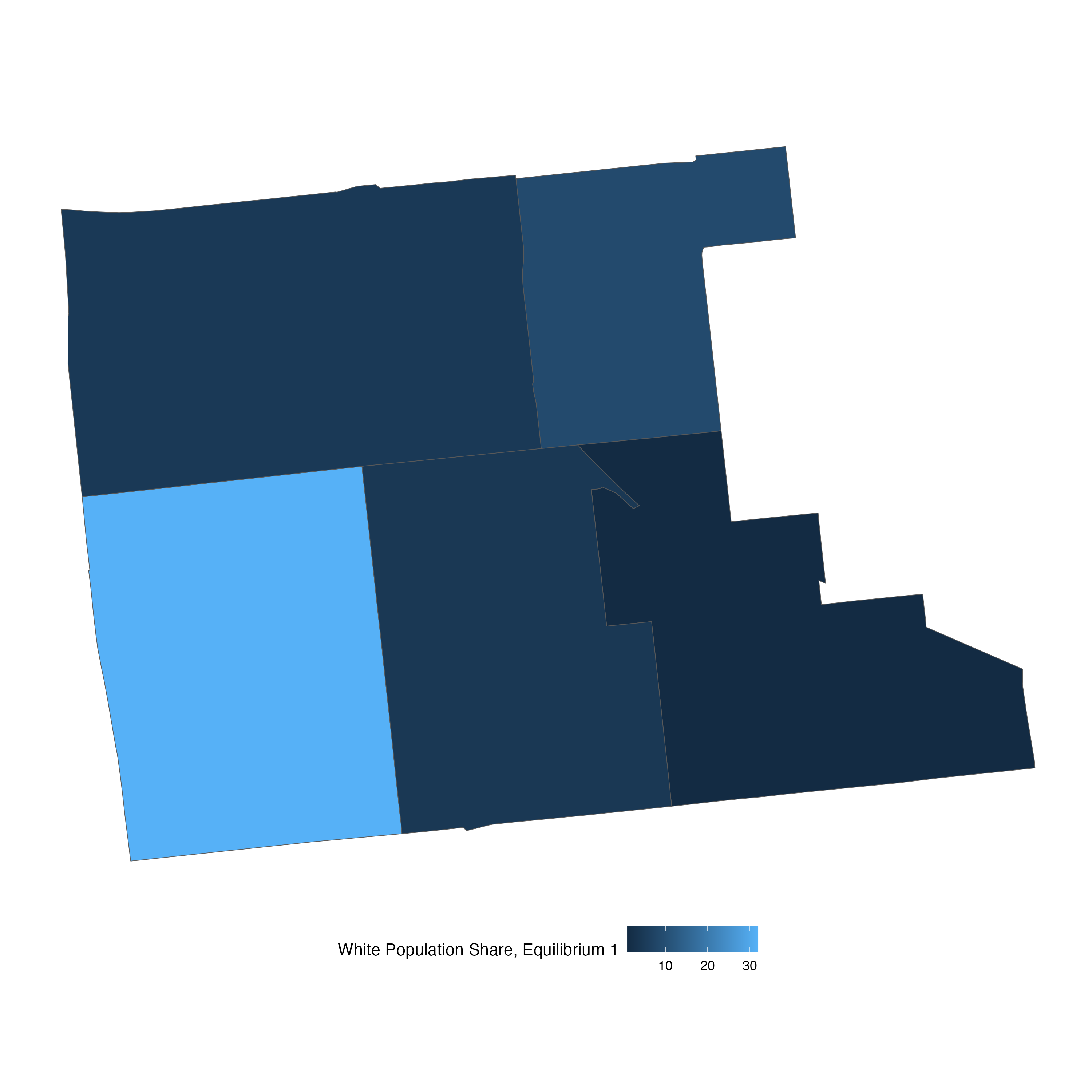}
    \includegraphics[scale=0.25]{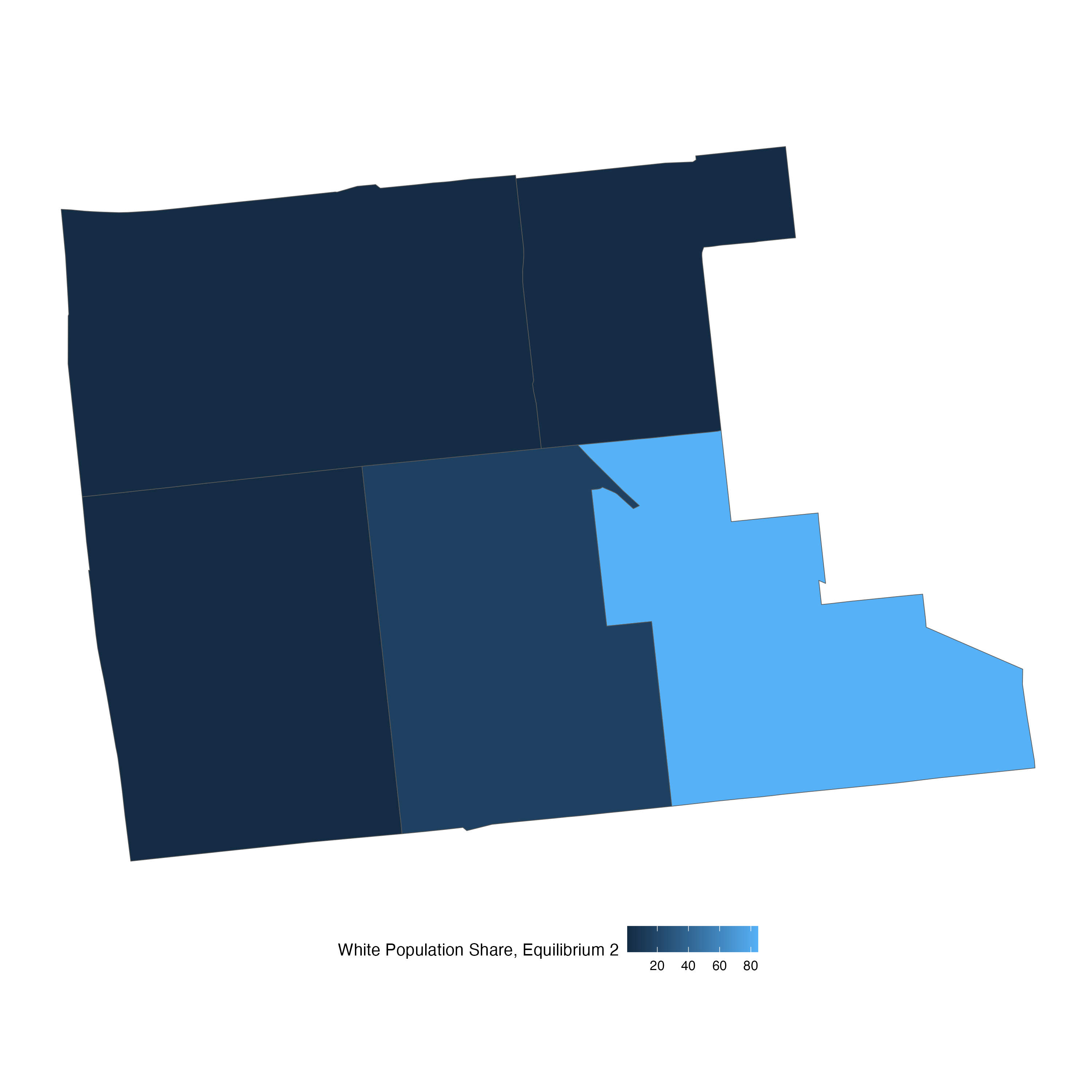}
    \includegraphics[scale=0.25]{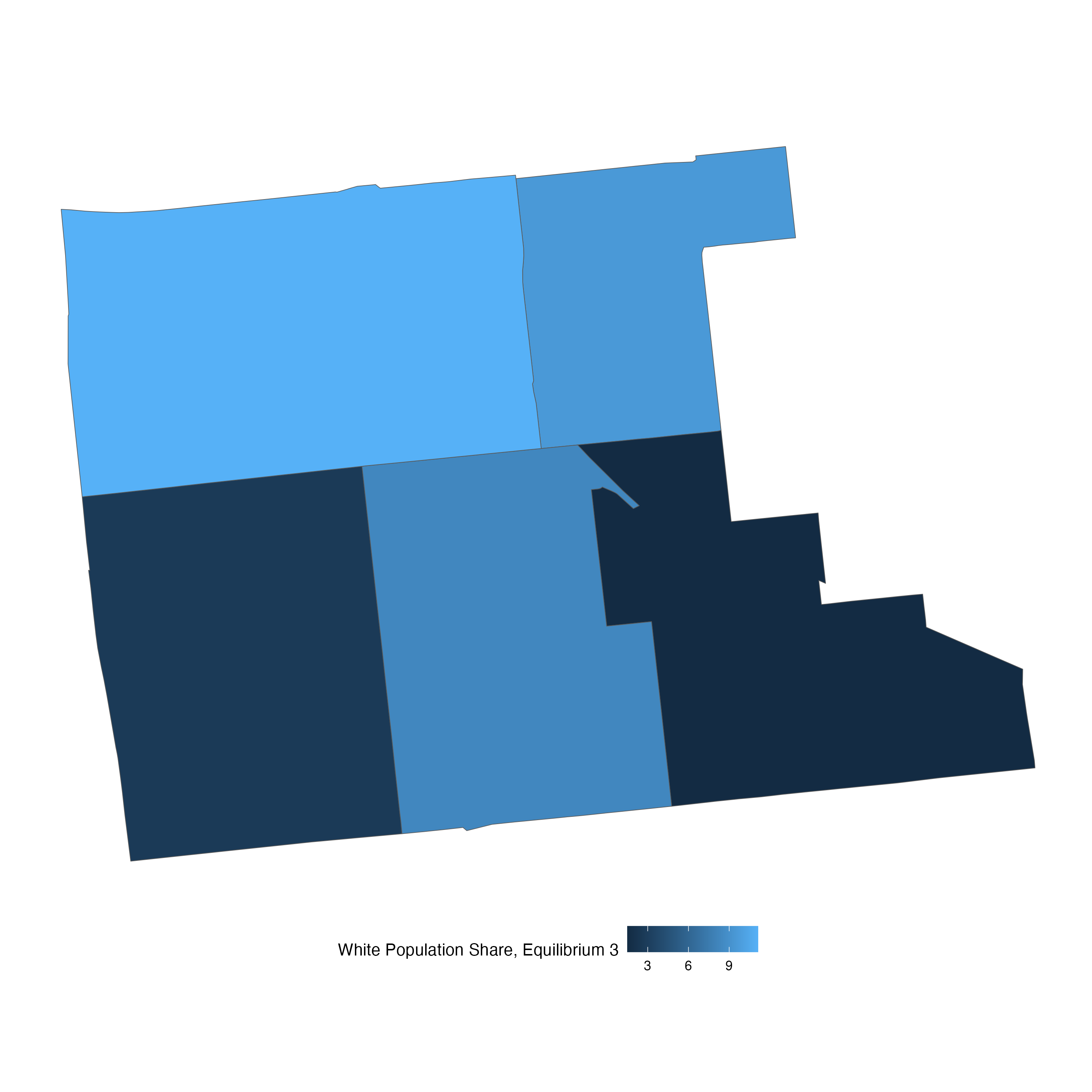}
    \includegraphics[scale=0.25]{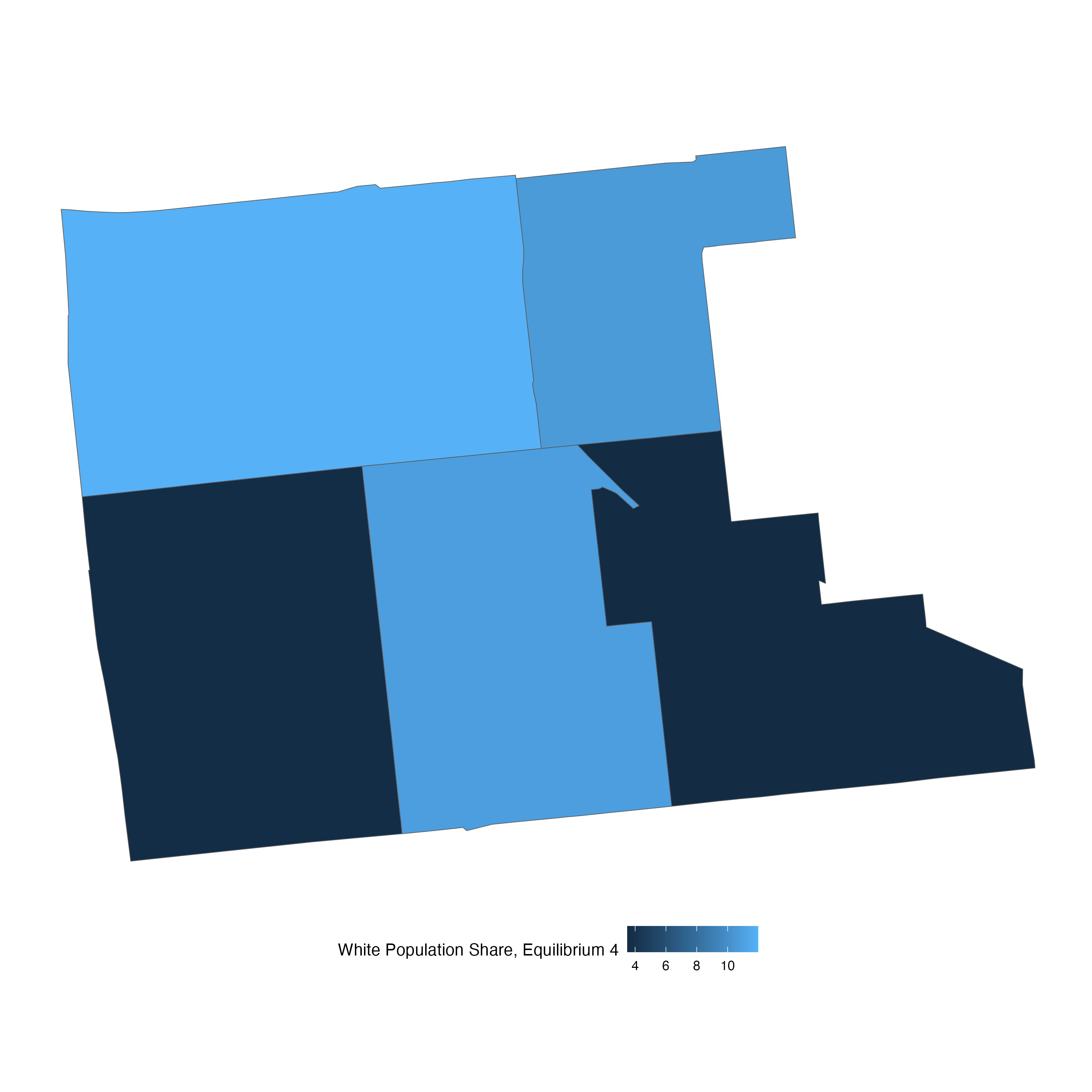}
    \includegraphics[scale=0.25]{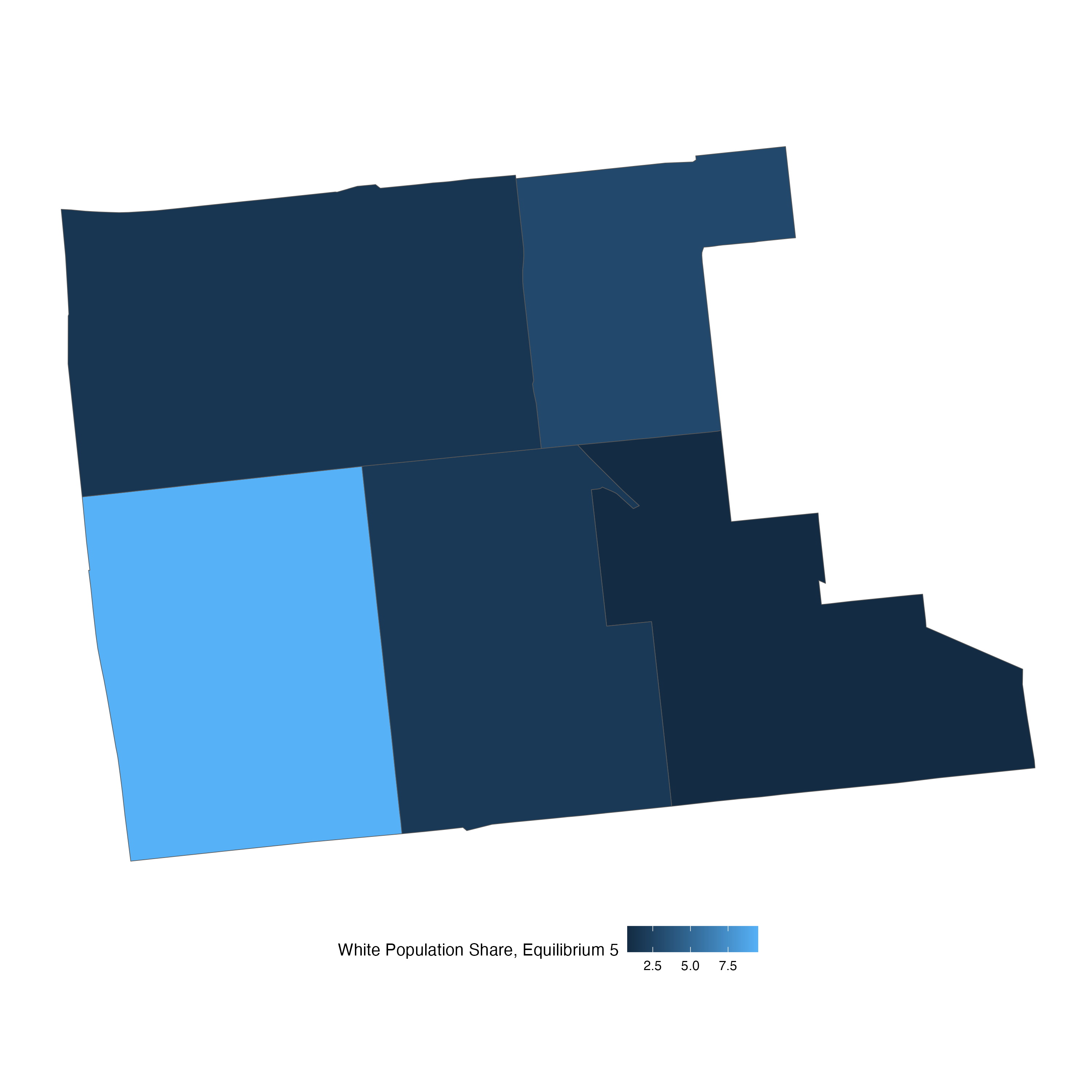}
    \end{center}
    \emph{The five equilibria of Humboldt Park, assuming elastic supply of floor surface, obtained by the homotopy $H_A$.}
\end{figure}

\begin{table}

\caption{Number of Equilibria Estimated within Each of Chicago's 77 Communities (Homotopy $H_A$, perfectly elastic housing supply)}\label{tab:number_equilibria_community}

\emph{For each of Chicago's 77 communities, this table presents the equilibria estimated using homotopy $H_A$, and the estimated parameters with the longitudinal panel 1950-2010. $\dagger$: the equilibria for Humboldt Park are depicted on Figure~\ref{fig:equilibria_one_community}.}

\begin{center}
{
\footnotesize
\begin{tabular}{lclc}
  \toprule
Community & Number of Equilibria & Community & Number of Equilibria \\ 
  \midrule
Albany Park & 3 & Lincoln Park & 1 \\ 
  Archer Heights & 1 & Lincoln Square & 3 \\ 
  Armour Square & 1 & Logan Square & 3 \\ 
  Ashburn & 5 & Loop & 1 \\ 
  Auburn Gresham & 1 & Lower West Side & 1 \\ 
  Austin & 9 & Mckinley Park & 1 \\ 
  Avalon Park & 1 & Montclare & 1 \\ 
  Avondale & 1 & Morgan Park & 1 \\ 
  Belmont Cragin & 1 & Mount Greenwood & 3 \\ 
  Beverly & 5 & Near North Side & 3 \\ 
  Bridgeport & 1 & Near South Side & 1 \\ 
  Brighton Park & 3 & Near West Side & 3 \\ 
  Burnside & 1 & New City & 7 \\ 
  Calumet Heights & 1 & North Center & 1 \\ 
  Chatham & 1 & North Lawndale & 5 \\ 
  Chicago Lawn & 1 & North Park & 1 \\ 
  Clearing & 5 & Norwood Park & 5 \\ 
  Douglas & 1 & Oakland & 1 \\ 
  Dunning & 3 & Ohare & 7 \\ 
  East Garfield Park & 3 & Portage Park & 7 \\ 
  East Side & 1 & Pullman & 1 \\ 
  Edgewater & 1 & Riverdale & 3 \\ 
  Edison Park & 1 & Rogers Park & 1 \\ 
  Englewood & 3 & Roseland & 5 \\ 
  Forest Glen & 5 & South Chicago & 1 \\ 
  Fuller Park & 3 & South Deering & 7 \\ 
  Gage Park & 1 & South Lawndale & 3 \\ 
  Garfield Ridge & 7 & South Shore & 1 \\ 
  Grand Boulevard & 1 & Uptown & 1 \\ 
  Greater Grand Crossing & 3 & Washington Heights & 3 \\ 
  Hegewisch & 3 & Washington Park & 1 \\ 
  Hermosa & 1 & West Elsdon & 1 \\ 
  \textbf{Humboldt Park}$\dagger$ & 5 & West Englewood & 1 \\ 
  Hyde Park & 1 & West Garfield Park & 1 \\ 
  Irving Park & 3 & West Lawn & 1 \\ 
  Jefferson Park & 3 & West Pullman & 5 \\ 
  Kenwood & 1 & West Ridge & 3 \\ 
  Lake View & 1 & West Town & 3 \\ 
   &  & Woodlawn & 1 \\ 
   \bottomrule
\end{tabular}}
\end{center}

\end{table}

\subsection{Equilibrium Concept}

The city is made of neighborhoods indexed by $j$ grouped into communities indexed $i$. There are two populations $L^b$ and $L^w$ ($g=b,w$), with social preference parameters $\gamma^b$ and $\gamma^w$. There are $J_i$ neighborhoods within each community. The amenities are $A_{ij}$. The price of floor surface in community $i$ and neighborhood $j$ is $q_{ij}$, and the social demographics are $\Psi_{ij}$. Households choose a community $i$ and a neighborhood $j$. The utility of choosing neighborhood $j$ in community $i$ is:
\begin{equation}
    U_{ij}^g = A_{ij} q_{ij}^{-\alpha} \Psi_{ij}^{\gamma^g} \Psi_{i}^{\gamma^g}
\end{equation}
where $\Psi_{ij}$ capture the `\emph{near} interactions' across neighborhoods across communities (e.g. Humboldt Park) and $\Psi_{i}$ captures `far interactions' between communities (e.g. between Humboldt Park and West Town).
\begin{equation}
    \underbrace{\Psi_{ij} = \sum_{k=1}^{J_i} e^{-\xi d_{ijk}} \frac{L_{ik}^g}{L_i^g}}_{\textrm{Near interactions}}, \qquad \underbrace{\Psi_{i} = \sum_{\iota =1}^n e^{-\xi d_{i\iota}} \frac{L_{\iota}^g}{N}}_{\textrm{Far interactions}}
\end{equation}
The probability of choosing neighborhood $j$ conditional on choosing community $i$ is:
\begin{equation}
    P_{j\vert i} = \frac{\left( A_{ij} q_{ij}^{-\alpha} \Psi_{ij}^\gamma \right)^\theta }{\sum_{j=1}^{N_i} \left( A_{ij} q_{ij}^{-\alpha} \Psi_{ij}^\gamma \right)^{\theta} }
\end{equation}
which leads to a social equilibrium condition within each community. The market clearing condition holds with floor surface elasticity $\eta$ and pins down $q_{ij}$.

Denoting by $V_i(L_i)$ the welfare in community $i$ conditional on $L_i$, this is equal to the typical:
\begin{equation}
V_i = \Gamma \left[ \sum_{j=1}^{J_i} \left( A_{ij} q_{ij}^{-\alpha} \Psi_{ij}^\gamma \Psi_{i}^\gamma \right)^\theta  \right]^{\frac{1}{\theta}} 
\end{equation}
where, as usual, $\Gamma$ is Euler's gamma function. This welfare can be factored into:
\begin{equation}
  V_i(L_i, \Psi_i) = \Psi_i^\gamma U_i(L_i),
\end{equation} 
hence  far interactions can be factored out at the community level. The probability of choosing community $i$ is then driven by the
\begin{equation}
    P_i = \frac{\Psi_i^{\gamma\theta} U_i(L_i)^\theta}{\sum_{k=1}^N \Psi_k^{\gamma\theta} U_k(L_k)^\theta }
\end{equation}
The Fr\'echet unobservable in neighborhood $j$ within community $i$ is observed after the choice of community~$i$.

\subsection{Equilibrium Solution with Perfectly Elastic Floor Supply Elasticity}

When floor surface is perfectly elastic ($\eta=\infty$), each community's welfare is scale independent, i.e. it is not the total population in the community that affects equilibria but the distribution of populations across locations. The probability of choosing community $i$ can be further simplified as:
\begin{equation}
    P_i = \frac{\Psi_i^{\gamma\theta} U_i(1)^\theta}{\sum_{k=1}^N \Psi_k^{\gamma\theta} U_k(1)^\theta }
\end{equation}
and the problems of choosing a community and a neighborhood within the community are nested and separately solvable, while allowing for both far and near interactions. 

In a second step, to allow for a finite elasticity of floor surface supply, we apply the second homotopy $H_\eta$ only once the equilibria for the entire city of Chicago have been found. The second homotopy is a differential equation over the number of locations, and its complexity is linear in the number of locations. It is therefore not a computational challenge to transform the city differentially from $\eta = \infty$ to a finite value $\eta \geq 0$.

The equilibria of the city with perfectly elastic housing supply are solved as follows. First, for each given level of population $L_i$ in a community, the methods presented in this paper can be applied to estimate the equilibria in $(L_{i1}, \dots, L_{iN_i})$ \emph{within the community}. Second, this provides us with a set of valuations $V_i(L_i) \subset \mathbb{R}^{N_i}$ for each level of population $L_i$ in the region. Equilibrium multiplicity implies that the valuation $V_i(L_i)$ is a set of cardinality $\geq 1$: there are multiple population distributions consistent with the structural parameters of amenities and social preferences. In mathematical terms, $V_i(L_i)$ is set valued. The proportionality described in the previous paragraph implies that $V_i(L_i) = L_i^{\gamma^1} V_i(1) \subset \mathbb{R}^{N_i}$. We then obtain the city's equilibria by solving for:
\begin{equation}
    \Psi_i = \sum_{\iota=1}^n e^{-\xi d_{i\iota}} \frac{\Psi_\iota^{\gamma\theta} U_\iota(1)^\theta}{\sum_{k=1}^N \Psi_k^{\gamma\theta} U_k(1)^\theta }  \label{eq:fixed_point}
\end{equation}
using the methods described in this paper. This is a problem of dimension the number of communities. 

Fix a choice of equilibrium in each community $i$. This is a choice of $U_i(1)$ in each community. If the number of equilibria is $N^e_i$ in community, there are:
\begin{equation}
    N^e = \Pi_{i=1}^N N^e_i
\end{equation}
possible choices of equilibria. Once chosen, this equilibrium in each community means that welfare is pinned down in each community. This means that $U_i(1)$ is a scalar in $\mathbb{R}$. Thus, by picking an equilibrium in each community, we can then find the equilibria of the overall city, when floor surface supply is perfectly elastic.

We then apply the homotopy $H_\eta$ at the city level to obtain the equilibrium of the city with an arbitrary flood supply elasticity. This is an application of Proposition~\ref{prop:homotopy_H_eta} presented earlier in the paper.\footnote{A straightforward extension is to allow for a different elasticity in each location.} The method is described in more detail below.

\subsection{Equilibria when Floor Supply has Finite Elasticity: A Citywide Homotopy}

When floor supply is elastic, prices respond to the demand for each location and thus the welfare in each community is \emph{not} homogeneous in population size. It is however possible to start with a solution to the problem with perfectly elastic floor supply and then smoothly, using a homotopy, differentiate the solution. This provides a path of equilibria of the city for each $\eta$, with a starting point a solution to \ref{eq:fixed_point}.

Consider one equilibrium selected in each community $i$. This fixes $U_i(1;\eta=\infty)$, which is now made clearly dependent on $\eta$. Consider the vector of populations $L_i(\eta=\infty)$ thus found as a solution to \ref{eq:fixed_point}.  The differential of welfare in community $i$ wrt the floor surface elasticity $\eta$ is obtained using Proposition~\ref{prop:homotopy_H_eta}. What remains to be established is the differential equation that $L_i(\eta)$ satisifies, since the homogeneity result does not apply when prices respond to demand: a larger population in the community leads to higher floor surface prices. We thus need to rely on the differential of (\ref{eq:fixed_point}) wrt $L_i$. This will then allow us to obtain multiple paths of equilibria of the city with finite floor surface elasticity.

As before, consider the differential w.r.t. the inverse elasticity $\zeta$ rather than the elasticity $\eta$ itself. It was shown earlier that the homotopy is continuous at $\zeta \rightarrow 0$ ($\eta \rightarrow \infty$) and thus the homotopy can be extended by continuity at $\zeta = 0$. 

\paragraph*{The Homotopy $H_\eta$} To perform the homotopy by lowering the supply elasticity (and thus increasing the inverse elasticity),  we would in principle need to keep track of population levels by community $\mathbf{L}^w,\mathbf{L}^b$, the utility level in each community $\mathbf{U}^w,\mathbf{U}^b$, the social interactions parameter $\boldsymbol{\Psi}_c$ of far interactions within the region, the vector of prices $\mathbf{q}_n$ by neighborhood, and the social interactions parameter $\boldsymbol{\Psi}_n$ of near interactions, within each community. 

Such complexity is not required, and it is possible to keep track of the vector of neighborhood prices only. 

\begin{prop}{\emph{\textbf{(Citywide Homotopy)}}}
Denoting by $\mathbf{q}_n$ the vector of prices in each neighborhood, the homotopy is the solution to the ordinary differential equation:
\begin{equation}
    \left( 1- \Phi \right) \frac{d \mathbf{q}_n}{d \zeta} = B \label{eq:citywide_ode}
\end{equation}
where $A$ and $B$ are two matrices specified below, whose value changes along the path $\zeta$. When $\left(\mathbf{1} - A\right)$ is invertible, finding the equilibrium of the city is integrating $d\mathbf{q}/d\zeta$ from $\zeta = 0$ (perfectly elastic floor surface supply) to the desired level $\zeta$.
\begin{equation}
    \mathbf{q}_n(\zeta) = \mathbf{mc} + \int_0^\zeta \left(\mathbf{1} - \Phi(s) \right)^{-1} B(s) ds
\end{equation}
where the initial condition $\mathbf{q}_n(0) = \mathbf{mc}$ is the marginal cost $\mathbf{mc}$ of floor surface. The invertibility of $\left(\mathbf{1} - \Phi(s) \right)$ is tested along the path by computing the determinant and the smallest eigenvalue.

The endogenous equilibrium vectors $\left[ \mathbf{L}^w ~ \mathbf{L}^b ~ \mathbf{U}^w ~ \mathbf{U}^b ~ \boldsymbol{\Psi}_c ~  \boldsymbol{\Psi}_n ~ \mathbf{q}_n \right]$ are then pinned down by the solution $\mathbf{q}_n(\zeta)$.
\end{prop}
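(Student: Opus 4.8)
The plan is to show that the full battery of equilibrium relations collapses, at each fixed inverse elasticity $\zeta$, to a \emph{square} system in the neighborhood prices $\mathbf{q}_n$ alone, and then to differentiate that system in $\zeta$ to recover (\ref{eq:citywide_ode}). First I would fix $\zeta$ together with the equilibrium branch selected in each community at $\zeta=0$, and argue that every remaining endogenous object is a smooth implicit function of $\mathbf{q}_n$. The nesting makes this block-recursive. The within-community conditional choice probabilities $P_{j\mid i}$ and the near-interaction vector $\boldsymbol{\Psi}_n$ depend only on within-community prices and amenities and are scale-free in the community population $L_i$; solving the within-community social-equilibrium fixed point (by the same continuation argument as Theorem~\ref{thm:theorem_total_degree_homotopy}, continued from the branch fixed at $\zeta=0$) yields $\boldsymbol{\Psi}_n(\mathbf{q}_n,\zeta)$ and hence the community shape-welfares $U_i(\mathbf{q}_n,\zeta)$. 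Feeding the $U_i$ into the cross-community choice probabilities $P_i$ and the far-interaction fixed point (\ref{eq:fixed_point}) then determines $\boldsymbol{\Psi}_c(\mathbf{q}_n,\zeta)$ and the populations $\mathbf{L}^w,\mathbf{L}^b$ as functions of $\mathbf{q}_n$.

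With these substitutions the only conditions left to impose are the neighborhood market-clearing equations (\ref{eq:market_clearing}), one per neighborhood. Writing them in the ``price $=$ map of prices'' form used in (\ref{eq:equilibrium_Q}), I would define $G(\mathbf{q}_n,\zeta)=\mathbf{q}_n-\Theta(\mathbf{q}_n,\zeta)$, a square smooth system in the $\sum_i J_i$ unknowns $\mathbf{q}_n$, where $\Theta$ is the right-hand side of (\ref{eq:price_market_determined}) after the Step-1 substitutions. At $\zeta=0$ the exponent $\zeta/(\alpha\zeta+1)$ in (\ref{eq:price_market_determined}) vanishes, so $\Theta(\mathbf{q}_n,0)=\mathbf{mc}$ and $\mathbf{q}_n(0)=\mathbf{mc}$ solves $G=0$; this is the initial condition, legitimate because (as noted earlier in the text) $Q$ and $\partial Q/\partial\log\mathbf{q}$ have well-defined limits as $\zeta\to0^{+}$, so $G$ and its derivatives extend by continuity across $\zeta=0$.

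Differentiating $G(\mathbf{q}_n(\zeta),\zeta)=0$ totally in $\zeta$ gives $(\mathbf{1}-\Phi)\,d\mathbf{q}_n/d\zeta=B$, with $\Phi\equiv\partial\Theta/\partial\mathbf{q}_n$ (this is the matrix written $A$ in the statement) and $B\equiv\partial\Theta/\partial\zeta$. Crucially, $\Phi$ and $B$ already absorb, through the chain rule, the implicit derivatives $\partial\boldsymbol{\Psi}_n/\partial\mathbf{q}_n$, $\partial\mathbf{U}/\partial\mathbf{q}_n$, $\partial\boldsymbol{\Psi}_c/\partial\mathbf{q}_n$ and $\partial\mathbf{L}/\partial\mathbf{q}_n$ computed in Step 1, so no auxiliary vector need be carried along the path. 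Where $(\mathbf{1}-\Phi)$ is invertible and continuous on a compact $\zeta$-interval, the field $(\mathbf{1}-\Phi)^{-1}B$ is Lipschitz, and Picard's theorem (exactly as in the argument following the Equilibrium-Paths proposition, and in Proposition~\ref{prop:homotopy_H_eta}) yields a unique solution; since $G$ vanishes at $\zeta=0$ and its total $\zeta$-derivative is held at zero along the flow, $G(\mathbf{q}_n(\zeta),\zeta)=0$ throughout, the endpoint is a genuine equilibrium, and integration gives $\mathbf{q}_n(\zeta)=\mathbf{mc}+\int_0^\zeta(\mathbf{1}-\Phi(s))^{-1}B(s)\,ds$. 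Recovering the remaining variables through the Step-1 maps then pins down the full vector $[\,\mathbf{L}^w\ \mathbf{L}^b\ \mathbf{U}^w\ \mathbf{U}^b\ \boldsymbol{\Psi}_c\ \boldsymbol{\Psi}_n\ \mathbf{q}_n\,]$, as claimed.

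The hard part is Step 1: proving the reduction to prices is well-defined and smooth. Two points need care. The within-community problem must be genuinely scale-free, so that $\boldsymbol{\Psi}_n$ and $U_i$ depend on $L_i$ only through prices; this uses the Cobb-Douglas/Fr\'echet structure and the factorization $V_i=\Psi_i^{\gamma}U_i(L_i)$ established earlier. And the cross-community far-interaction fixed point must admit a locally unique branch continued from $\zeta=0$, which again rests on invertibility of its own Jacobian. All of these nested invertibility requirements are packaged into the single condition that $(\mathbf{1}-\Phi)$ be nonsingular — of Schur-complement type, as in the footnote to Proposition~\ref{prop:homotopy_H_eta} — and the content of the proposition is precisely that this one condition suffices. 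Its potential failure is exactly the bifurcation phenomenon treated in Section~\ref{sec:bifurcations}, detected in practice by monitoring the determinant and smallest eigenvalue of $(\mathbf{1}-\Phi)$ along the path.
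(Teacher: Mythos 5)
Your proposal is correct and takes essentially the same approach as the paper: the paper's proof likewise eliminates $\boldsymbol{\Psi}_n$, $\mathbf{U}^w$, $\mathbf{U}^b$, $\boldsymbol{\Psi}_c$, $\mathbf{L}^w$, $\mathbf{L}^b$ in favor of $\mathbf{q}_n$ through the same block-recursive structure, so the chain-rule terms you absorb into $\Phi$ are exactly its matrices $E^g$, $F^g$ and $-\left(\partial\mathbf{f}_n/\partial\boldsymbol{\Psi}_n\right)^{-1}\partial\mathbf{f}_n/\partial\mathbf{q}_n$, and your $B=\partial\Theta/\partial\zeta$ is its $-\left(\partial\mathbf{m}_n/\partial\mathbf{q}_n\right)^{-1}\partial\mathbf{m}_n/\partial\zeta$ after the same normalization; your implicit-function-first organization versus the paper's differentiate-then-eliminate order is a presentational difference only. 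Your one overstatement --- that all nested invertibility requirements are packaged into nonsingularity of $\mathbf{1}-\Phi$, when in fact $\partial\mathbf{f}_n/\partial\boldsymbol{\Psi}_n$, $\partial\mathbf{f}_c/\partial\boldsymbol{\Psi}_c$ and $\partial\mathbf{m}_n/\partial\mathbf{q}_n$ must each be invertible for $\Phi$ to be defined at all --- is shared by the paper's own statement, so it is not a gap relative to the paper's proof.
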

\begin{proof}
We then express the matrices $\Phi$ and $B$. We lay out the equilibrium conditions and then solve for $d\mathbf{q}/d\zeta$.

Population levels are changing because of social interactions and the welfare in each community:
\begin{eqnarray}
\frac{d\mathbf{L}^w}{d\zeta} & = & \frac{\partial \mathbf{L}^w}{\partial \boldsymbol{U}^w} \frac{d \boldsymbol{U}^w}{d\zeta} 
 + \frac{\partial \mathbf{L}^w}{\partial \boldsymbol{\Psi}_c} \frac{d \boldsymbol{\Psi}_c}{d\zeta} \\
\frac{d\mathbf{L}^b}{d\zeta} & = & \frac{\partial \mathbf{L}^b}{\partial \boldsymbol{U}^b} \frac{d \boldsymbol{U}^b}{d\zeta} 
 + \frac{\partial \mathbf{L}^b}{\partial \boldsymbol{\Psi}_c} \frac{d \boldsymbol{\Psi}_c}{d\zeta} 
\end{eqnarray}

The utility level in each community $\mathbf{U}^w,\mathbf{U}^b$ is driven by floor surface prices and by the social interaction parameter of near interactions.
\begin{eqnarray}
\frac{d\mathbf{U}^w}{d\zeta} & = & \frac{\partial \mathbf{U}^w}{\partial \boldsymbol{\Psi}_n} \frac{d \boldsymbol{\Psi}_n}{d\zeta} + \frac{\partial \mathbf{U}^w}{\partial \mathbf{q}_n} \frac{d \mathbf{q}_n}{d\zeta} \\
\frac{d\mathbf{U}^b}{d\zeta} & = & \frac{\partial \mathbf{U}^b}{\partial \boldsymbol{\Psi}_n} \frac{d \boldsymbol{\Psi}_n}{d\zeta} + \frac{\partial \mathbf{U}^b}{\partial \mathbf{q}_n} \frac{d \mathbf{q}_n}{d\zeta} 
\end{eqnarray}
The interactions across communities are such that the social equilibrium condition $\mathbf{f}_c$ needs to hold in every neighborhood:
\begin{eqnarray}
\frac{\partial \mathbf{f}_c}{\partial \boldsymbol{\Psi}_c} \frac{d\boldsymbol{\Psi}_c}{d\zeta} + \frac{\partial \mathbf{f}_c}{\partial \mathbf{U}^w} \frac{d\mathbf{U}^w}{d\zeta} & = 0 \label{equ:social_equilibrium_city}
\end{eqnarray}
The change in prices and the social interactions parameters in each community are found by requiring the social interaction conditions $\mathbf{f}_n$ and the market clearing conditions $\mathbf{m}_n$ to hold in each neighborhood:
\begin{eqnarray}
 0 &= &\frac{\partial \mathbf{f}_n}{\partial \boldsymbol{\Psi}_n} \frac{d\boldsymbol{\Psi}_n}{d\zeta} + \frac{\partial \mathbf{f}_n}{\partial \mathbf{q}_n} \frac{d\mathbf{q}_n}{d\zeta} \\
0 & = & \frac{\partial \mathbf{m}_n }{\partial \zeta} + \frac{\partial \mathbf{m}_n }{\partial \mathbf{q}_n} \frac{d \mathbf{q}_n}{d \zeta} + \frac{\partial \mathbf{m}_n }{\partial L^w} D \frac{d\mathbf{L}^w}{d\zeta} + \frac{\partial \mathbf{m}_n }{\partial L^b} D \frac{d\mathbf{L}^b}{d\zeta} + \frac{\partial \mathbf{m}_n}{\partial \boldsymbol{\Psi}_n} \frac{d \boldsymbol{\Psi}_n}{d \zeta} \label{equ:market_clearing_condition}
\end{eqnarray}
where $D$ is a design matrix of 0s and 1s of size $J_n \times J_c$ that maps population levels into the appropriate communities.

There is a price response for two reasons: first, a lower housing supply elasticity (a higher $\zeta$) leads to price increases $\partial \mathbf{q}_n / \partial \zeta$. This is the partial equilibrium shock that causes the city's population to shift across neighborhoods and communities. Second, there are readjustments in populations across communities and across neighborhoods within communities. This is the second term in $\frac{\partial \mathbf{q}_n }{\partial L}$.

The fixed point equilibrium condition at the community level is $\mathbf{f}_c$ in the system above. This corresponds to:
\begin{equation}
    \mathbf{f}_c(\boldsymbol{\Psi}_c, \boldsymbol{U}^w) = \left[ \Psi^{i}_c - \sum_{\iota =1}^n e^{-\xi d_{i\iota}}  \frac{\Psi_\iota^{\gamma\theta} (U^{w\iota})^\theta}{\sum_{k=1}^N \Psi_k^{\gamma\theta} (U^{wk})^\theta } \right]_i \label{eq:fixed_point_community_level}
\end{equation}
where $d_{i\iota}$ is the distance between two communities $i$ and $\iota$ within the same region. The fixed point equilibrium condition within each community is $\mathbf{f}_n$. This corresponds to, for each community $i$:
\begin{equation}
\mathbf{f}_n^i(\boldsymbol{\Psi}_n^i, \mathbf{q}_i) = \left[ \Psi^{ij}_n - \sum_{j'=1}^n e^{-\xi d_{ijj'}}  \frac{(\Psi^{ij'}_n)^{\gamma\theta} (q_{ij'})^{-\alpha\theta} A_{ij'}^\theta}{\sum_{k=1}^N (\Psi^{ik}_n)^{\gamma\theta} (q_{ik})^{-\alpha\theta} A_{ik}^\theta } \right]_j
\end{equation}
where $d_{ijj'}$ is the distance between two neighborhoods within the same community $i$.

Finally, we prove the proposition. Starting with the market clearing condition (\ref{equ:market_clearing_condition}),
\begin{eqnarray}
    \frac{d \mathbf{q}_n}{d\zeta} & = & -\left( \frac{\partial \mathbf{m}_n}{ \partial \mathbf{q}_n} \right)^{-1} \left[ \frac{\partial \mathbf{m}_n}{\partial \zeta} + \frac{\partial \mathbf{m}_n}{\partial \mathbf{L}^w} D \frac{d\mathbf{L}^w}{d\zeta} + \frac{\partial \mathbf{m}_n}{\partial \mathbf{L}^b} D \frac{d\mathbf{L}^b}{d\zeta} + \frac{\partial \mathbf{m}_n}{\partial \boldsymbol{\Psi}_n} \frac{d\boldsymbol{\Psi}_n}{d\zeta}\right] \label{equ:differential_qn_unsimplified}
\end{eqnarray}
and we notice three facts. First, the social equilibrium condition implies:
\begin{equation}
    \frac{d\boldsymbol{\Psi}_n}{d \zeta} = -\left( \frac{\partial \mathbf{f}_n}{\partial \boldsymbol{\Psi}_n} \right)^{-1} \frac{\partial \mathbf{f}_n}{\partial \mathbf{q}_n} \frac{d \mathbf{q}_n}{d \zeta}. \label{equ:social_equilibrium_wrapped}
\end{equation}
Second, the differential of utility implies the existence of matrices:
\begin{equation}
    \frac{d \mathbf{L}^w}{d\zeta} = E^w \frac{d \mathbf{U}^w}{d \zeta}, \qquad \frac{d \mathbf{U}^w}{d\zeta} = F^w \frac{d \mathbf{q}_n}{d\zeta}  \label{equ:matrices}
\end{equation}
and similarly for $b$. With $E^w = \frac{\partial L^w}{\partial U^w} - \frac{\partial L^w}{\partial \Psi_c} \left(  \frac{\partial f_c}{\partial \Psi_c} \right)^{-1} \frac{\partial f_c}{\partial U^w}$, and $F^w = -\frac{\partial U^w}{\partial \Psi_n} \left(\frac{\partial f_n}{\partial \Psi_n} \right)^{-1} \frac{\partial f_n}{\partial q_n} + \frac{\partial U^w}{\partial q_n}$.

Together, equations \ref{equ:differential_qn_unsimplified}, \ref{equ:social_equilibrium_wrapped}, and \ref{equ:matrices} allow an expression of all terms of the rhs of \ref{equ:differential_qn_unsimplified} as either constants or functions of $\frac{d \mathbf{q}_n}{d\zeta}$. We get:
\begin{equation}
    \left(\textrm{Id}-\Phi \right) \frac{d \mathbf{q}_n}{d\zeta} = -\left( \frac{\partial \mathbf{m}_n}{\partial \mathbf{q}_n} \right)^{-1} \frac{\partial \mathbf{m}_n}{ \partial \zeta}
\end{equation}
with $\Phi = - \left( \frac{\partial \mathbf{m}_n}{\partial \mathbf{q}_n} \right)^{-1} \left[ \frac{\partial \mathbf{m}_n}{\partial \mathbf{L}^w} DE^w F^w + \frac{\partial \mathbf{m}_n}{\partial \mathbf{L}^b} DE^b F^b - \frac{\partial \mathbf{m}_n}{\partial \boldsymbol{\Psi}_n} \left( \frac{\partial f_n}{\partial \Psi_n} \right)^{-1} \frac{\partial f_n}{\partial q_n} \right]$. This concludes the first part of the proof.

Use equation (\ref{equ:matrices}) to integrate and find population levels: $\mathbf{L}^w(\zeta) = \mathbf{L}^w(0) + \int_0^\zeta E^wF^w \frac{d \mathbf{q}_n}{d \zeta} ds$. Similar relationships hold for the vectors $\mathbf{L}^b$, $\mathbf{U}^w$ (equation \ref{equ:matrices}), $\mathbf{U}^b$, $\boldsymbol{\Psi}_c$ (equation (\ref{equ:social_equilibrium_city})), and $\boldsymbol{\Psi}_n$ (equation (\ref{equ:social_equilibrium_wrapped})). This concludes the second part of the proof.
\end{proof}

The differential equation (\ref{eq:citywide_ode}) is then solved. The initial condition at $\zeta=0$ ($\eta=\infty$) is one of the equilibrium vectors found earlier for a perfectly elastic floor supply. The differential equation is solved by a Runge-Kutta approach with predictor-corrector method as previously in the paper.

\subsection{Empirical Results}

\paragraph*{Data}

The analysis is performed on 353 neighborhoods grouped into 77 communities and 9 regions of the City of Chicago. This is presented on Figure~\ref{fig:chicago_map}, where the dashed lines are the boundaries of the neighborhoods, the solid lines are the boundaries of the communities, and the colors are for the regions. Such communities include locations that were prominently featured in the economics and sociology literatures, such as Humboldt Park, Logan Circle, Irving Park, Pilsen, Uptown. The regions are delineated by the City of Chicago.

We build a panel of the 353 neighborhoods followed from 1940 to 2010 with consistent boundaries. This is done by building a Census tract relationship file for 1940 to 2010, 1950 to 2010, 1960 to 2010, and then using the Geolytics panel for 1970 to 2010. We group tracts into 353 neighborhoods, so that within each community, households have approximately 5 locations to choose from. The variables include the total population, the black population, the price of housing, the decade. 

\paragraph*{Estimation of the Parameters} Social interactions operate within a certain scope $\xi$ as in \citeasnoun{redding2017quantitative}, denoted $\delta$ in \citeasnoun{ahlfeldt2015economics} and called the rate of decay. And therefore an important task is to calculate:
\begin{equation}
    \Psi_{j} = \sum_{k=1}^J e^{-\xi d_{jk}} L_{k}^w \label{eq:definition_psi_chicago}
\end{equation}
where $d_{jk}$ is the distance between neighborhood $j$ and neighborhood $k$. We use the centroid in the US National Atlas projection. $\xi$ is estimated as follows. For a fixed value of $\xi$, estimate the regression of log population by race:
\begin{eqnarray}
    \log L_{jt}^{\textrm{race}} & = & 1(\textrm{race} = \textrm{black}) \gamma^b \log \Psi_{jt-1}  + 1(\textrm{race} = \textrm{white})  \gamma^w \log \Psi_{jt-1} \nonumber \\
    && + 1(\textrm{race} = \textrm{black}) \nonumber \\ 
    && + \textrm{Neighborhood}_{j} + \textrm{Year}_t + \textrm{Residual}_{j}^{\textrm{race}} 
\end{eqnarray}
with fixed effects for the location, the year, and double clustering residuals by neighborhood and year. Then find the $\xi$ that maximizes the fit of this regression. This is presented on Figure~\ref{fig:scope_estimation}, suggesting that the demographic composition of neighborhoods beyond the immediate neighborhood matter as well.

\paragraph*{Multiple Equilibria} We then turn to the estimation of the equilibria within each community. The amenity vector is obtained by noticing that the neighborhood fixed effect is the $\log A_j$ in this paper's notation. $\gamma^w$ and $\gamma^b$ are the estimates obtained from the regression and each $\Psi_j$ is computed using (\ref{eq:definition_psi_chicago}) and the estimate of $\xi$.

As described, we start by enumerating the equilibria when floor surface supply is perfectly elastic, and then gradually reduce supply elasticity $\eta$. Figure~\ref{fig:equilibria_one_community} presents the five equilibria thus obtained in the case of the community of Humboldt Park. The shades of blue correspond to the share of the White population that lives in each location. We can see that because of the strength of social interactions relative to the amenities, the distribution of white population is not uniquely pinned down by the vector of amenities.

Figure~\ref{fig:equilibria_of_region} turns to the equilibria of the West Side region, which includes Humboldt Park. The Figure presents four equilibria. For each community of the region (Austin, West Town, West Garfield Park, East Garfield Park, Near West Side, North Lawndale, Lower West Side, South Lawndale), we select an equilibrium. The maps suggest that the variance of amenity fixed effects is not large enough to pin down the equilibrium. There exist equilibria where, for instance, North Lawndale attracts white residents.

Equilibrium multiplicity is not guaranteed. Table~\ref{tab:number_equilibria_community} presents the number of equilibria obtained with each of the 77 communities. The number of equilibria vary from 1 (single equilibrium), in the case of Archer Heights, Armour Square, Auburn Gresham, Avalon Park, and 37 other communities, to 9, in the case of Austin. 36 communities out 77 exhibit multiple equilibria. Parameters are identified using the historical shifts in population and is thus informed by the history of gentrification and white flight in these communities.

\paragraph*{Citywide Equilibria}  We then turn to finding \emph{citywide} equilibria. As noted before, the problem can be split into, first, the problem of finding equilibria across the 353 neighborhoods within each of the 77 communities, then the equilibria across communities within each of the 9 regions. In a second step, using a citywide equilibrium, we lower floor surface supply elasticity smoothly along a path using the homotopy (a multivariate ordinary differential equation). The end point is an equilibrium vector, a citywide equilibrium when supply elasticity has a selected value.

For each community $i$, we compute the welfare $V_i(1,\eta=\infty)$ for a perfectly elastic housing supply, and using the ordinary differential equation, compute the welfare $V_i(1,\eta)$ for lower supply elasticities $\eta<\infty$. We then calculate the citywide population distribution across communities $L_i(\eta)$ for each floor surface elasticity. 

The results are as follows. Figure \ref{fig:equilibria_of_region} depicts four potential counterfactual spatial equilibria across the 45 neighborhoods of the West Side region, with perfectly elastic floor surface supply. Figure \ref{fig:equilibrium_path_lower_supply_elasticity} maps the evolution of the equilibrium of the city when $\eta$ goes from $\infty$ to $1$. The starting point of such homotopy is the equilibrium of Figure~\ref{fig:equilibrium_3_west_side}.  Prices are a source of strategic substitutability (as in a Cournot model) and therefore lead to spatial equilibria with less spatial segregation of white residents.  This is apparent when comparing the initial equilibrium of Figure~\ref{fig:psi_n_initial_equilibrium} and the impact of a lower supply elasticity on social demographics $\Psi_n$ on Figure~\ref{fig:impact_eta_psi_n}: there is a lower spatial concentration in locations where strategic complementarities had led to a high share of white households. 

The methods presented can be extended to heterogeneous supply elasticities across locations using a parameterization $\eta_i(t)$ for $t\in[0,1]$ and perform the homotopy along $t$ instead of along $1/\eta$.

\begin{figure}
    \caption{Equilibria of the West Side Region when $\eta = \infty$}
    \label{fig:equilibria_of_region}

        \emph{Four potential equilibria for the West Side region (see map of regions of the City of Chicago on Figure~\ref{fig:chicago_map}) when floor surface supply is perfectly elastic. In each community, an equilibrium was selected. For instance, see the equilibrium of Humboldt Park on Figure~\ref{fig:equilibria_one_community}. The equilibria of the West side region were obtained by solving (\ref{eq:fixed_point}).}
    
    \begin{center}    
    \begin{subfigure}{0.45\textwidth}
    \caption{Equilibrium 1}
    \includegraphics[scale=0.5]{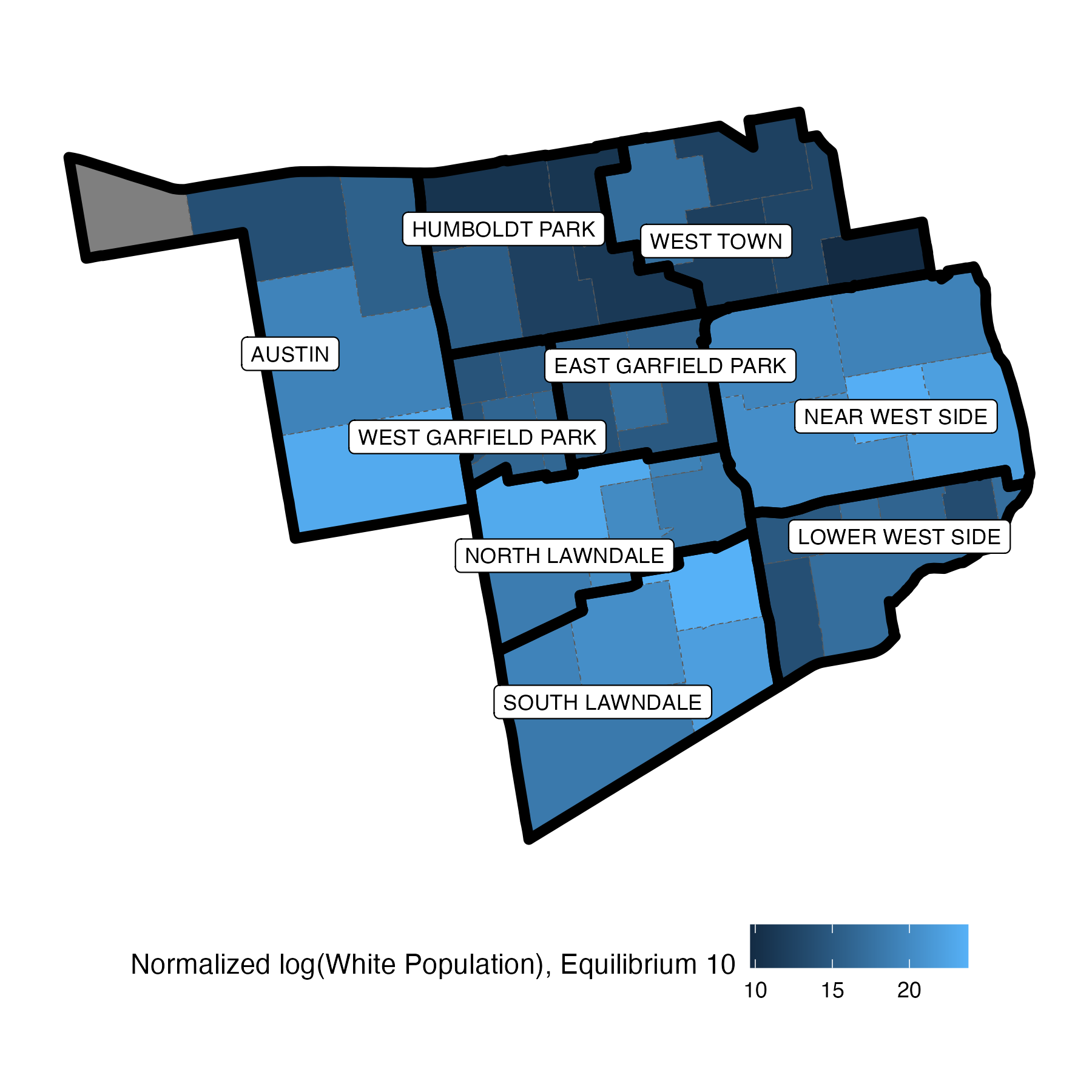}\qquad
    \end{subfigure}
    \begin{subfigure}{0.45\textwidth}
    \caption{Equilibrium 2}\label{fig:equilibrium_2_west_side}
    \includegraphics[scale=0.5]{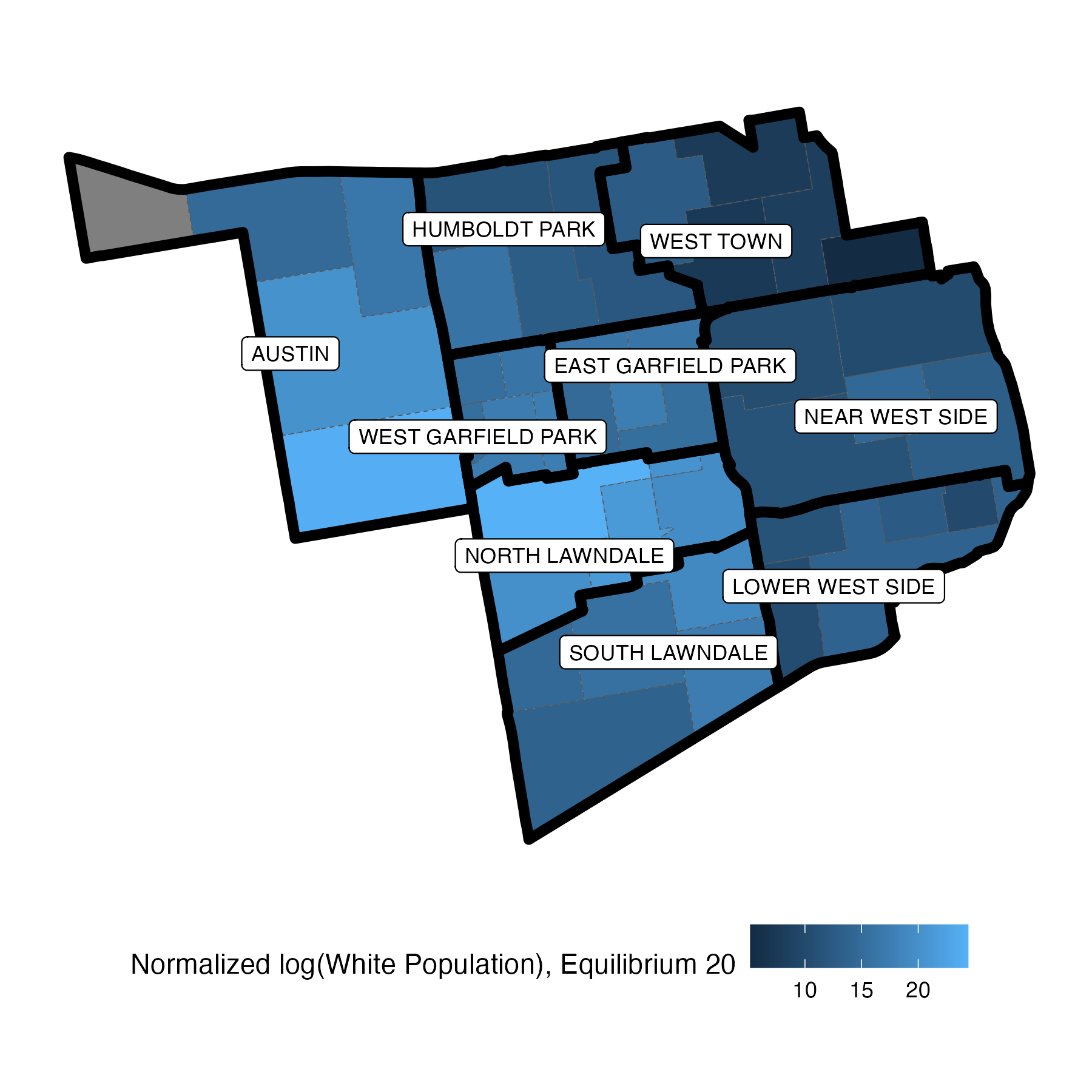}
    \end{subfigure}

    \begin{subfigure}{0.45\textwidth}
    \caption{Equilibrium 3} \label{fig:equilibrium_3_west_side}
    \includegraphics[scale=0.5]{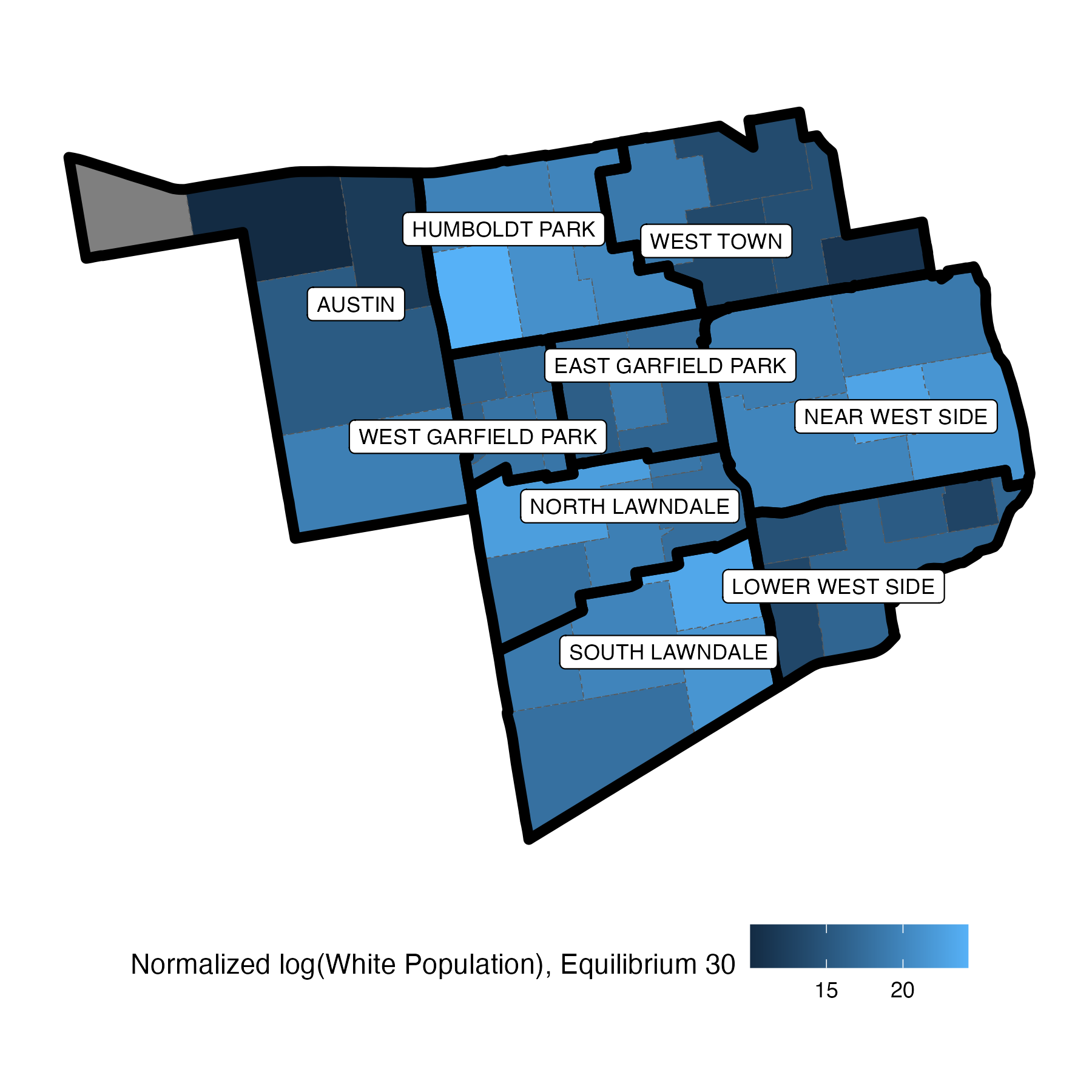}\qquad
    \end{subfigure}
    \begin{subfigure}{0.45\textwidth}
    \caption{Equilibrium 4}
    \includegraphics[scale=0.5]{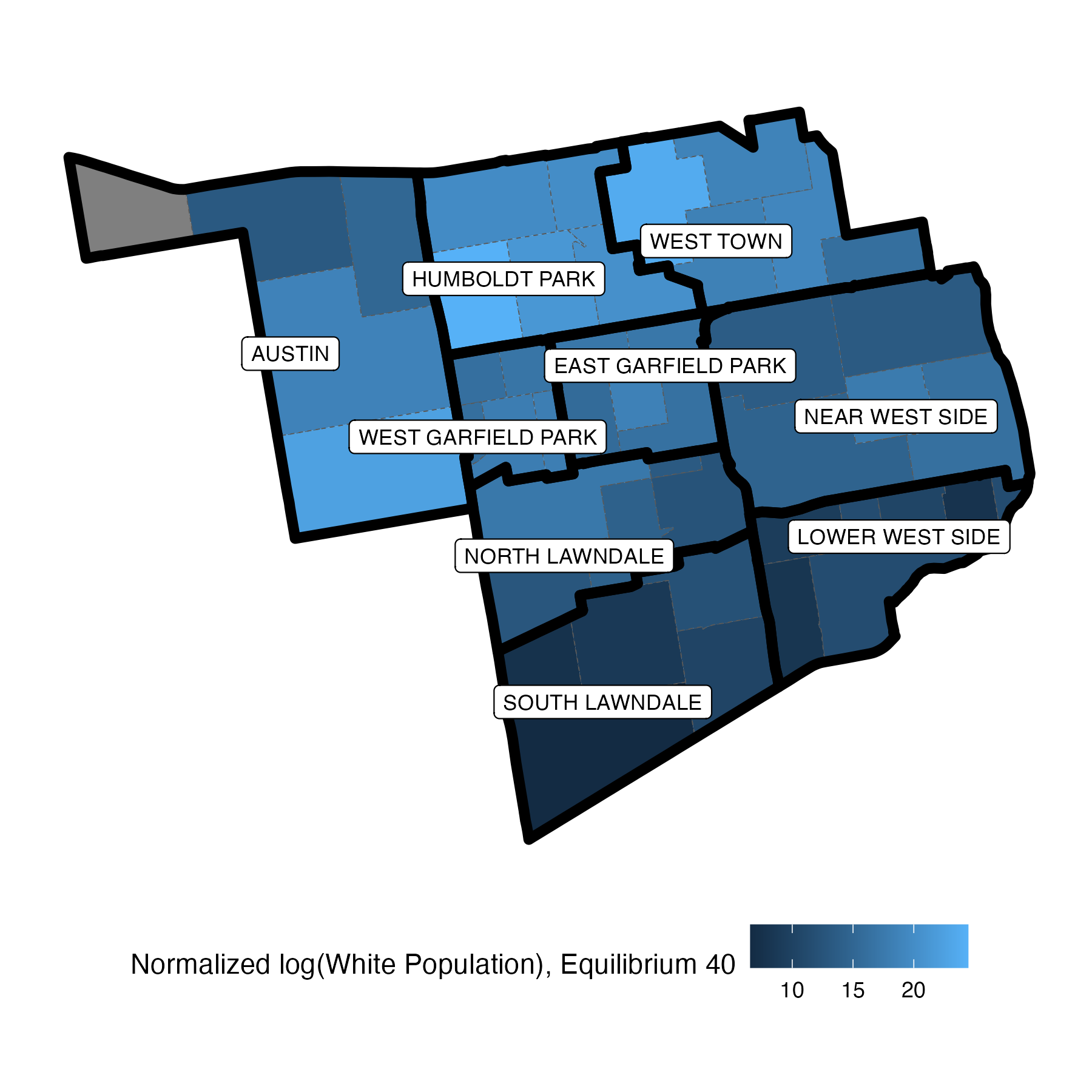}
    \end{subfigure}
    \end{center}
    
\end{figure}


\begin{figure}
    \caption{Equilibrium Path when Lowering Supply Elasticity}
    \label{fig:equilibrium_path_lower_supply_elasticity}

    \emph{These figures describe the evolution of the urban equilibrium from one selected spatial equilibrium 
    (Figure~\ref{fig:equilibrium_3_west_side}), as the paths of equilibria as supply elasticity decreases from its initial 
    $\eta=\infty$. Figure~\ref{fig:path_inverse_elasticity} displays the evolution of inverse elasticity from $\zeta = 1/\eta = 0$ to $\zeta = 1$ along the path. Figure~\ref{fig:smallest_eigenvalue_jacobian} displays the smallest eigenvalue of $(1-\Phi)$. Invertibility of $1-\Phi$ ensures the existence of a unique path. Figure (\ref{fig:psi_n_initial_equilibrium}) presents the spatial distribution of the the weighted neighbors' demographics $\Psi_n$ at the initial equilibrium ($\eta=\infty$) and Figure (\ref{fig:impact_eta_psi_n}) presents the impact of a lower supply elasticity (and higher prices) on such demographics $\Psi_n$ at the final equilibrium.}

    \centering
    \begin{subfigure}{0.45\textwidth}
    \caption{Inverse Elasticity along the Path} \label{fig:path_inverse_elasticity}
    \includegraphics[scale=0.5]{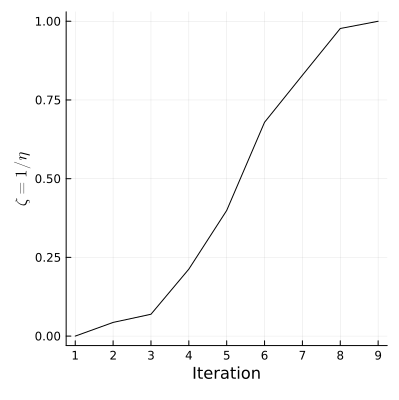}
    \end{subfigure}
    \qquad
    \begin{subfigure}{0.45\textwidth}
    \caption{Smallest Eigenvalue of the Jacobian}\label{fig:smallest_eigenvalue_jacobian}\label{fig:Diagnostic-of-Potential}
    \includegraphics[scale=0.5]{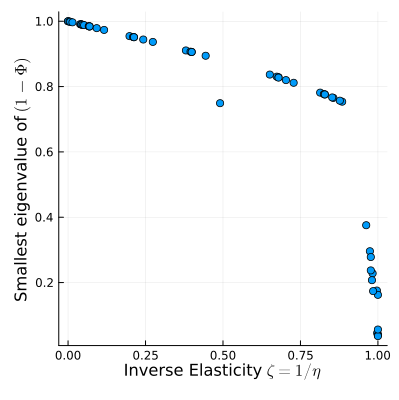}
    \end{subfigure}

    \bigskip
    
    \begin{subfigure}{0.45\textwidth}
    \caption{Initial $\Psi_n$}\label{fig:psi_n_initial_equilibrium}
    \includegraphics[scale=0.7]{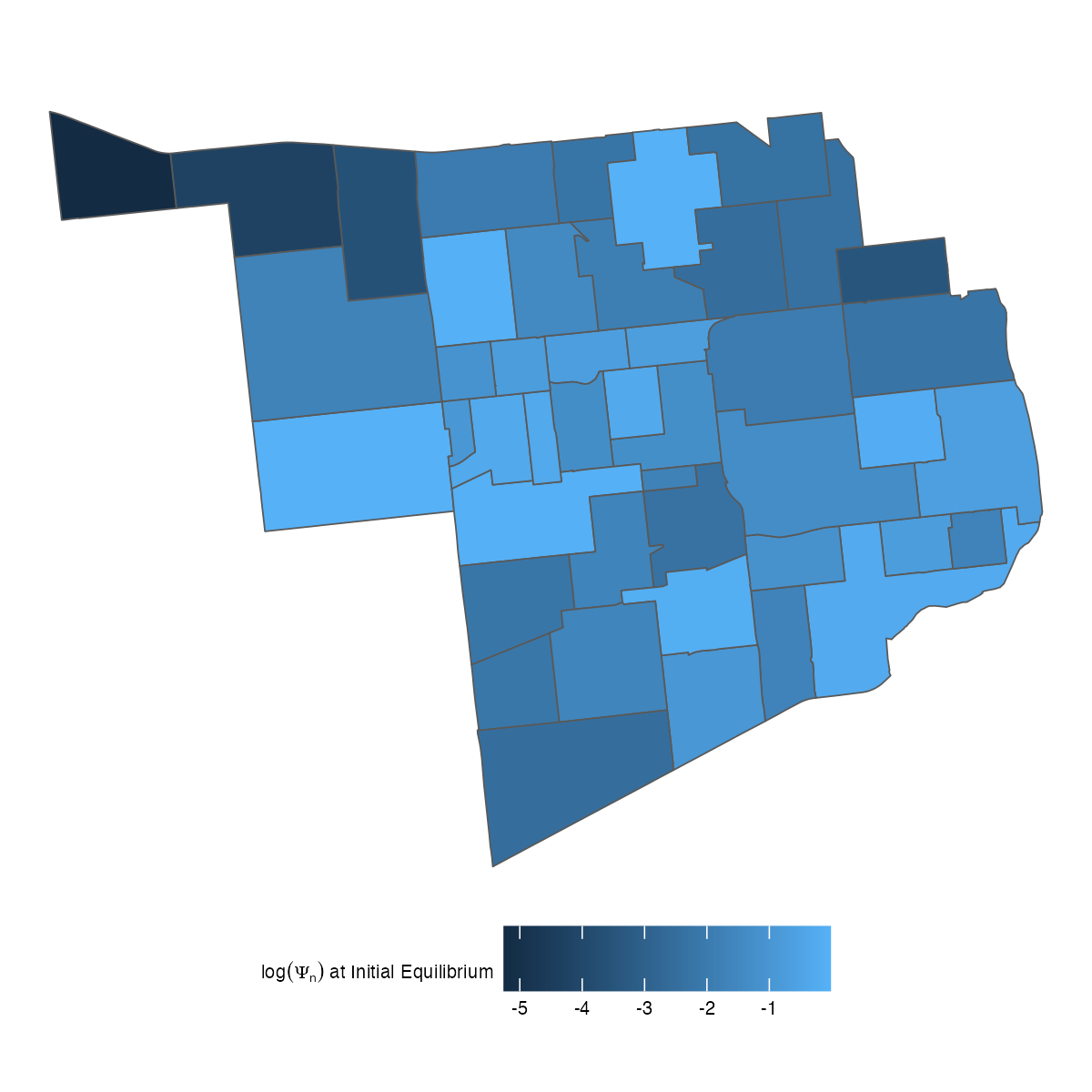}
    \end{subfigure}
    \qquad
    \begin{subfigure}{0.45\textwidth}
    \caption{Impact of Lower Elasticity $\eta$ on $\Psi_n$}\label{fig:impact_eta_psi_n}
    \includegraphics[scale=0.7]{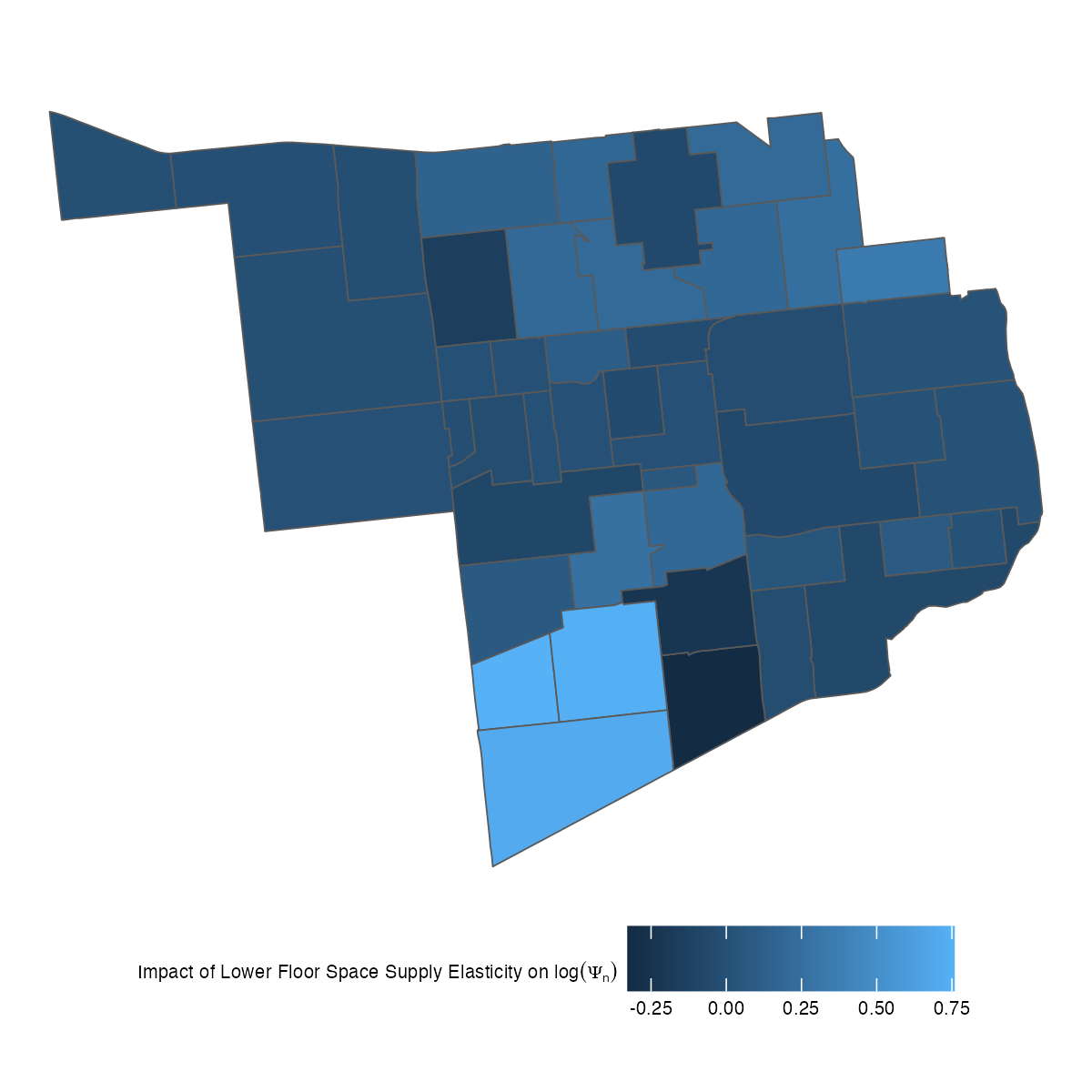}
    \end{subfigure}

\end{figure}

\section{Discussion and Extensions}\label{sec:discussion}

\subsection{Revisiting Brock and Durlauf (2001): Many Choices}\label{subsec:mcfadden}

Often utility functions are not Cobb-Douglas. This will be the case
in the non-homothetic examples recently developed in seminal literature. In other cases, the workhorse McFadden approach is used.

Consider a value function of functional form similar to \citeasnoun{brock2001discrete} and \citeasnoun{mcfadden1973conditional}: 
\begin{equation}
V_{j}^{g}=A_{j}-\alpha q_{j}+\gamma^{g}\Psi_{j}+\varepsilon_{ij}\label{eq:mcfadden_utility}
\end{equation}
with, as before, $\Psi_{j}=\sum_{k=1}^{J}e^{-\xi d_{jk}}x_{k}$ and
$x_{j}=N^{1}P(V_{j}^{1}>\max\{V_{k}^{1},k\neq j\})/s_{j}$ the density
of group $1$ at location $j$. $\varepsilon_{ij}$ is iid extreme-value
distributed. This is a generalized version of Brock and Durlauf, which
considered a binary choice (e.g. smoking or not smoking) and a local
scope of interactions. 

The social equilibrium equations can be written as: 
\begin{equation}
\Psi_{j}=\sum_{k=1}^{J}e^{-d_{jk}}x_{k}=\sum_{k=1}^{J}e^{-\xi d_{jk}}\frac{N^{1}}{s_{k}}\frac{\exp(A_{k}-\alpha q_{k}+\gamma^{1}\Psi_{k})}{\sum_{\ell}\exp(A_{\ell}-\alpha q_{\ell}+\gamma^{1}\Psi_{\ell})}\label{eq:social_conditions}
\end{equation}
 The exponential $\exp(y)$ has a well-known Taylor expansion, also
known as MacLaurin series:
\begin{equation}
\exp(y)=\sum_{p=0}^{\infty}\frac{y^{p}}{p!}\label{eq:taylor_expansion_exponential}
\end{equation}
Yet we will apply this Taylor expansion to the city $\mathcal{C}_{h}^{\infty}$
with perfectly elastic housing supply. The social equilibrium conditions become:
\begin{equation}
\sum_{\ell}\exp(A_\ell + \gamma^{1}\Psi_{\ell})\Psi_{j}=\sum_{k=1}^{J}e^{-\xi d_{jk}} \frac{N^1}{s_k} \exp(A_k + \gamma^{1}\Psi_{k})\label{eq:social_conditions_homogeneous_city}
\end{equation}
or, when using MacLaurin series:
\begin{equation}
\sum_{\ell=1}^J\sum_{p=0}^{\infty}\frac{(A_\ell + \gamma^{1}\Psi_{\ell})^{p}}{p!}\Psi_{j}=\sum_{k=1}^{J}e^{-\xi d_{jk}}\frac{N^1}{s_k} \sum_{p=0}^{\infty}\frac{(A_k + \gamma^{1}\Psi_{k})^{p}}{p!}\label{eq:equilibrium_mac_laurin_series}
\end{equation}
This shows that the degree of the polynomial system is the degree of the Taylor expansion plus
1. The comprehensive set of equilibria of this system can be obtained by total degree
homotopy.

The order $p$ of the polynomial expansion is finite and thus a key question is whether the homotopy at a given $p$ converges to the homotopy for an infinite number of polynomial term expansions. 

\begin{prop}\textbf{(Uniform Convergence of the Homotopy Function Using the MacLaurin--Taylor Expansion)}
Consider the total degree homotopy $H^n$, built using the $n$-th order MacLaurin series~(\ref{eq:equilibrium_mac_laurin_series}). 
\begin{equation}
H_j^n(\boldsymbol{\Psi},t) = t \Psi_j^{n+1} + (1-t) \left\{ \sum_{\ell=1}^J\sum_{p=0}^{n}\frac{(A_\ell + \gamma^{1}\Psi_{\ell})^{p}}{p!}\Psi_{j} - \sum_{k=1}^{J}e^{-\xi d_{jk}}\frac{N^1}{s_k} \sum_{p=0}^{n}\frac{(A_k + \gamma^{1}\Psi_{k})^{p}}{p!} \right\} \label{eq:equilibrium_mac_laurin_series_2} 
\end{equation}
This is a sequence $H^n(\boldsymbol{\Psi},t)$ of functions from $(0,\infty)^{2J}\times[0,1]$ to $\mathbb{R}$ indexed by $n\in\mathbb{N}$. The vector $\boldsymbol{\Psi}$ belongs to a compact set as it is positive and bounded upwards by $\Delta N^1$. This sequence $H^n$ thus converges uniformly to the homotopy $H(\boldsymbol{\Psi},t)$ using McFadden's logit function. For each $\varepsilon>0$, there exists a $\underbar{n}$ such that $n>\underline{n}$ implies that: $H^n(\boldsymbol{\Psi},t)$ is within $\varepsilon$ of $H(\boldsymbol{\Psi},t)$ for all $\boldsymbol{\Psi}$ in its compact set.
\end{prop}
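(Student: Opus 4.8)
The plan is to reduce the proposition to a single classical fact --- that the MacLaurin polynomials of $\exp$ converge uniformly to $\exp$ on any bounded set --- and then to propagate that one estimate through the finitely many algebraic operations out of which $H^n$ is assembled. First I would pin down the domain. By hypothesis each coordinate satisfies $0\le\Psi_j\le(\Delta N^1)_j$, so $\boldsymbol{\Psi}$ ranges over a compact box $K$ and $t\in[0,1]$, making $K\times[0,1]$ compact. Since $A_\ell$ and $\gamma^1$ are fixed constants, the arguments $y_\ell=A_\ell+\gamma^1\Psi_\ell$ that feed the series lie in a fixed bounded interval, so there is a finite $R$ with $|y_\ell|\le R$ for every $\ell$ and every $\boldsymbol{\Psi}\in K$. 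This is the only place compactness is used, and it is what converts pointwise control into uniform control.

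The core estimate is the uniform remainder bound of the exponential series on $|y|\le R$. For such $y$,
\[
\left|\exp(y)-\sum_{p=0}^{n}\frac{y^{p}}{p!}\right|\le\sum_{p=n+1}^{\infty}\frac{R^{p}}{p!}=\exp(R)-\sum_{p=0}^{n}\frac{R^{p}}{p!}\;=:\;\rho_n,
\]
and $\rho_n\to0$ as $n\to\infty$ because the exponential series converges. The decisive feature is that $\rho_n$ depends only on $R$, not on $y$ or on $\boldsymbol{\Psi}$, so it is exactly the uniform bound the statement requires. I would then propagate it: each summand of the target (the bracketed $(1-t)$ term) is one truncated series $\sum_{p=0}^{n}(A_k+\gamma^1\Psi_k)^p/p!$ multiplied by a factor bounded on $K$, namely $\Psi_j$ or the constant $e^{-\xi d_{jk}}N^1/s_k$. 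Replacing each truncated series by its exact exponential therefore perturbs the bracket by at most $\rho_n$ times the sum of the absolute values of these factors, a finite constant $M_K$ independent of $n$; summing the $2J$ terms and using $0\le1-t\le1$ gives $|H^n_j-H_j|\le(1-t)\,M_K\,\rho_n\le M_K\rho_n$ on the target, uniformly in $(\boldsymbol{\Psi},t)$. Choosing $\underline n$ with $M_K\rho_{\underline n}<\varepsilon$ then delivers the stated $\varepsilon$--$\underline n$ conclusion.

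The one genuinely delicate point, and the step I expect to be the main obstacle, is the start term $t\,\Psi_j^{n+1}$, whose degree grows with $n$ and which therefore has no fixed limit when $\Psi_j>1$ (indeed it is largest precisely at $t=1$, where the target vanishes and the whole difference $H^n-H$ is carried by the start terms). I would dispose of it in one of two ways. Either I make precise that the comparison homotopy $H$ carries the \emph{same} degree-$(n+1)$ start term, so that it cancels identically in $H^n-H$ and only the truncation error in the target survives; or I normalize the density units so that $\boldsymbol{\Psi}$ is valued in $[0,1]^J$, in which case $t\,\Psi_j^{n+1}\to0$ uniformly and the start term merely contributes a second vanishing sequence to the bound. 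Under either reading the uniform estimate $\sup_{K\times[0,1]}|H^n-H|\le M_K\rho_n+o(1)\to0$ holds, which is the assertion; the bulk of any fully written proof is then bookkeeping of the bounded factors $\Psi_j$ and $e^{-\xi d_{jk}}N^1/s_k$, with the genuine mathematical content residing entirely in the uniform remainder bound for $\rho_n$.
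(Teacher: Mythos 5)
Your proof is correct and follows essentially the same route as the paper, whose entire written argument is the single observation that the MacLaurin series of the exponential converges uniformly on the compact set by the Weierstrass M-test --- exactly your tail bound $\rho_n=\sum_{p>n}R^{p}/p!$ with $R$ a uniform bound on $A_\ell+\gamma^{1}\Psi_\ell$ --- with the propagation through the bounded factors $\Psi_j$ and $e^{-\xi d_{jk}}N^{1}/s_k$ left implicit. Your discussion of the start term $t\,\Psi_j^{n+1}$ is a genuine refinement rather than a deviation: the paper is silent on it, and your point that it either cancels identically in $H^n-H$ (if the limit homotopy carries the same start term) or vanishes uniformly only when $\Psi_j\le 1$ (which the stated bound $\Delta N^{1}$ does not guarantee) correctly identifies the one spot where the proposition as written needs the clarification you supply.
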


The key lemma is that the Taylor series of the exponential is uniformly convergent to the exponential on the compact set, by Weierstrass' M-test. A good reference is \citeasnoun{gamelincomplex}.\footnote{Chapter XIII.}



The final step, i.e. the homotopy from perfectly elastic $\eta=\infty$
to elastic $\eta\geq0$ obtains the equilibria of the city $\mathcal{C}$
in the \citeasnoun{mcfadden1973conditional} and \citeasnoun{brock2001discrete} setup. 

Table~\ref{tab:mcfadden_results} presents the equilibria obtained when applying the method using the accompanying code. We use a city with 4 locations, homogeneous amenities, and an 8th degree expansion of the exponential. The upper panel is for the perfectly elastic city, and reports the demographics of each of the four neighborhoods (columns) for each of the 15 equilibria (rows). In each case, the population condition is satisfied, i.e. the total population of type 1 matches the aggregate population of type~1. 

As amenities are homogeous, they do not pin down the equilibrium and this is visible on the table: equilibria are ranked in lexicographic order of $x_1,x_2,x_3,x_4$. We see that equilibrium 1 and equilibrium 15 predict a large degree of stratification across locations, but in equilibrium 1 it is location 4 that hosts the largest number of type 1 households -- while in equilibrium 15 it is location 1 that hosts the largest number of type 1 households. Equilibria 7 and 8 predict a more equal distribution of type 1 households across locations. This suggests that, with the parameters thus supplied, the city with only 4 locations has a large number of equilibria.

\subsection{Dynamics}

Static social interaction models are typically steady-state approximations of a dynamic model \cite{blume2003equilibrium}. This paper's approach in terms of polynomial systems and homotopies can be extended to the
case of a dynamic city.  We describe here a potential approach in the case where there is
one demographic group and households have a preference for densely populated
locations. 
\begin{equation}
V_{jt}=A_{j}q_{jt}^{-\alpha}\Psi_{jt}^{\gamma}\varepsilon_{ij}\Pi_{jt+1}^{\delta}\label{eq:valuation_dynamic}
\end{equation}
We set the Fr\'echet dispersion parameter $\theta=1$ without loss of
generality to lighten the notations. $\Psi_{j}$ is a weighted average
of the density levels in neighboring locations:
\begin{equation}
\Psi_{jt}=\sum_{k=1}^{J}e^{-\xi d_{jk}}\frac{L_{kt}}{s_{k}}\label{eq:social_parameter}
\end{equation}
where $d_{jk}$ is the distance of location $j$ to location $k$,
$A_{k}$ is the surface of location $k$ and $L_{k}$ is the population
choosing location $j$. Households face a mobility cost $\mu_{jk}$
when moving from location $j$ to location $k$. Thus the welfare
in $j$ depends as follows on the utilities in each location:
\begin{equation}
\Pi_{jt}=\sum_{k=1}^{J}A_{k}q_{kt}^{-\alpha}\Psi_{kt}^{\gamma}\Pi_{kt+1}^{\delta}/\mu_{jk}\label{eq:bellman_equations}
\end{equation}
At equilibrium, the population level $L_{kt}$ satisfies:
\begin{equation}
L_{jt+1}=\sum_{k=1}^{J}L_{kt}\frac{A_{j}q_{jt}^{-\alpha}\Psi_{jt}^{\gamma}\Pi_{jt+1}^{\delta}/\mu_{kj}}{\sum_{l=1}^{J}A_{l}q_{lt}^{-\alpha}\Psi_{lt}^{\gamma}\Pi_{lt+1}^{\delta}/\mu_{kl}}\label{eq:population_flows}
\end{equation}
A typical approach to these types of dynamic models is to first start
by finding the steady-states. Here we provide an approach to obtain
the multiple steady-state equilibria consistent with the exogenous
parameters $(\mathbf{A},\alpha,\gamma,\delta,\xi,\boldsymbol{\mu})$. 

The social equilibrium and the Bellman equations can be expressed
as a polynomial system. First, find a rational number $p^{\delta}/q^{\delta}$,
where $p^{\delta}\in\mathbb{N}$ and $q^{\delta}\in\mathbb{N}^{*}$
that approximates $\delta$, and $p^{\delta},q^{\delta}$ have no
common divisor. This is always possible as $\mathbb{Q}$ is dense
in $\mathbb{R}$. Then, perform the change of variables $\Xi_{j}=\Pi_{j}^{1/q^{\delta}}$.
Perform the same approach for $\gamma$ and $\Psi_{jt}$: find a rational
approximation $p/q$ of $\gamma$ and perform the change of variable
$z_{jt}=\Psi_{jt}^{1/q}$. 

Express the Bellman equation (\ref{eq:bellman_equations}) and the population flow (\ref{eq:population_flows}) as polynomial
systems of the coefficients in amenities and population in the city
with perfectly elastic supply. The polynomial system in $(\mathbf{z},\Xi)\in\mathbb{R}^{J}\times\mathbb{R}^{J}$
has $2J$ equations. Finally, perform a differentiable homotopy from
the city with elastic housing supply $\eta=\infty$ to the city with
finite elasticity $\eta\in(0,\infty)$. When the Jacobian is invertible along the path, 
this provides a path $(\mathbf{z}(t),\Xi(t),\mathbf{q}(t))\in\mathbb{R}^{J}\times\mathbb{R}^{J}\times\mathbb{R}^{J}$
whose endpoint $(\mathbf{z}(1),\Xi(1),\mathbf{q}(1))$ is an equilibrium
of the dynamic city with elastic housing supply as long as the Jacobian
is invertible along the path.

\subsection{Bifurcations at Singular Points of the Jacobian}\label{sec:singularity_jacobian}\label{sec:bifurcations}

The approach described in this Section relies on first-order differential equations for the equilibrium. These differential equations have a unique solution at each point as long as the Jacobian is of full rank. In equation~\ref{eq:differential_equation}, the matrix $\Gamma$ may not be invertible and thus such first-order differential equation may not have a unique solution. In a limited number of cases in our simulations of the homotopy when transitioning from the perfectly elastic to the finite elasticity city, we observe that the smallest eigenvalue of the Jacobian approaches zero. At this point the differential equation may have multiple solutions, and some additional ``discipline'' may be required to compute the multiple paths that emerges at this singular point. The approach described here harkens back to \citeasnoun{keller1977numerical}. 

The approach relies on two facts: the first order condition defines the subspace of paths that are consistent with such first order condition; when the Jacobian is not invertible, this subspace has dimension at least 1. The second fact is that the set of potential solutions can be narrowed by considering the differential equation implied by a second-order Taylor expansion. Finding the set of solutions to this second order differential equation that lie in the kernel of the first-order equation provides a typically finite number of bifurcation paths at the singularity point. We implement the approach in open source Julia code. The second-order differential equation is a second order polynomial in the differentials $d\mathbf{z}$, $d\mathbf{q}$ of social demographics and prices for each location. These conditions can thus be solved using homotopy approaches for polynomial equations. 

\subsubsection*{Finding the Singular Point}
Numerically the path may approach a point where the smallest eigenvalue of the Jacobian is less than a threshold $\varepsilon$. At this point, a Newton-Raphson approach can be used to solve:
\begin{equation}
    \textrm{det} \left( \frac{\partial \mathbf{H}}{\partial \mathbf{z}}(\mathbf{z}, t) \right) = 0
\end{equation}
This can be solved using a Newton-Raphson approach.
\begin{equation}
    t_{n+1} = t_n - \left\{ \frac{d}{dt} \textrm{det} \left( \frac{\partial \mathbf{H}}{\partial \mathbf{z}}(\mathbf{z}, t) \right) \right\}^{-1} \textrm{det} \left( \frac{\partial \mathbf{H}}{\partial \mathbf{z}}(\mathbf{z}, t) \right)
\end{equation}
To perform this, notice that the differential of the determinant can be expressed as:
\begin{equation}
    \frac{d}{dt} \textrm{det} \left( \frac{\partial \mathbf{H}}{\partial \mathbf{z}}(\mathbf{z}, t) \right) = \textrm{det} \left[ \frac{\partial \mathbf{H}}{\partial \mathbf{z}}(\mathbf{z}, t) \right] \textrm{Tr}\left[ \left(\frac{\partial \mathbf{H}}{\partial \mathbf{z}}(\mathbf{z}, t)\right)^{-1} \underbrace{\frac{d}{dt} \frac{\partial \mathbf{H}}{\partial \mathbf{z}}}_{\textrm{See below}} \right]
\end{equation}
with\begin{equation}
    \frac{d}{dt} \frac{\partial \mathbf{H}}{\partial \mathbf{z}} = \frac{\partial^2 \mathbf{H}}{\partial \mathbf{z}^2} \frac{d\mathbf{z}}{dt} + \frac{\partial^2 \mathbf{H}}{\partial \mathbf{z} \partial t}
\end{equation}
The next step is to solve for the bifurcation points.

\begin{figure}
    \caption{Bifurcation method}
    \label{fig:enter-label}
\emph{The singular point is the solution to $det(\partial H/\partial \mathbf{z}) = 0$. The subspace for which $\partial H/\partial \mathbf{z}$ is singular is the space $\text{Ker} \partial H/\partial \mathbf{z}$ depicted below. The three solutions to the system of equations implied by the second-order Taylor expansion are depicted as $d\mathbf{z}_1$, $d\mathbf{z}_2$, $d\mathbf{z}_3$. Here only $d\mathbf{z}_1$, $d\mathbf{z}_2$ belong to the space implied by the first-order condition, and thus are solutions to the differential equation at the singular point.}

\centering

\includegraphics[scale=0.5]{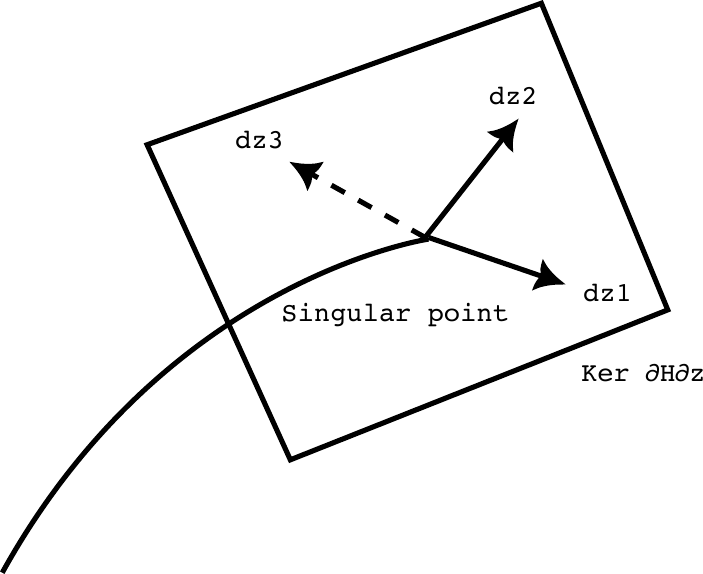}

\end{figure}

\subsubsection*{Finding the Bifurcations at the Singular Point}

At the singular point, the Jacobian is non invertible and its kernel is therefore of dimension $\geq 1$. There is at least a potential line (dim 1) of $d\mathbf{z}$ consistent with the first-order differential equation. We thus need to narrow down to a smaller set of potential $d\mathbf{z}$. First denote by $\textrm{Ker}(\frac{\partial H}{\partial \mathbf z})$ the kernel of the Jacobian. The bifurcations $d\mathbf{z}$ lie in this kernel.

We need a second set of constraints. At the singular point $\tilde{\mathbf{z}}$ we perform a second-order Taylor expansion of the homotopy $H$.
\begin{equation}
    \textbf{H}(\tilde{\mathbf{z}}+d\mathbf{z}, t+dt) - \textbf{H}(\tilde{\mathbf{z}}, t) = 0
\end{equation}
becomes, for element $j$ of $H$:
\begin{equation}
    \frac{\partial H_j}{\partial \mathbf{z}} d\mathbf{z} + \frac{\partial H_j}{\partial t} dt + \frac{1}{2} d\mathbf{z}' \frac{\partial^2 H_j}{\partial \mathbf{z}^2} d\mathbf{z} + d\mathbf{z}' \frac{\partial^2 H_j}{\partial \mathbf{z} \partial t} dt + \frac{\partial^2 H_j}{\partial t^2} (dt)^2 = 0
\end{equation}
which is a system of $j$ equations in the unknown vector $d\mathbf{z} \in \mathbb{R}^J$ knowing the scalar $dt \in \mathbb{R}$.

This can be written as a polynomial system of order 2. For each $j=1,2,\dots,J$:
\begin{equation}
    \sum_k \frac{\partial H_j}{\partial z_k} dz_k + \frac{\partial H_j}{\partial t} dt + \frac{1}{2} \sum_{k,l}  \frac{\partial^2 H_j}{\partial z_k \partial z_l} dz_k dz_l + \sum_k  \frac{\partial^2 H_j}{\partial z_k \partial t} dt dz_k + \frac{\partial^2 H_j}{\partial t^2} (dt)^2 = 0
\end{equation}
which can be reorganized to highlight the monomials in increasing power:
{\footnotesize
\begin{equation}
    \left[ \frac{\partial H_j}{\partial t} dt + \frac{\partial^2 H_j}{\partial t^2} (dt)^2 \right] + \sum_k \left[\frac{\partial^2 H_j}{\partial z_k \partial t} dt + \frac{\partial H_j}{\partial z_k} \right] dz_k + \frac{1}{2} \sum_{k,l} \left[ \frac{\partial^2 H_j}{\partial z_k \partial z_l} \right] dz_k dz_l   = 0
\end{equation}}
Bezout's theorem tells us that there can be up to $2^K$ solutions. We operate here in $\mathbb{R}^K$ and therefore consider the real solutions of the system that lie in the kernel of the Jacobian. These are the bifurcations at the singular point $\tilde{\mathbf{z}}$. A homotopy can thus be used to solve for those bifurcation points.

\section{Conclusion}

This paper provides an approach to equilibrium multiplicity in discrete choice models, in the context of quantitative spatial models. Results suggest the existence of counterfactual cities consistent with the set of exogenous amenities. 

The approach is based on the tools of algebraic geometry and topology by acknowledging that economies with social interactions, heterogeneity, and price responses are ``similar'' (homotopic) to simpler economies for which we can enumerate all equilibria. The methods have lower dimensionality than typical total degree homotopy used in other scientific fields such as the methods of \citeasnoun{sommese2005numerical}. 

The methods of this paper should be useful to the broader research community. The implication of linear algebra have led to tremendous progress, including the simulation of impulse response functions in dynamic economies in macroeconomics \cite{dejong2012structural} and in urban economics and trade \cite{kleinman2023linear}. The tools of algebraic geometry may also lead to substantial progress; they can be seen as an extension of linear algebra to polynomials, and in particular one that finds the zeros of polynomials. This should be useful in a broad range of applications beyond the ones described in this paper. More advanced results of algebraic geometry can provide more insights. For instance, \possessivecite{Hilbert1893} \emph{Nullstellensatz} could potentially characterize the amenities of a city solely by its equilibria. The result, applied to discrete choice economies, establishes a one-to-one relationship between the zeros of the polynomial system defining the equilibria and its vector of amenities, social preference parameters, and the scope of social interactions. More advanced results in algebraic geometry should be useful for the identification and estimation of economies with multiple equilibria.

\bibliographystyle{agsm}
\bibliography{multiple_equilibria}
\clearpage\pagebreak{}





\begin{table}
\caption{Finding City Equilibria by Homotopy -- Social Interactions and Perfectly
Elastic Supply, City $\mathcal{C}^\infty$ \label{tab:Finding-Equilibria-by-Method-1}}

\bigskip

\emph{This table provides the outcome of the calculation of the equilibria
with the exact method presented in Section~\ref{subsec:Perfectly-Elastic-Housing-Supply-Exact-Method}.
This method starts with a polynomial whose roots are on the complex
unit circle. For each such root, the solution is updated along the
path $t\in[0,1]$ towards the equilibria of the city. Bertini's theorem
guarantees that this approach provides all equilibria. $J$ is the
number of locations, $p/q=\gamma^{g}$ is the preferences of college
educated workers for college-educated neighbors, $\xi$ is the scope
of social interactions, as $\Delta_{jk} = e^{-\xi d_{jk}}$ is the weight of $k$
in the parameter $\Psi_{j}=\sum e^{-\xi d_{jk}}x_{k}$, and $\sigma(A)$
is the standard deviation of the log normal amenities $\log A_{j}\sim N(0,\sigma(A))$.
We keep $\alpha=0.3$, $\eta=\infty$, $L_g=0.8J$, $L - L_g=0.2J$, $mc=1$,
and $d_{jk}=|j-k|$.\bigskip{}
}

\begin{center}
\begin{tabular}{ccccc}
\toprule
   $J$  &  $\gamma^1$ &  $\xi$ & $\sigma(\log(A))$ & $N^{eq}$ \\
   Locations & Social Pref. & Scope of Interactions & S.D. of Amenities & \# of Proper Equilibria \\
\midrule
  3  &  0.2& 3.0&    0.5&    1\\
  3  &  0.5& 0.1&    0.5&    1\\
  3  &  2.0& 4.0&    0.1&    7\\
  3  &  2.0& 4.0&    0.5&    7\\
  3  &  2.0& 1.0&    0.5&    1\\
  3  &  4.0& 4.0&    0.5&    7\\
  3  &  5.0& 4.0&    1.0&    7\\
  3  &  5.0& 4.0&    0.5&    7\\
  5  & 1.25& 0.1&    1.0&    1\\
  5  &  2.0& 3.0&    0.1&   31\\
  5  &  2.0& 4.0&    1.5&   15\\
  7  &  1.0& 1.0&    1.0&    1\\
  7  &  2.5& 1.0&    0.0&   13\\
  7  &  2.5& 1.0&    0.0&   13\\
  7  &  4.0& 2.0&    0.5&  127 \\
\bottomrule
\end{tabular}
\end{center}

\emph{\bigskip{}
}
\end{table}

\clearpage\pagebreak{}





\appendix








\begin{figure}

\caption{A Geometric Representation of the Equilibria in the Two-Location Case: Singular Points, Equilibrium Multiplicity\label{fig:two-location-case-geometry}}

\bigskip
\emph{These three figures present the zeros of the polynomial system for the equilibrium of the city of Section~. The \textcolor{orange}{orange curve} is for the equilibrium equation for location 1. The \textcolor{blue}{blue curve} is the equilibrium equation for location 2. Both are polynomial equations in $z_1,z_2$. These figures illustrate two points: (a)~as the strength of social preferences $\gamma^1$ increases (from subfigure (a) to subfigure (b)), new equilibria appear. (b)~as amenities become heterogeneous (here $A_2$ the amenity of location 2 increases from subfigure (b) to subfigure(c)), equilibria disappear. At the value of $A_2$ for which the two curves have a tangency point, the Jacobian $\partial H / \partial \mathbf{z}$ of the homotopy $H$ w.r.t $z_1,z_2$ becomes rank deficient, i.e. has a zero determinant and a zero eigenvalue.}
\bigskip 

\centering
\begin{subfigure}[b]{0.48\textwidth}
\centering
\caption{\tiny Single Equilibrium, Homogeneous Locations} \bigskip 
\includegraphics[scale=0.5]{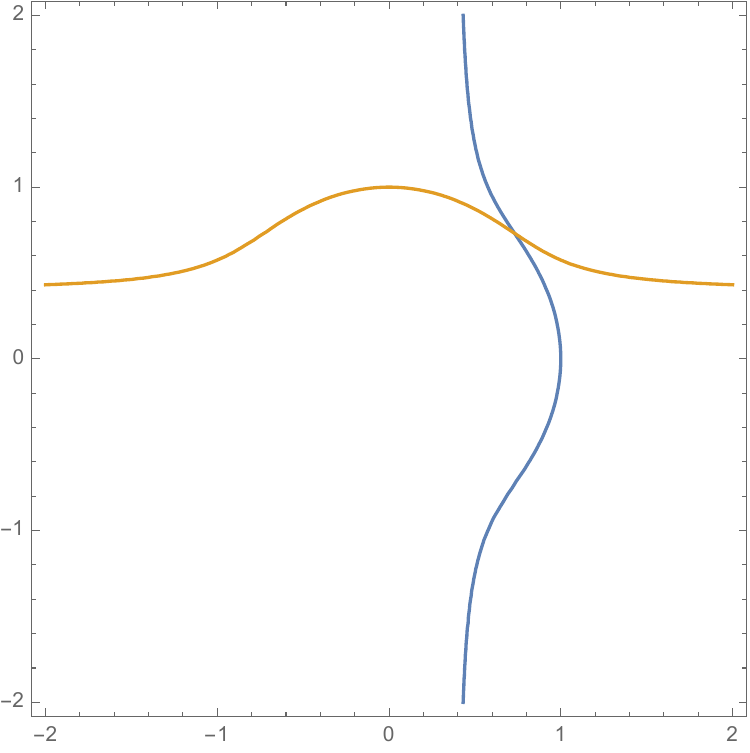}
\end{subfigure}

\bigskip
\bigskip

\begin{subfigure}[b]{0.48\textwidth}
\centering 
\caption{\tiny Homogeneous Locations: Three Equilibria} \bigskip 
\includegraphics[scale=0.5]{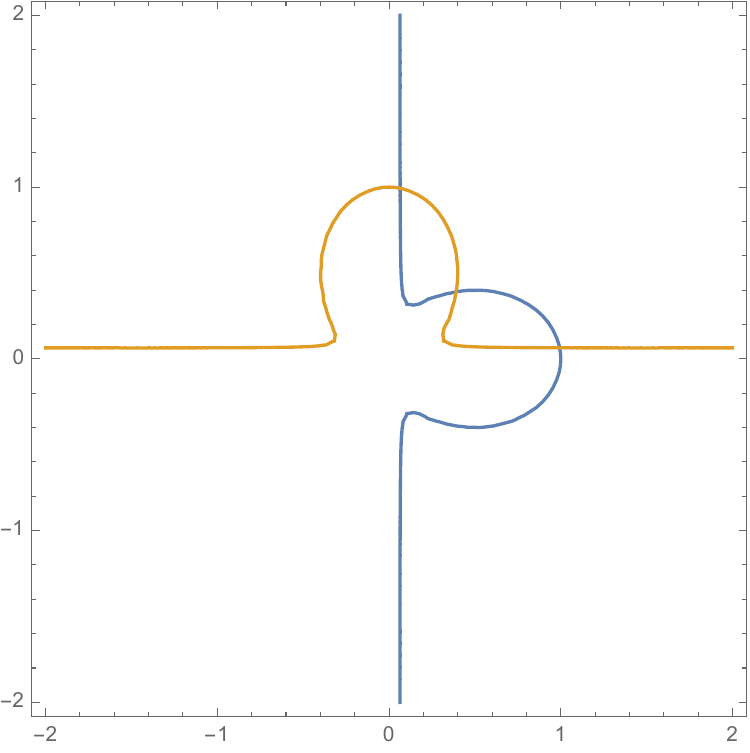}
\end{subfigure}
\hfill
\begin{subfigure}[b]{0.48\textwidth}
\centering 
\caption{\tiny Heterogeneous Locations: A Singular Equilibrium where the Jacobian is not invertible} \bigskip 
\includegraphics[scale=0.5]{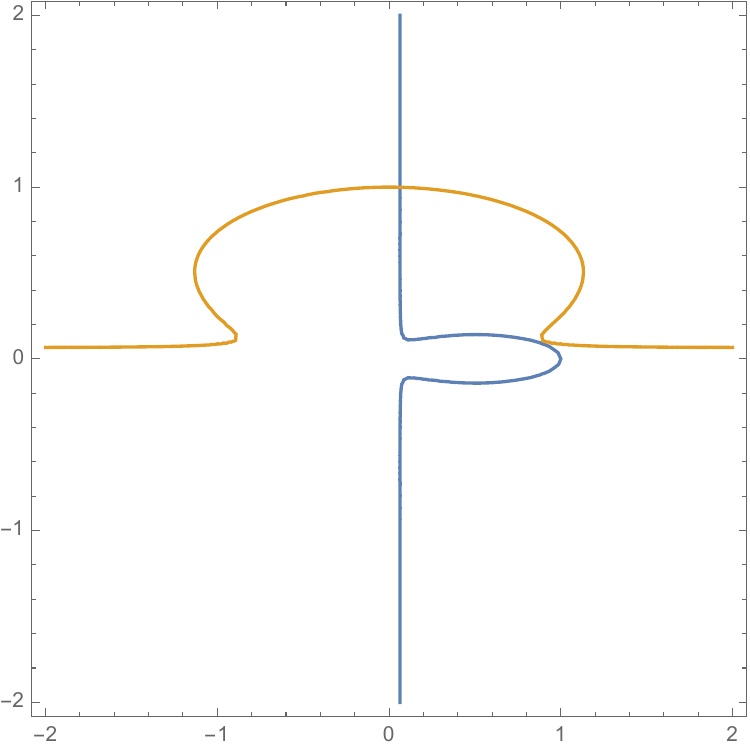}
\end{subfigure}

\end{figure}

\clearpage\pagebreak{}

\begin{figure}

\caption{A Geometric Representation of the Equilibria in the Three-Location Case\label{fig:three-location-case-geometry}}

\bigskip

\emph{In the three location case, the intuitions developed for the two-location case (Figure~\ref{fig:two-location-case-geometry}) carries through. First, as we increase the social preference parameter $\gamma^1$, the number of equilibria increases from 1 to 5. The shape of the polynomial surface defined by each of the three locations' (the affine variety of each location) is similar to the two-location case. When such surfaces are tangent, the Jacobian $\partial H/\partial \mathbf{z}$ of the homotopy $H$ w.r.t. $z_1$, $z_2$, $z_3$ is rank deficient.}

\bigskip

\centering

\begin{subfigure}[b]{0.48\textwidth}
\caption{Lower Preference $\gamma^1$, Single Equilibrium}
\includegraphics[scale=0.5]{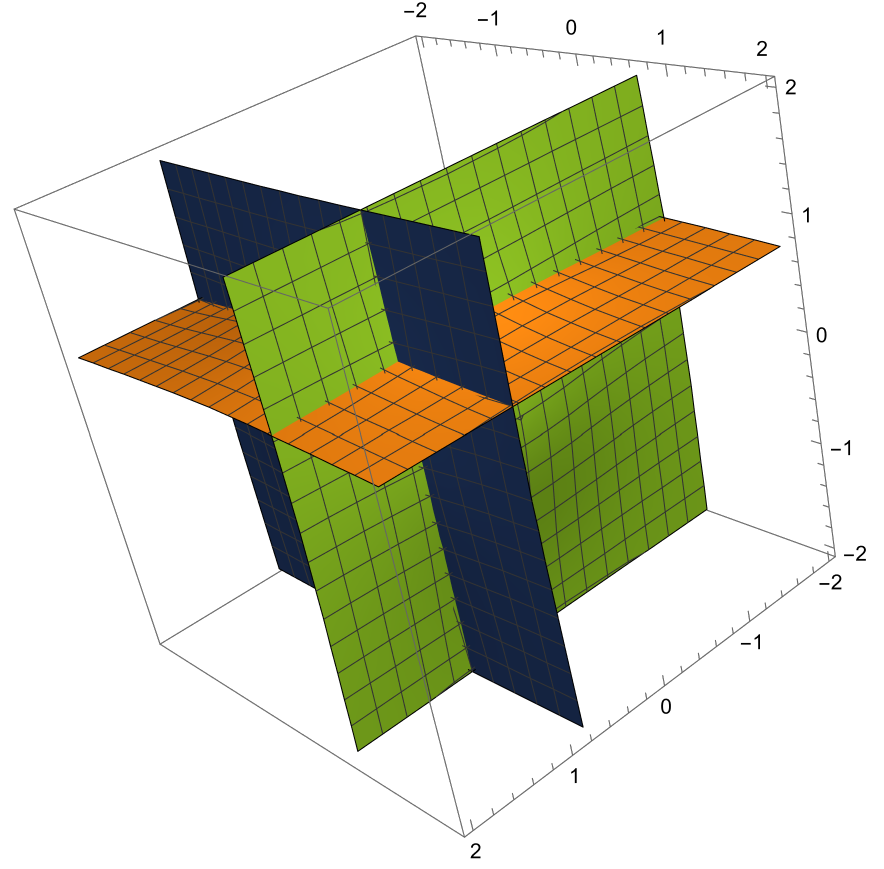}
\end{subfigure}

\bigskip
\bigskip

\begin{subfigure}[b]{0.48\textwidth}
\caption{Higher Preference $\gamma^1$, Multiple Equilibria}    
\includegraphics[scale=0.35]{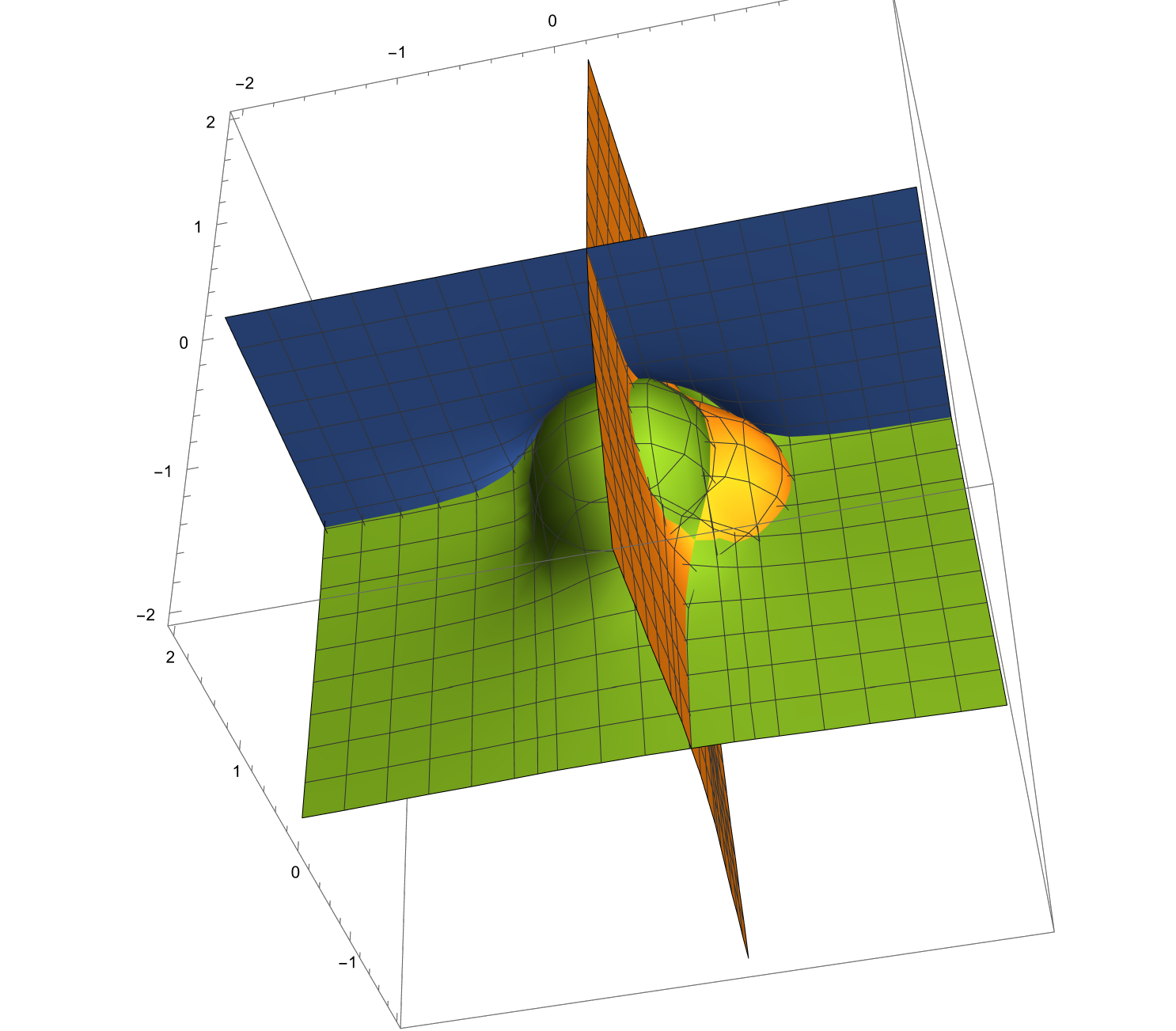}
\end{subfigure}

\end{figure}

\clearpage\pagebreak{}

\begin{table}

\caption{The McFadden - Brock and Durlauf Case of Section~\ref{subsec:mcfadden}, The Example of 15 Equilibria for One City}\label{tab:mcfadden_results}

\bigskip

\emph{This table provides the outcome of an example of calculation of equilibria in the case of McFadden and Brock and Durlauf's logit social equilibrium. The table is for the case of a city with $J=4$ locations, social preferences $\gamma^1=5$, amenities $A_j=1$, distance $\Delta_{ij} = \exp(-2 \left| i-j \right|)$ and perfectly elastic housing supply $\eta=\infty$. This is solved using total degree homotopy and an 8th order expansion of the exponential. The equilibria are solved in $\boldsymbol{\Psi}$, and then $\mathbf{x} = \Delta^{-1} \boldsymbol{\Psi}$. The equilibria are in lexicographic order of $\mathbf{x}$.}

\bigskip

\begin{subfigure}[b]{0.9\textwidth}

\begin{center}

\begin{tabular}{ccccc}
\toprule
& \multicolumn{4}{c}{Equilibrium Social Demographics}\\
\cmidrule(lr){2-5}
 Equilibrium \#  &                  $x_1$ &                  $x_2$ &                  $x_3$ &                  $x_4$ \\  
 \midrule
          1 & 0.008&  0.008& 0.009&   0.974\\
          2 &   0.008& 0.009&   0.973& 0.009\\
          3 & 0.009&   0.973& 0.009& 0.008\\
          4 &  0.068&   0.072&   0.429&   0.431\\
          5 &  0.072&  0.428&   0.428&  0.072\\
          6 &  0.078&   0.417&  0.084&  0.420\\
          7 &  0.126&   0.292&   0.282&   0.300\\
          8 &   0.262&  0.238&  0.238&  0.262\\
          9 &   0.286&   0.275&  0.146&  0.293\\
         10 &   0.293&  0.146&  0.275&   0.286\\
         11 &   0.300&   0.282&  0.293&  0.126\\
         12 &  0.420&  0.084&   0.418&  0.078\\
         13 &   0.420&   0.080&   0.080&   0.420\\
         14 &   0.431&   0.429&  0.072&  0.068\\
         15 &   0.974& 0.009& 0.008& 0.008\\
\bottomrule
\end{tabular}
\end{center}

\end{subfigure}

\end{table}

\clearpage\pagebreak{}

\end{document}